\newcommand{\Eq}[1]{Eq.~(\ref{#1})}
\newcommand{\Def}[1]{Definition~\ref{#1}}
\newcommand{\Lem}[1]{Lemma~\ref{#1}}
\newcommand{\cc}[1]{~\cite{#1}}
\newcommand{\cRef}[1]{Ref.~\cite{#1}}
\DeclareMathOperator{\gap}{gap}
\DeclareMathOperator{\supp}{Supp}
\DeclareMathOperator{\diag}{diag}
\DeclareMathOperator{\Prob}{Prob}
\DeclareMathOperator*{\Tr}{Tr}
\renewcommand{\Re}{\textrm{Re}}
\newcommand{\poly}{\mathrm{poly}} 
\newcommand{\EqDef}{\stackrel{\mathrm{def}}{=}}
\newcommand\tab[1][1cm]{\hspace*{#1}}
\newcommand{\Id}{\mathbbm{1}}
\newcommand{\BBR}{\mathbbm{R}}
\newcommand{\BBH}{\mathbbm{H}}
\newcommand{\mcH}{\mathcal{H}}
\newcommand{\mch}{\mathcal{H}}
\newcommand{\mcB}{\mathcal{B}}
\newcommand{\mcI}{\mathcal{I}}
\newcommand{\mcF}{\mathcal{F}}
\newcommand{\mcL}{\mathcal{L}}
\newcommand{\mcT}{\mathcal{T}}
\newcommand{\mcG}{\Delta}
\newcommand{\mcP}{\mathcal{P}}
\newcommand{\gs}{\Omega}
\newcommand{\bi}{b}
\newcommand{\glg}{\Gamma^{-1/2}\circ \mcL \circ \Gamma^{1/2}}
\newcommand{\gkl}{\gamma_{k,\bi}}
\newcommand{\LLkl}{\mcL_{k,\bi}}
\newcommand{\hkl}{\mcH_{k,\bi}}
\newcommand{\gin}{\gamma^{\mathrm{in}}}
\newcommand{\gout}{\gamma^{\mathrm{out}}}
\newcommand{\ux}{{\underline{x}}}
\newcommand{\norm}[1]{{\| #1 \|}}
\newcommand{\ket}[1]{{ |{#1} \rangle }}  
\newcommand{\bra}[1]{{\langle {#1} | }}
\newcommand{\braket}[1]{{ \langle {#1} \rangle }}  
\newcommand{\ketbra}[2]{{ |{#1} \rangle\langle {#2} | }}
\newcommand{\dket}[1]{|#1\rangle\!\rangle}
\newcommand{\dbra}[1]{\langle\!\langle #1|}
\newcommand{\dbraket}[1]{\langle\!\langle #1 \rangle\!\rangle}
\newcommand{\dketbra}[2]{|#1\rangle\!\rangle \langle\!\langle#2|}
\newcommand{\bigO}[1]{\ensuremath{\operatorname{O}\bigl(#1\bigr)}}
\newcommand{\bOmega}[1]{\ensuremath{\operatorname{\Omega}\bigl(#1\bigr)}}
\newcommand{\bTheta}[1]{\ensuremath{\operatorname{\Theta}\bigl(#1\bigr)}}
\newtheorem{theorem}{Theorem}[section]
\newtheorem{definition}[theorem]{Definition}  
\newtheorem{claim}[theorem]{Claim}  
\newtheorem{lemma}[theorem]{Lemma}
\newtheorem{corol}[theorem]{Corollary}
\newcommand{\qedsymb}{\hfill{\rule{2mm}{2mm}}}  
\newenvironment{proof}[1][]{\begin{trivlist}  
\item[\hspace{\labelsep}{\bf\noindent Proof#1:\/}]}{\qedsymb\end{trivlist}}
\title{\textbf{Area law for steady states of detailed-balance
local Lindbladians}}
\author[1]{Raz Firanko}
\author[2]{Moshe Goldstein}
\author[1]{Itai Arad}
\affil[1]{Physics Department, Technion, Haifa 3200003, Israel}
\affil[2]{Raymond and Beverly Sackler School of Physics and
  Astronomy, Tel-Aviv University, Tel Aviv 6997801, Israel}
\begin{document}
\maketitle

\begin{abstract}
  We study steady-states of quantum Markovian processes whose
  evolution is described by local Lindbladians.  We assume that the
  Lindbladian is gapped and satisfies quantum detailed balance with
  respect to a unique full-rank steady state $\sigma$. We show that 
  under mild assumptions on the Lindbladian terms, which can be
  checked efficiently, the Lindbladian can be mapped to a local
  Hamiltonian on a doubled Hilbert space that has the same spectrum,
  and a ground state that is the vectorization of $\sigma^{1/2}$.
  Consequently, we can use Hamiltonian complexity tools to study the
  steady states of such open systems. In particular, we show an
  area-law in the mutual information for the steady state of such 1D
  systems, together with a tensor-network representation that can be
  found efficiently. 
\end{abstract}

\tableofcontents

\section{Introduction} 
\label{sec:Introduction} 

Understanding the structure and physical properties of open
many-body quantum systems is a major problem in condensed matter
physics. Over the years, these systems have been studied using a
plethora of methods from statistical physics, many-body quantum
theory and functional analysis. More recently, with the advent of
quantum computation and quantum information, these systems have also
been studied using techniques coming from quantum information.
Coming from these research paradigms, one would typically want to
understand the computational complexity of these systems, the type
of entanglement and correlations that they can create, and whether
or not they can be represented efficiently on a classical computer.

This line of research has been well-established in the context of
\emph{closed systems}. In particular, there are many results
characterizing the complexity of ground states of local Hamiltonians
defined on a lattice. For example, it has been shown that for
several families of Hamiltonians it is QMA-hard to approximate the
ground state energy\cc{ref:Kitaev2002-QCbook, ref:Kempe2006-QMA},
implying that the ground state of such systems does not posses an efficient 
classical description unless QMA=NP. On the other hand, it is generally
believed that gapped local Hamiltonians on a $D$-dimensional lattice
satisfy an area-law of entanglement
entropy\cc{ref:Eisert2010-AL-review} and can be well approximated by
efficient tensor network states. This has been rigorously proven in
1D\cc{ref:Hastings2007-1DAL}, with some partial results in higher
dimension\cc{ref:Masanes2009-AL, ref:deBeaudrap2010-ALFFspins,
ref:Spyridon2012-adiabaticAL, ref:Jaeyoon2014-AL,
ref:Brandao2015-sp-heatAL, ref:Anshu2022-subvol,
ref:Anshu2021-2DAL}.

A natural question to ask is whether, and to what extent, tools and
techniques that are used to analyze the complexity of ground states
of closed systems can be used to study the complexity of steady
states of open systems. Specifically, in this paper we consider open
systems that are described by Markovian dynamics that is generated
by a local Lindbladian superoperator $\mcL=\sum_i \mcL_i$ (see
\autoref{sec:background} for a formal definition). The steady state
of the system is then given by a density operator $\sigma$ for which

\begin{align*}
  \mcL(\sigma)=\sum_i \mcL_i(\sigma)=0 . 
\end{align*}
This is a homogeneous linear equation, which has a striking
similarity to the corresponding problem in closed system 
\begin{align*}
  (H-\epsilon_0\Id)\ket{\gs}
    =\left(\sum_i h_i-\epsilon_0\Id\right)\ket{\gs}=0 ,
\end{align*}
where $H=\sum_i h_i$ is a local Hamiltonian with ground state
$\ket{\gs}$ and ground energy $\epsilon_0$. Moreover, as described
in \autoref{sec:background}, the spectrum of
$\mcL$ is in the $\Re\le 0$ part of the complex plane, and we can
define the spectral gap of the Linbladian to be the
largest non-zero real part of the spectrum. It is therefore tempting
to try and use local Hamiltonian techniques to characterize
$\sigma$. However, a quick inspection reveals the main obstacle for
such a simple plan to work: Whereas the Hamiltonian is a
self-adjoint (hermitian) operator, which therefore has a real
spectrum with a set of orthonormal eigenstates, the same is not true
for the Lindbladian; it is generally not a self-adjoint operator,
and consequently its spectrum might be complex with non-orthogonal
eigenoperators. To make $\mcL$ self-adjoint, one might consider
$\mcL^*\mcL$, but this come at the price of losing locality.
Alternatively, we might consider $\frac{1}{2}(\mcL + \mcL^*)$, but
it is not clear how the eigenstates of this operator are related to
the eignstates of $\mcL$.

A more sophisticated way of obtaining self-adjointness is by
assuming that the Linbladian satisfies quantum
detailed-balance\cc{ref:Agarwal1973-QDB, ALICKI1976249}. For such
Linbladians there is a way of defining an inner product with respect
to which $\mcL$ is self-adjoint\cc{ALICKI1976249,Fangola, chi, Maas}
(see \autoref{sec:QDB}). This inner product, however, can be highly
non-local (with respect to the underlying tensor-product structure)
and might deform the natural geometry of the Hilbert space. It is
therefore not clear how to use it to bound correlations and other
measures of locality in the steady state.

Nevertheless, in this paper we show that the quantum
detailed-balance condition provides a surprisingly simple map that
takes a local Lindbladian to a self-adjoint \emph{local}
super-operartor (i.e, a super Hamiltonian), which is then mapped to
a local Hamiltonian using vectorization. All this is done while
maintaining a direct relation between the steady state of the former
and the ground state of the latter. Consequently, many of the bounds
and properties that were proved for ground states, easily transform
to the steady states of detailed-balance Lindbladians. In
particular, we show that under mild conditions, which can be
efficiently checked, steady states of gapped 1D Lindbladian with
detailed balance satisfy an area-law for the mutual information.
Moreover, just as in the 1D area-law case for local Hamiltonians,
these steady states are well-approximated by an efficient tensor
network.  

We conclude this section by noting that our mapping is not new; it
has already been used before, e.g., in Refs.\cc{alicki2009, chi,DOC,
kastoryano2016quantum,Szegedy}. Our main contribution is showing
that this mapping results in a \emph{local} Hamiltonian, rather than
a general Hermitian operator, which enables us to take advantage of
the local Hamiltonian machinery (see \autoref{sec:mapping}).

\subsection{Comparison with previous works}
\label{sec:previous-results}

Several works studied the entanglement structure of steady states of
open systems, see for example Refs.~\cc{DOC, Brandao,
simulatability, Swingle2016entanglement}. Here we will concentrate
on Refs.~\cc{Brandao,DOC}, which study problems that are very close
to ours.

In \cRef{DOC} the authors considered steady states of local
Lindbladians on a $D$-dimensional lattice with unique, full-rank
steady state. Under the assumption of gapped, detailed-balanced
Linbladian, they have shown exponential decay of correlations in the
steady state. To prove an area law, the authors additionally assumed
a very strong form of fast convergence to the steady state, known as
system-size independent log-Sobolev constant (see \cRef{logsobolev}
for a definition). Under this assumption, and using the
Lieb-Robinson bounds for open systems\cc{ref:Poulin2010-openLR,
ref:Nachtergaele2011-openLR, ref:Barthel2012-openLR}, they showed
clustering of correlations in terms of the mutual information, and
as a consequence, an area law for the mutual information. Specifically, for every region
$A$ in the lattice, the mutual information is bounded by
\begin{align}
\label{eq:Sobolev-AL}
  I(A:A^c) \leq c\log\log(\norm{\sigma^{-1}})|\partial A|.
\end{align}
Notice, however, that for full rank steady states, $\norm{\sigma^{-1}}$ grows
at least exponentially with system size, and might even grow doubly
exponential with system size, in which case \eqref{eq:Sobolev-AL} is
no longer an area-law.

In \cRef{Brandao} the authors proved an area-law for the steady
state of what they refer to as a uniform family of Linbladians.
Essentially, this is a local Linbladian defined on an infinite
lattice, together with a set of boundary conditions that allow one
to restrict the dynamics to finite regions $\{\Lambda\}$ in the
infinite lattice. In this setup, the authors assumed another strong
form of convergence, which is called \emph{rapid mixing}. Roughly,
it assumes that the restriction of the system to any region
$\Lambda$ on the lattice has a unique fixed point $\sigma_\Lambda$,
to which it converges exponentially fast \emph{from any initial
state},
\begin{align}
  \norm{\rho(t)-\sigma_\Lambda}_1
    \le c|\Lambda|^\delta e^{-\gamma t},
\end{align}
where $c, \delta, \gamma$ are some region-independent constants. To
prove an area-law, they have also assumed that the Linbladians are
frustration-free, or that they posses pure steady-states. Under
these assumptions, using Lieb-Robinson bounds, they have managed to
show the following area-law with logarithmic corrections
\begin{align}
\label{eq:LS-AL-bound}
  I(A:A^c) \le  c|\partial A|\log(|A|) .
\end{align}

It is interesting to contrast these two results with several
area-law results for ground states of gapped local
Hamiltonians\cc{ref:Hastings2007-1DAL, ref:Arad2012-1DFFAL,
ref:Arad2013-1DAL, ref:Kuwajara2020-long-range-AL,
ref:Anshu2021-2DAL}. In both cases, the proof follows the intuitive
logic in which a fast convergence to the steady state (or ground
state) can yield a bound on its entanglement. Indeed, given a region
$A$ in the lattice, we can prepare the system in a product state of
this region with the rest of the system and then drive the system
towards its steady state. If the convergence is quick and the
underlying dynamic is local, not too much entanglement is created,
which bounds the entanglement in the steady state. 

There is, however, a highly non-trivial caveat in this program. It
is not a-priori clear that there exists a product state with a large
overlap with the fixed point. If the overlap is exponentially small,
it might take a long (polynomial) time for the dynamics to converge,
even if the convergence is exponentially fast. This is the hard step
in proving an area-law for ground states of gapped local
Hamiltonians.  In Hastings' 1D area-law
proof\cc{ref:Hastings2007-1DAL}, the initial state is $\rho_A\otimes
\rho_{A^c}$, where $\rho_A$, $\rho_{A^c}$ are the reduced density
matrices of the ground state on the regions $A, A^c$. Then a
non-trivial overlap with $\rho_A\otimes \rho_{A^c}$ is shown using
an ingenious argument about the saturation of mutual information.
In Refs.~\cc{ref:Arad2012-1DFFAL, ref:Arad2013-1DAL,
ref:Kuwajara2020-long-range-AL, ref:Anshu2021-2DAL} it is done by
constructing an approximate ground state projector (AGSP) using a
low-degree polynomial of the Hamiltonian. If the Schmidt rank of
this AGSP times its approximation error is smaller than unity, we
are promised that there exists some product state
$\ket{A}\otimes\ket{A^c}$ with a large overlap with the groundstate.

In the open system area-law proof of Refs.\cc{DOC, Brandao} there is
no parallel argument to lower-bound the overlap between the initial
product state and the steady state. In addition, the convergence to
the fixed point is always via the natural $e^{t\mcL}$ map --- which
might not be the most efficient one (in terms of the amount of
entanglement that is generated). In \cRef{DOC} a worse-case overlap
is assumed, which leads to the $\log(\norm{\sigma^{-1}})$ factor in
\eqref{eq:LS-AL-bound}.\footnote{$\norm{\sigma^{-1}}$ is the smallest
possible overlap of full-supported $\sigma$ with another state.} In
\cRef{Brandao}, it is \emph{assumed} that fast convergence is
independent of the initial state. This
assumption, together with additional local fixed-point uniqueness
assumptions that are also made in that work, imply that local
expectation values in the fixed point can be efficiently calculated
by a classical computer, which might make it less interesting from a
computational point of view. In that respect, we believe that our
Lindbladian $\to$ Hamiltonian mapping paves the way for a more
fine-grained analysis of the problem.  Finally, to best of
our knowledge, it has only been shown that general gapped,
detailed-balance Lindbladians satisfy \emph{polynomial} mixing
time\cc{logsobolev}.  For such Lindbladians, rapid mixing has only
been shown in cases where the steady state is a Gibbs state of a
commuting local Hamiltonian\cc{bardet2021rapid,bardet2021entropy}
(which is not true in general).

{~}

The structure of the paper is as follows. In
\autoref{sec:background} we introduce the notation and background of
the paper. In \autoref{sec:results} we state and prove our main
results. In \autoref{sec4} we describe two nontrivial models that satisfy
the assumptions we make, and present explicit forms for their
super-Hamiltonian. We explicitly derive resultant local Hamiltonians
and show it is indeed a local (or exponentially local). We then use
Hamiltonian complexity techniques from Refs.\cc{knabe, Lemm} to
demonstrate a finite gap in those Hamiltonians (and correspondingly
in the original $\mcL$) for a specific example. In \autoref{App:A}
we prove a lemma used in \autoref{thm:quasi-local} regarding square
root of sparse matrices. In \autoref{App:toy1}, \autoref{App:toy2} we
prove that our examples  indeed satisfy all the requirements and
explicitly derive the super-Hamiltonian.

\section{Background}
\label{sec:background}

\subsection{Setup}
\label{sec:setup}

Throughout of this manuscript we use the big-O notation of computer
science. If $n$ is the asymptotic variable, then $X=\bigO{Y}$ means
that there exists $C>0$ such that for sufficiently large $n$,
$X(n)\le C \cdot Y(n)$. On the other hand, $X=\bOmega{Y}$ means that
there exists a constant $c>0$ such that $X(n)\ge c\cdot Y(n)$ for
sufficiently large $n$. Finally, $X=\bTheta{Y}$ means that there
exist constants $C>0,c>0$ such that $c\cdot Y(N) \le X(n) \le C\cdot
Y(n)$ for suffciently large $n$.

We consider many-body systems composed of sites with 
$d$-dimensional local Hilbert space (`qudits', which could
physically correspond to spins, fermions, or hard core bosons) that
reside on a lattice $\Lambda$ of $n$ sites and fixed spatial dimension
(1D, 2D or 3D). The $d$-dimensional local Hilbert space at site
$x\in \Lambda$ is denoted by $\BBH_x$, so that the global Hilbert
space is
\begin{align*}
  \BBH=\bigotimes_{x\in\Lambda} \BBH_x, \qquad 
  N\EqDef \dim(\BBH) = d^{|\Lambda|}.
\end{align*}
We denote the space of linear operators on $\BBH$ by
$L(\BBH)$, which possess a tensor-product structure as well,
i.e., $L(\BBH)=\bigotimes_{x\in \Lambda} L(\BBH_x)$.
$L(\BBH)$ is by itself a finite-dimensional Hilbert-space with
respect to the Hilbert-Schmidt inner product 
\begin{align}
  \braket{O_1,O_2} \EqDef \Tr(O_1^\dagger O_2) ,
    \qquad O_1,O_2\in L(\BBH). 
\end{align} 
We denote the $p$th Schatten norm of an operator $X$ by
$\norm{X}_p\EqDef \Tr(|X|^p)^{1/p}$. The $p=1$ case is the trace
norm, commonly used to measure distance between density matrices.
The $p=2$ case is the norm derived from the Hilbert-Schmidt
inner-product. The $p=\infty$ norm, defined by the maximal singular
value of $X$, is the operator norm, and is denoted in this paper by
$\norm{X}$. 

Given a subset of the lattice $S\subseteq \Lambda$, we denote its
local Hilbert space by $\BBH_S$, i.e., $\BBH_S=\bigotimes_{x\in S}
\BBH_x$. An operator $O\in L(\BBH)$ is \emph{supported} on a subset
$S\subseteq \Lambda$ if it can be written as $O=O_S\otimes
\Id_{\BBH_{S^c}}$ where $O_S\in L(\BBH_S)$ and $\Id_{\BBH_{S^c}}$ is
the identity operator on the complementary Hilbert space. 

Linear operators acting on the operators space $L(\BBH)$ are called
\emph{superoperators}, and will usually be denoted by curly letters
(e.g $\mcL,\mcP$). As in the case of operators, we say a
superoperator $\mcB$ is supported on a subset $S\subseteq \Lambda$
of the sites in the lattice if it can be written as $\mcB_S \otimes
\mcI_{S^c}$, where $\mcB_S$ is a superoperator on $L(\BBH_S)$, and
$\mcI_{S^c}$ is the identity operation on the complementary
operators space.

Given a superoperator $\mcB$, its dual map $\mcB^*$ is the unique
map that satisfies
\begin{align}
\label{eq:self_adjoint}
    \braket{O_1,\mcB(O_2)}= \braket{\mcB^*(O_1), O_2}
      \tab \forall O_1,O_2\in L(\BBH),
\end{align}
i.e., the adjoint under the Hilbert-Schmidt inner-product.  It is
easy to check that the adjoint of the superoperator 
$\mcB(\rho)=\sum_{i,j} c_{ij} F_i \rho F_j^\dagger$ is given by
$\mcB^*(\rho)=\sum_{i,j}c_{ij}^* F_i^\dagger \rho F_j$. We call
$\mcB$ \emph{self-adjoint} if $\mcB=\mcB^*$.

\subsection{Open quantum systems}
\label{sec:open-systems}

In this section we provide some basic definitions and results about
Markovian open quantum systems. For a detailed introduction to this
subject we refer the reader to Refs.\cc{ref:Breuer2006-opensys,
wolf}.

We study Markovian open quantum systems governed by a
time-independent \emph{Lindbladian} (also known as a
\emph{Liouvillian}) $\mcL$ in the Schr\"odinger picture.  Formally,
this means that the quantum state describing the system evolves
continuously by a family of \emph{completely positive trace
preserving} (CPTP) maps $\{\mcT_t\}_{t\ge 0}$, which are given by
$\mcT_t \EqDef e^{\mcL t}$ so that
\begin{align}
  \rho(t) = \mcT_t \rho_0 \quad\Leftrightarrow\quad 
    \partial_t \rho(t) = \mcL\rho(t) , \quad \rho(0) = \rho_0 .
\end{align}
$\{\mcT_t\}$ is known in the literature as a \emph{dynamical
semigroup} or \emph{Quantum Markovian Semigroup} due to the identity
$\mcT_{t}\circ \mcT_s = \mcT_{t+s}$.  A necessary and sufficient
condition for $\mcL$ to be the generator of a semigroup is given by
the following theorem (see, for example, Theorem 7.1 in
\cRef{wolf}):
\begin{theorem}
\label{thm:Lin-jump}
  $\mcT_t = e^{\mcL t}$ is a dynamical semigroup iff it can be
  written as
  \begin{align}
  \label{eq:jump-form}
    \mcL(A) = -i[H,A] + \sum_j L_j A L_j^\dagger 
      -\frac{1}{2}\{ L_j^\dagger L_j, A\} ,
  \end{align}
  where $H$ is a Hermitian operator and $L_j$ are operators.
\end{theorem}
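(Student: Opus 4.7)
My plan is to prove the two directions separately, with the jump-form $\Rightarrow$ dynamical semigroup implication being the easier one. Given $\mcL$ as in \Eq{eq:jump-form}, the semigroup law $\mcT_{t+s}=\mcT_t\circ\mcT_s$ is automatic from $\mcT_t=e^{t\mcL}$, and trace preservation $\Tr(\mcL(A))=0$ follows from cyclicity of the trace: $\Tr(L_j A L_j^\dagger)=\Tr(L_j^\dagger L_j A)$ exactly cancels the anti-commutator, and the commutator is traceless. To establish complete positivity of $\mcT_t$, I would split $\mcL=\mcL_++\mcL_0$ where $\mcL_+(A)=\sum_j L_j A L_j^\dagger$ and $\mcL_0(A)=-iKA+iAK^\dagger$ with $K=H-\tfrac{i}{2}\sum_j L_j^\dagger L_j$. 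Then $e^{t\mcL_0}(A)=e^{-itK}Ae^{itK^\dagger}$ is a single-Kraus CP map, and $e^{t\mcL_+}=\sum_{n\ge 0}\tfrac{t^n}{n!}\mcL_+^n$ is CP for $t\ge 0$ because each $\mcL_+^n$ is a composition of CP maps and the coefficients are non-negative. The Trotter decomposition $e^{t\mcL}=\lim_n (e^{(t/n)\mcL_+}\circ e^{(t/n)\mcL_0})^n$ then exhibits $\mcT_t$ as a limit of compositions of CP maps.

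The converse is the classical Gorini--Kossakowski--Sudarshan structure theorem, and the step I expect will require the most care. The strategy is to pass through the Choi--Jamio\l{}kowski isomorphism. Fix the unnormalised maximally entangled vector $\ket{\Omega}=\sum_i\ket{i}\ket{i}\in\BBH\otimes\BBH$ and let $C_t\EqDef(\mcT_t\otimes\mcI)\ketbra{\Omega}{\Omega}$. Complete positivity of $\mcT_t$ is equivalent to $C_t\ge 0$, and trace preservation amounts to $\Tr_1 C_t=\Id$. At $t=0$, $C_0=\ketbra{\Omega}{\Omega}$ is rank one, and the generator's Choi operator $G\EqDef(\mcL\otimes\mcI)\ketbra{\Omega}{\Omega}$ is the derivative $\partial_t C_t|_{t=0}$, so $C_t=\ketbra{\Omega}{\Omega}+tG+\bigO{t^2}$ with $\Tr_1 G=0$.

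The core step is to extract the jump form from positivity of $C_t$ for small $t\ge 0$. For any $\ket{\phi}\perp\ket{\Omega}$ the constant term in $\bra{\phi}C_t\ket{\phi}$ vanishes, so the leading $t$-coefficient $\bra{\phi}G\ket{\phi}$ must be non-negative; this forces $G$ to be PSD on the orthogonal complement of $\ket{\Omega}$. Spectrally decomposing this restriction yields $\sum_k\lambda_k\ket{v_k}\bra{v_k}$ with $\lambda_k\ge 0$, and the vectorisation isomorphism $\ket{v_k}=(L_k\otimes\Id)\ket{\Omega}$ identifies each $\ket{v_k}$ with an operator $L_k$ on $\BBH$. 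Inverting the Choi map on this PSD part reproduces the jump terms $\sum_k \lambda_k L_k A L_k^\dagger$, while the residual $\ket{\Omega}$-coupled components of $G$, constrained by $\Tr_1 G=0$, recombine into the commutator $-i[H,A]$ and the anti-commutator $-\tfrac{1}{2}\{L_k^\dagger L_k,A\}$ for some Hermitian $H$ (unique up to a multiple of $\Id$). The main obstacle will be carrying out this last decomposition cleanly: $G$ naturally splits into four blocks with respect to the $\ket{\Omega}$--$\ket{\Omega}^\perp$ partition, and one must track how each block contributes to the Hamiltonian, anti-commutator, and jump pieces while ensuring that trace preservation enforces the precise $-\tfrac{1}{2}\{L_k^\dagger L_k,A\}$ combination rather than leaving any uncontrolled remainder.
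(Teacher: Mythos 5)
Your proposal is correct, but note that the paper does not prove this statement at all --- it imports it verbatim as Theorem~7.1 of the cited lecture notes of Wolf. Your sketch (Trotter splitting into a single-Kraus part and a CP exponential for the easy direction; Choi matrix, conditional complete positivity of the generator on $\ket{\Omega}^\perp$, and trace preservation fixing the anti-commutator for the converse) is precisely the standard GKS--Lindblad argument given in that reference, so there is nothing to reconcile with the paper beyond the bookkeeping of the $\ket{\Omega}$-coupled blocks that you already flag.
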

The operator $H$ is known as the Hamiltonian of the system, and it
governs the coherent  part of the evolution. The operators $L_j$
are often called \emph{jump operators}, and are
responsible for the dissipative part of the evolution. 

Given a Lindbladian $\mcL$, the representation in 
\eqref{eq:jump-form} is not unique; for example, we can always
change $H\to H+c\Id$ for $c\Id$ for $c\in \mathbb{R}$ without
changing $\mcL$. In addition, $\mcL$ will remain the same under the
transformation $L_j\to L_j+c_j\Id$ and $H\to
H+\frac{i}{2}\sum_j(c_j^*L_j - c_jL_j^\dagger)$. Therefore, we can
assume without loss of generality that there is a representation of
$\mcL$ in which both $H$ and $L_j$ are traceless.

This condition, however, does not fully fix the jump operators, and
there can be several traceless jump operators representations of the
same Lindbladian. The following theorem, which is an adaptation of 
Proposition~7.4 in \cRef{wolf}, shows how these representations are
related.
\begin{theorem}[Freedom in the representation of $\mcL$, 
Proposition~7.4 in \cRef{wolf}] \ \\
\label{thm:freedom}
  Let $\mcL$ be given by \Eq{eq:jump-form} with traceless $\{L_j\} $
  and $H$. If it can also be written using traceless
  $\{L'_i\}$ and $H '$, then $H=H'$ and there exists a unitary matrix
  $U$ such that 
  $L'_i = \sum_j U_{ij} L_j$,
  where the smaller
  set of jump operators is padded with zero operators.
\end{theorem}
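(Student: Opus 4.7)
The plan is to make the Kossakowski (or GKLS) matrix the canonical object. Fix an orthonormal basis $\{F_\alpha\}_{\alpha=0}^{N^2-1}$ of $L(\BBH)$ with respect to the Hilbert--Schmidt inner product, chosen so that $F_0 = \Id/\sqrt{N}$ and $F_\alpha$ is traceless for $\alpha \geq 1$. Since every $L_j$ is traceless, expand
\begin{align*}
  L_j = \sum_{\alpha \geq 1} c_{j\alpha} F_\alpha,
\end{align*}
and substitute into \eqref{eq:jump-form} to rewrite
\begin{align*}
  \mcL(A) = -i[H,A]
    + \sum_{\alpha,\beta \geq 1} K_{\alpha\beta}\Big(
       F_\alpha A F_\beta^\dagger
        - \tfrac{1}{2}\{F_\beta^\dagger F_\alpha, A\}\Big),
\end{align*}
where $K_{\alpha\beta} \EqDef \sum_j c_{j\alpha} c_{j\beta}^*$ is a positive semidefinite matrix. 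The two representations of $\mcL$ therefore give rise to two pairs $(H,K)$ and $(H',K')$, and the theorem reduces to two claims: (i) the pair $(H,K)$ is uniquely determined by $\mcL$ when $H$ is traceless and the basis is fixed, and (ii) if $K=K'$ then the vectors $(c_{j\alpha})_\alpha$ and $(c'_{i\alpha})_\alpha$ are related by a unitary.

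For claim (ii) the argument is standard linear algebra: viewing $C = (c_{j\alpha})$ and $C' = (c'_{i\alpha})$ as rectangular matrices (padded with zero rows so that they have the same height), the equation $K=C^\dagger C = C'^\dagger C'$ together with the polar/SVD decomposition forces $C' = U C$ for some unitary $U$, which is exactly the content of the statement $L'_i = \sum_j U_{ij} L_j$.

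Claim (i) is the main obstacle, and I would attack it by a separation-of-parts argument. Let $\Delta H \EqDef H - H'$ (traceless) and $\Delta K \EqDef K - K'$. Their difference satisfies
\begin{align*}
  -i[\Delta H, A]
    + \sum_{\alpha,\beta \geq 1} \Delta K_{\alpha\beta}
      \Big(F_\alpha A F_\beta^\dagger
        - \tfrac{1}{2}\{F_\beta^\dagger F_\alpha, A\}\Big) = 0
  \qquad\forall A \in L(\BBH).
\end{align*}
Pairing this identity with $F_\gamma$ under the Hilbert--Schmidt inner product and using the cyclicity of the trace, the $\Delta H$ contribution gives the matrix element $-i\,\Tr(F_\gamma^\dagger[\Delta H,A])$, which, when $A=\Id/\sqrt{N}$ or $\gamma=0$, vanishes, allowing us to isolate $\Delta K$. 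Concretely, evaluating at $A=\Id$ kills the Hamiltonian part, while contracting against $F_\gamma \otimes F_\delta$ picks up a structure constant expression in $\Delta K$. Inverting these linear relations (they reduce to $N^2 \times N^2$ Gram-type systems in the chosen orthonormal basis) forces $\Delta K = 0$; feeding this back, $-i[\Delta H, A] = 0$ for all $A$, so $\Delta H \propto \Id$, and tracelessness of $\Delta H$ gives $\Delta H = 0$.

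The hardest step, as above, is the bookkeeping that separates the Hamiltonian contribution from the Kossakowski contribution in the linear system: the two parts have the same ``commutator-like'' character and can a priori shuffle into each other through identity shifts in $L_j$ and $H$. The tracelessness hypotheses are exactly what remove this gauge freedom, so I would be careful in step (b) to invoke them at the right point — first to conclude $\Delta K = 0$ (which uses $c_{j0}=c'_{j0}=0$ to guarantee that the $F_0$ row/column of $K$ is absent and cannot conspire with $\Delta H$), and only then to conclude $\Delta H = 0$ (which uses $\Tr \Delta H = 0$ to eliminate the residual scalar).
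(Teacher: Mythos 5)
The paper does not actually prove this theorem; it imports it verbatim as Proposition~7.4 of the cited lecture notes. Your proposal reconstructs what is essentially the standard argument from that reference: pass to a traceless orthonormal basis $\{F_\alpha\}_{\alpha\ge 1}$ augmented by $F_0=\Id/\sqrt{N}$, show that the Kossakowski matrix $K_{\alpha\beta}=\sum_j c_{j\alpha}c_{j\beta}^*$ on the traceless block is an invariant of $\mcL$, and then observe that two coefficient matrices with the same Gram matrix differ by a unitary on the jump-operator index. Your claim (ii) is correct and complete as stated (the padding-plus-polar-decomposition argument is the standard one), and the final step $\Delta H\propto \Id \Rightarrow \Delta H=0$ via tracelessness is also right.

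The genuine soft spot is claim (i), which is the heart of the theorem and which you assert rather than prove. As written, ``evaluating at $A=\Id$ kills the Hamiltonian part'' leaves you with the single operator identity $\sum_{\alpha,\beta\ge 1}\Delta K_{\alpha\beta}\,[F_\alpha,F_\beta^\dagger]=0$, which is only $N^2$ scalar constraints on the $(N^2-1)^2$ unknowns in $\Delta K$ and cannot force $\Delta K=0$; and ``inverting these linear relations'' is precisely the invertibility statement that needs justification, not a bookkeeping detail. The clean way to close it: the $N^4$ superoperators $A\mapsto F_\alpha A F_\beta^\dagger$ with $\alpha,\beta\in\{0,\dots,N^2-1\}$ are linearly independent (under vectorization they become the orthonormal family $F_\alpha\otimes \overline{F_\beta}$ in $L(\BBH\otimes\BBH)$), so \emph{every} superoperator has a unique coefficient matrix $a_{\alpha\beta}$ in this double basis. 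In the GKLS form with traceless $L_j$ and $H$, the dissipator's sandwich term occupies exactly the block $\alpha,\beta\ge 1$ with coefficients $K_{\alpha\beta}$, while $-i[H,\cdot]$ and the anticommutator contribute only to the row $\beta=0$ and column $\alpha=0$ (since they are of the form $XA F_0^\dagger$ and $F_0 A Y$ after inserting $\Id=\sqrt{N}F_0$). Hence $K$ is the $\alpha,\beta\ge 1$ block of the uniquely determined $a$, so $K=K'$; feeding this back, $-i[\Delta H,A]=0$ for all $A$ and you conclude as you did. With that lemma inserted, your proof is complete and agrees with the cited one.
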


The following are two important properties of Lindbladians that
will be used extensively in this work. First, we define the notion
of locality of a Lindbladian (which can be generalized to a
general super-operator).
\begin{definition}[$k$-body Lindbladians]
\label{def:k-body-L} 
  We say that $\mcL$ is a \emph{$k$-body} Lindbladian if it can be
  written as in \eqref{eq:jump-form} with jump operators $L_j$ that
  are supported on at most $k$ sites, and in addition the
  Hamiltonian $H$ can be written as $H=\sum_i h_i$, with every $h_i$
  also supported on at most $k$ sites.  We say that $\mcL$ is a
  \emph{geometrically local} $k$-body Lindbladian if, in addition
  to being $k$-body, every $L_j$ and $h_i$ are supported on
  neighboring lattice sites.
\end{definition}
Note that we are using the name "k-body" Lindbladian instead of
``$k$-local'': as it is often done in the Hamiltonian complexity
literature\cc{ref:Kitaev2002-QCbook}, the term $k$-local is reserved
to local terms supported on at most $k$ \emph{qubits}. Here, we are
allowing any constant local dimension $d$. 

Next, we define the spectral gap of a Lindbladian.
\begin{definition}[Spectral gap of a Lindbladian]
\label{def:gap-L} 
  The \emph{spectral gap} of the Lindbladian is
  defined as the minimal real part of its non-zero eigenvalues
  \begin{align}
    \label{def:gap}
    \gap(\mcL) \EqDef \min_{0\neq\lambda 
      \in \mathrm{Spec}(\mcL)} |\Re(\lambda)| .
  \end{align}
\end{definition}
The spectral gap of a Lindbladian controls the \emph{asymptotic}
convergence rate of the dynamics to a steady state\cc{gaps_Znidari},
though a finite gap by itself does not guarantee short-time
convergence\cc{cutoff}.
  
We conclude this section by listing few well-known facts about
Lindbladians. We refer the reader to chapters 6,7 of \cRef{wolf} for
proofs and details:
\begin{description}  
  \item[Fact 1:] $\mcL$ is an  hermicity-preserving superoperator:  
    $\left(\mcL(A)\right)^\dagger = \mcL(A^\dagger)$, and the same
    holds for $\mcL^*$.
    
  \item[Fact 2:] There is always at least one quantum state $\sigma$ that satisfies 
    $\mcL(\sigma)=0$, namely $\sigma$ is a fixed point of the
    time-evolution. We refer to it as a \emph{steady state}.
        
  \item[Fact 3:] $\mcT_t$ is a contractive map and consequently, 
    $\mcL$ has only non-positive real parts in its spectrum
    (Proposition 6.1 in \cRef{wolf}).
    
\end{description}

\subsection{Quantum detailed-balance}
\label{sec:QDB}

In this work, we follow Refs.\cc{ALICKI1976249,Maas,Fangola} in
defining the detailed-balance condition for the Lindbladian system.
We begin by describing classical detailed-balance, and
then use it to define the corresponding quantum condition.

Classically, let $P\in \mathbb{M}_n$ be the transition matrix of a
Markov chain over the discrete state of states $\{1, 2, \ldots,n\}$,
i.e., $P_{ij} \EqDef \Prob(j\to i|j)$, and let $\pi$ denote a
probability distribution on these states.  Then $P$ is said to
satisfy the detailed-balance condition with respect to a fully
supported $\pi$ (i.e., $\pi_i>0$ for all $i\in [n]$) if the
probability of observing a $i\to j$ transition is identical to the
probability of observing a $j\to i$ transition, when the system
state is described by $\pi$. Mathematically, this means $P_{ij}\pi_j
= P_{ji}\pi_i$.  This condition implies that $\pi$ is a steady state
of the Markov chain, however, the converse is not always
true\cc{Sen_book}.

We can also write this condition in terms of matrices. Defining
$\Gamma_\pi$ to be the diagonal matrix $(\Gamma_\pi)_{ij} =
\pi_i\delta_{ij}$, the detailed-balance condition can also be
written as the matrix equality:
\begin{align}
\label{eq:CDB1}
  P\Gamma_\pi = \Gamma_\pi P^T .
\end{align}
Alternatively, noting that $\Gamma_\pi$ is a positive definite
matrix, the above condition is equivalent to the condition of
$\Gamma_\pi^{-1/2} P \Gamma_\pi^{1/2}$ being symmetric:
\begin{align}
\label{eq:CDB2}
  \Gamma_\pi^{-1/2} P \Gamma_\pi^{1/2}
    = \Gamma_\pi^{1/2} P^T \Gamma_\pi^{-1/2}
    = (\Gamma_\pi^{-1/2} P \Gamma_\pi^{1/2})^T .
\end{align}

These two conditions can be generalized to the quantum setting by
changing $P\to \mcL$ and $\pi\to\sigma$ for some reference quantum
state $\sigma$. Just as in the classical case, we demand that
$\sigma$ is invertible, i.e., it has no vanishing eigenvalues. The
quantum analog of \Eq{eq:CDB1} should be an equation over
superoperators that act on quantum density operators. While $P$ is
replaced by $\mcL$, $\Gamma_\pi$ should be replaced by a
superoperator that multiplies an input state by the quantum state
$\sigma$. But as $\sigma$ does not commute with all quantum states
(unless it is the completely mixed state), there are several ways to
define this multiplication. In particular, for every $s\in [0,1]$,
we might define the ``multiplication by $\sigma$''
superoperator\footnote{Here we do not use a curly letter to denote
the superoperator $\Gamma_s$ in order to be consistent with previous
works.}
\begin{align}
  \label{def:Gamma-s}
  \Gamma_s(A) &\EqDef \sigma^{1-s} A \sigma^s .
\end{align}
It is easy to verify that, just as in the classical case, the
superoperator $\Gamma_s$ is self adjoint: $\Gamma_s = \Gamma_s^*$.
Moreover, $\Gamma_s$ is invertible, and for every $x\in
\mathbb{R}$, we have $\Gamma^x_s(A) = \sigma^{x(1-s)} A
\sigma^{xs}$.

With this notation, the quantum detailed balance (QDB) condition is
defined as the following generalization of
the classical condition \eqref{eq:CDB1}:
\begin{definition}[Quantum detailed balance]
  \label{def:QDB} A Lindbladian satisfies quantum detailed-balance
  with respect to some invertible (full-rank) state $\sigma\in
  L(\BBH)$ and $s\in [0,1]$ if
  \begin{align}\label{eq:QDB}
    \mcL \circ \Gamma_s = \Gamma_s \circ \mcL^*, 
  \end{align}
  where $\Gamma_s$ is the superoperator defined in
  \eqref{def:Gamma-s}.
\end{definition}
We note that not every steady-state of a Lindbladian defines
a superoperator $\Gamma_s$ with respect to which the Lindbladian
obeys detailed-balance\cc{laracuente}.

As in the classical case, the quantum detailed-balance condition can
be formulated in a few equivalent ways:
\begin{claim}
\label{cl:QDB}
  Given a Lindbladian $\mcL$, an invertible state $\sigma\in
  L(\BBH)$, and $s\in[0,1]$, the following conditions are
  equivalent:
  \begin{enumerate}
    \item  (QDB1)
      $\mcL$ satisfies quantum detailed balance with respect to
      $\sigma$ for some $s$.
    
    \item (QDB2) \label{def:QDB2}
      The superoperator $\Gamma^{-1/2}_s \circ \mcL \circ
      \Gamma_s^{1/2}$ is self-adjoint, namely $ \Gamma^{-1/2}_s
      \circ \mcL \circ \Gamma_s^{1/2}=\Gamma_s^{1/2} \circ
      \mcL^*\circ \Gamma^{-1/2}_s$.

    \item (QDB3) \label{def:QDB3}
      $\mcL^*$ is self-adjoint with respect to the inner-product
      defined by 
      \begin{align}
      \label{def:inner-s}
        \braket{A,B}_s\EqDef \Tr(A^\dagger \cdot \Gamma_s(B))
          = \Tr(A^\dagger\cdot\sigma^{1-s}B\sigma^s).
      \end{align}

  \end{enumerate}
\end{claim}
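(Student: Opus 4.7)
The plan is to establish a clean algebraic chain: first build the right tool, then cycle through the three conditions. The tool is a self-adjoint square root of the superoperator $\Gamma_s$. I would begin by checking that $\Gamma_s$ is self-adjoint with respect to the Hilbert-Schmidt inner product (a one-line cyclicity-of-trace computation) and positive definite on $L(\BBH)$ (expanding $A$ in the eigenbasis of $\sigma$ gives $\braket{A,\Gamma_s(A)} = \sum_{ij} |A_{ij}|^2 p_i^{1-s} p_j^s > 0$ for $A\neq 0$, since $\sigma$ is full-rank). Once this is in place, the explicit formula $\Gamma_s^{1/2}(A) = \sigma^{(1-s)/2} A \sigma^{s/2}$ provides a self-adjoint positive square root, which is invertible with inverse $\Gamma_s^{-1/2}(A) = \sigma^{-(1-s)/2} A \sigma^{-s/2}$. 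All of these observations will be used freely in the two equivalences.

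For QDB1 $\Leftrightarrow$ QDB2, I would conjugate the equation $\mcL \circ \Gamma_s = \Gamma_s \circ \mcL^*$ by $\Gamma_s^{-1/2}$ on both sides to get
\begin{align*}
  \Gamma_s^{-1/2}\circ \mcL \circ \Gamma_s^{1/2}
    = \Gamma_s^{1/2} \circ \mcL^* \circ \Gamma_s^{-1/2}.
\end{align*}
The key observation is that the right-hand side is precisely the Hilbert-Schmidt adjoint of the left-hand side: taking $*$ reverses composition order and fixes $\Gamma_s^{\pm 1/2}$ (since those are self-adjoint), giving $(\Gamma_s^{-1/2}\circ \mcL \circ \Gamma_s^{1/2})^* = \Gamma_s^{1/2} \circ \mcL^* \circ \Gamma_s^{-1/2}$. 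Thus QDB1 is equivalent to $\Gamma_s^{-1/2} \circ \mcL \circ \Gamma_s^{1/2}$ being self-adjoint, which is exactly QDB2. Each step is reversible because $\Gamma_s^{\pm 1/2}$ is a bijection on $L(\BBH)$.

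For QDB1 $\Leftrightarrow$ QDB3, I would unpack self-adjointness of $\mcL^*$ with respect to the deformed inner product. Using the definition of $\braket{\cdot,\cdot}_s$ and the defining property of the Hilbert-Schmidt adjoint, I rewrite both sides:
\begin{align*}
  \braket{A,\mcL^*(B)}_s
    &= \braket{A,\,\Gamma_s\circ\mcL^*(B)},\\
  \braket{\mcL^*(A),B}_s
    &= \braket{\mcL^*(A),\,\Gamma_s(B)}
    = \braket{A,\,\mcL\circ \Gamma_s(B)}.
\end{align*}
Equating these for all $A,B\in L(\BBH)$ is equivalent, by non-degeneracy of the Hilbert-Schmidt inner product, to the operator identity $\Gamma_s\circ \mcL^* = \mcL \circ \Gamma_s$, which is QDB1.

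Since everything is elementary linear algebra over a finite-dimensional space, I do not foresee any real obstacle. The only place where minor care is needed is in step one: making sure that a genuine self-adjoint square root $\Gamma_s^{1/2}$ exists and commutes with the manipulations above, which hinges on the full-rank assumption on $\sigma$. Once the explicit formula $\Gamma_s^{1/2}(A)=\sigma^{(1-s)/2}A\sigma^{s/2}$ is noted, the rest of the argument is a short symmetric chain of equivalences.
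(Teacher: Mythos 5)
Your proposal is correct and follows essentially the same route as the paper: the QDB1~$\Leftrightarrow$~QDB2 equivalence by conjugation with $\Gamma_s^{\pm 1/2}$ (using self-adjointness of $\Gamma_s$), and the QDB1~$\Leftrightarrow$~QDB3 equivalence by unpacking $\braket{\cdot,\cdot}_s$ via the Hilbert--Schmidt adjoint. Your version simply makes explicit a few facts the paper leaves implicit (positivity of $\Gamma_s$, the explicit formula for $\Gamma_s^{1/2}$, and non-degeneracy for the reverse direction of QDB3), all of which are sound.
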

\begin{proof}
  (QDB2) is equivalent to (QDB1) by a simple conjugation with
  $\Gamma^{1/2}$, and (QDB3) is equivalent to (QDB1) by
  \begin{align*}
  \braket{\mcL^*(A),B}_s &= \Tr\Big[\big(\mcL^*(A)\big)^\dagger\cdot
    \Gamma_s(B)\Big]
     = \Tr\big[A^\dagger\cdot \mcL\circ\Gamma_s(B)\big] \\
     &= \Tr\big[ A^\dagger\cdot \Gamma_s\circ \mcL^*(B)\big]
      = \braket{A, \mcL^*(B)}_s .
  \end{align*}
\end{proof}

The $s=1$ is commonly known as the Gelfand-Naimark-Segal
(GNS) case, and its inner product $\braket{A,B}_1 = \Tr(\sigma A^\dagger
B)$ is often called the GNS inner-product. Similarly, the $s=1/2$
case is called the Kubo-Martin-Schwinger (KMS) case, with
$\braket{A,B}_{1/2}$ known as the KMS inner-product.

 It was shown in
\cRef{Maas} (see Lemmas 2.5, 2.8 therein) that any superoperator
that is self adjoint with respect to the inner product defined by
some $s\in [0,1/2)\cup (1/2, 1]$ is self adjoint with respect to the
inner product defined by all $s^\prime\in [0,1]$, including
$s^\prime=1/2$. Therefore by QDB3, if $\mcL$ satisfies the
detailed-balance condition for some $s\in [0,1/2)\cup (1/2, 1]$, then it
satisfies it for all other $s\in [0,1]$. 

Finally, note that if $\mcL$ satisfies detailed balance with respect
to $\sigma$, then $\sigma$ is automatically a steady state, since
$\Gamma^{-1}_s(\sigma) = \Id$ and therefore
\begin{align*}
  \mcL(\sigma)=\Gamma_s \circ \mcL^* \circ \Gamma^{-1}_s(\sigma)
    = \Gamma_s\circ \mcL^*(\Id)=0 .
\end{align*}

We refer the reader to Refs.~\cc{bolanos2013infinite, DOC,
chi,szehr2015spectral} for other definitions and generalization to
quantum detailed-balance.

\subsection{The Canonical Form}
\label{sec:canonical-form} 

The QDB condition has well-known implications to the structure of
the Lindbladian $\mcL$. In this section we describe some of the
central consequences of this condition, and in particular the
so-called \emph{canonical form}, which will be used later. None of
the results in this section are new, as they already appeared in
several works (see, for example,
Refs.\cc{Maas,ALICKI1976249,Gorini}). Nevertheless, we repeat some
of the easy proofs for sake of completeness.

Our starting point is the \emph{modular superoperator}, which is
central for constructing a canonical representation of
detail-balanced Lindbladians. 
\begin{definition}[The modular superoperator]
  Given an invertible state $\sigma$, its associated modular
  superoperator is defined by 
  \begin{align}
  \label{def:modular_operator}
    \mcG_\sigma(A) \EqDef \sigma A \sigma^{-1} .
  \end{align}
  We note that we are using a
  greek letter $\Delta$ to denote the modular superoperator instead
  of curly letter. This is done for being consistent with the
  notations of Refs.\cc{Maas, Szegedy}.
\end{definition}
Below are few properties of $\mcG_\sigma$, which follow almost
directly from its definition.
\begin{claim} 
  The modular superoperator $\mcG_\sigma$ has the following
  properties:
  \begin{enumerate}
    \item $\mcG^{-1}_\sigma(A) = \sigma^{-1} A \sigma$ and 
      $\big[\mcG_\sigma(A)\big]^\dagger = \mcG^{-1}_\sigma(A^\dagger)$.
    \item Self-adjointness: $\mcG_\sigma^* = \mcG_\sigma$.
    \item Positivity: $\braket{A, \mcG_\sigma(A)} > 0$ for all non-zero
      operators $A$.
  \end{enumerate}
\end{claim}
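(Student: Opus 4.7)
The plan is to verify each of the three properties by direct computation, exploiting the fact that $\sigma$ is Hermitian (it is a quantum state) and invertible (full-rank), so $\sigma^{1/2}$, $\sigma^{-1/2}$ exist and are Hermitian as well.

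For property 1, I would verify the inverse formula by direct substitution: applying $A \mapsto \sigma^{-1} A \sigma$ after $\mcG_\sigma$ (or before) telescopes to the identity map on $L(\BBH)$. For the conjugation identity, I would simply compute
\begin{align*}
  [\mcG_\sigma(A)]^\dagger = (\sigma A \sigma^{-1})^\dagger = (\sigma^{-1})^\dagger A^\dagger \sigma^\dagger = \sigma^{-1} A^\dagger \sigma = \mcG^{-1}_\sigma(A^\dagger),
\end{align*}
using Hermiticity of $\sigma$ and the fact that the adjoint reverses order.

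For property 2 (self-adjointness under the Hilbert--Schmidt inner product), I would compute $\braket{B, \mcG_\sigma(A)}$ and move the $\sigma$'s to the other side via cyclicity of the trace:
\begin{align*}
  \braket{B,\mcG_\sigma(A)} = \Tr(B^\dagger \sigma A \sigma^{-1}) = \Tr\big((\sigma^{-1} B \sigma)^\dagger\, A\big) = \braket{\mcG_\sigma(B), A},
\end{align*}
where in the middle step I used Hermiticity of $\sigma$ to rewrite $\sigma^{-1} B^\dagger \sigma = (\sigma B \sigma^{-1})^\dagger$... actually more carefully, cyclicity gives $\Tr(B^\dagger \sigma A \sigma^{-1}) = \Tr(\sigma^{-1} B^\dagger \sigma A)$, and since $\sigma$ is Hermitian, $\sigma^{-1} B^\dagger \sigma = (\sigma^{-1} B \sigma)^\dagger \ne (\mcG_\sigma B)^\dagger$. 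So I would instead rewrite this as $(\mcG_\sigma^{-1} B)^\dagger A$, obtaining $\braket{\mcG_\sigma^{-1} B, A}$; together with property~1 this shows $\mcG_\sigma^* = \mcG_\sigma$ only after a bit more care. The cleanest route is to show that $\mcG_\sigma^*(A) = \sigma A \sigma^{-1} = \mcG_\sigma(A)$ by matching against an arbitrary $B$.

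For property 3, I would use the Hermiticity of $\sigma^{1/2}$ and cyclicity to rewrite
\begin{align*}
  \braket{A,\mcG_\sigma(A)} = \Tr(A^\dagger \sigma A \sigma^{-1}) = \Tr\big((\sigma^{1/2} A \sigma^{-1/2})^\dagger (\sigma^{1/2} A \sigma^{-1/2})\big) = \norm{B}_2^2,
\end{align*}
where $B \EqDef \sigma^{1/2} A \sigma^{-1/2}$. Since $\sigma^{1/2}$ and $\sigma^{-1/2}$ are invertible, $B = 0$ iff $A = 0$, so the quantity is strictly positive whenever $A \neq 0$. No step looks subtle; the only place to be careful is keeping track of adjoints versus inverses in property~2, which is easily resolved by the explicit computation with $\sigma^{1/2}$ above (giving both self-adjointness and positivity of $\mcG_\sigma$ as an operator on the Hilbert--Schmidt space).
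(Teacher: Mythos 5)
Your proof is correct and follows essentially the same route as the paper: the paper dismisses properties 1 and 2 as following from the definition and proves property 3 exactly as you do, by writing $\Tr(A^\dagger\sigma A\sigma^{-1})$ as the Hilbert--Schmidt norm squared of $\sigma^{1/2}A\sigma^{-1/2}$ and invoking invertibility of $\sigma$. One correction to your property-2 detour, though: your ``more careful'' aside gets the adjoint bookkeeping backwards. Since $\sigma=\sigma^\dagger$, one has $(\sigma B\sigma^{-1})^\dagger=\sigma^{-1}B^\dagger\sigma$, so cyclicity already closes the argument with no detour:
\begin{align*}
  \braket{B,\mcG_\sigma(A)}=\Tr(B^\dagger\sigma A\sigma^{-1})
  =\Tr(\sigma^{-1}B^\dagger\sigma\, A)
  =\Tr\big((\sigma B\sigma^{-1})^\dagger A\big)
  =\braket{\mcG_\sigma(B),A}.
\end{align*}
In particular $\sigma^{-1}B^\dagger\sigma$ equals $(\mcG_\sigma(B))^\dagger$, not $(\mcG_\sigma^{-1}(B))^\dagger$; the latter is $\sigma B^\dagger\sigma^{-1}$, so the quantity $\braket{\mcG_\sigma^{-1}B,A}$ you propose as an intermediate is \emph{not} equal to the left-hand side, and concluding $\mcG_\sigma^*=\mcG_\sigma^{-1}$ from it would be wrong. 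Your first displayed chain was the right one (up to the typo $(\sigma^{-1}B\sigma)^\dagger$ in place of $(\sigma B\sigma^{-1})^\dagger$), and your final fallback --- computing $\mcG_\sigma^*$ by matching against arbitrary $B$ --- is exactly the computation above.
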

\begin{proof}
  Properties 1 and 2 follow from definition. For 3, note that
  \begin{align*}
    \braket{A, \mcG_\sigma(A)} = \Tr(A^\dagger \sigma A\sigma^{-1}) =
      \Tr\big[ (\sigma^{1/2} A \sigma^{-1/2})^\dagger \cdot 
        (\sigma^{1/2} A \sigma^{-1/2})\big]
  \end{align*}
  if $A\neq 0$, then by the invertability of $\sigma$ it follows
  that also $\sigma^{1/2} A \sigma^{-1/2} \neq 0$, hence the RHS
  above is positive.
\end{proof}
Properties 2 and 3 imply that $\mcG_\sigma$ is fully diagonalizable
by an orthonormal eigenbasis of operators $\{S_\alpha\}$ and
positive eigenvalues $\{e^{-\omega_\alpha} \}$, where $\alpha=0, 1,2,\ldots, N^2-1$ is a running index. 
As $\mcG_\sigma(\Id) = \Id$, we can fix $S_0 =
\frac{1}{\sqrt{N}}\Id$, $\omega_0=0$, and conclude that $\Tr(S_\alpha)
= \sqrt{N}\braket{S_0, S_\alpha} = 0$ for every $\alpha>0$. Finally, from
property~(1) we find that 
\begin{align*}
  \mcG_\sigma(S^\dagger_\alpha) = \big[
  \mcG^{-1}_\sigma(S_\alpha) \big]^\dagger
   = \big[e^{\omega_\alpha} S_\alpha]^\dagger 
     = e^{\omega_\alpha} S^\dagger_\alpha .
\end{align*}
Therefore, $S^\dagger_\alpha$ is also an eigenoperator of
$\mcG_\alpha$ with eigenvalue $e^{\omega_\alpha}$. All of these
properties are summarized in the following corollary, which defines
the notion of a \emph{modular basis}.
\begin{corol}[Modular basis]
\label{corol:modular-basis} Given an invertible state $\sigma$, the
  modular superoperator $\mcG_\sigma$ has an orthonormal
  diagonalizing basis $\{S_\alpha\}$ with $\alpha=0, 1, \ldots,
  N^2-1$ and the following properties:
  \begin{enumerate}
    \item $\braket{S_\alpha, S_\beta} = \delta_{\alpha\beta}$
    \item $S_0=\frac{1}{\sqrt{N}}\Id$ and $\Tr(S_\alpha)=0$ for
      $\alpha>0$.
    \item $\mcG_\sigma(S_\alpha) = e^{-\omega_\alpha} S_\alpha$ and
      $\mcG_\sigma(S^\dagger_\alpha) = e^{\omega_\alpha}
      S^\dagger_\alpha$.
    \item For every $\alpha$ there exists $\alpha'$ such that 
      $S_\alpha^\dagger = S_{\alpha'}$.
  \end{enumerate}
  The basis $\{S_\alpha\}$ is called a modular basis.
\end{corol}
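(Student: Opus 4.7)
The plan is to derive the four properties in sequence. Properties~1--3 follow fairly directly from the preceding claim about $\mcG_\sigma$, while property~4 is the subtle point that requires a careful choice of basis inside each eigenspace.

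For property~1, I would invoke the spectral theorem for the self-adjoint, positive-definite superoperator $\mcG_\sigma$ (properties~2 and~3 of the preceding claim). This produces an orthonormal eigenbasis $\{S_\alpha\}$ with strictly positive eigenvalues, which I parametrize as $e^{-\omega_\alpha}$ with $\omega_\alpha \in \mathbb{R}$. For property~2, the identity $\mcG_\sigma(\Id) = \Id$ shows that $\Id$ is an eigenvector with eigenvalue~$1$; choosing $S_0 = \Id/\sqrt{N}$ (which has unit Hilbert--Schmidt norm) and $\omega_0 = 0$ discharges the $\alpha=0$ case. For $\alpha > 0$, orthogonality with $S_0$ gives $\Tr(S_\alpha) = \sqrt{N}\,\braket{S_0, S_\alpha} = 0$. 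Property~3 then follows in two pieces: the first identity is the definition of the eigenbasis, and for the second I would apply property~1 of the preceding claim, $[\mcG_\sigma(A)]^\dagger = \mcG_\sigma^{-1}(A^\dagger)$, to obtain $\mcG_\sigma(S_\alpha^\dagger) = [\mcG_\sigma^{-1}(S_\alpha)]^\dagger = [e^{\omega_\alpha} S_\alpha]^\dagger = e^{\omega_\alpha} S_\alpha^\dagger$, so $S_\alpha^\dagger$ lies in the $e^{\omega_\alpha}$-eigenspace of $\mcG_\sigma$.

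The main obstacle is property~4: property~3 only tells us that $S_\alpha^\dagger$ is \emph{some} eigenoperator, not that it is one of the basis elements $S_\beta$. To upgrade this, I would construct the basis one eigenspace at a time, compatibly with dagger. For an eigenvalue $\lambda \neq 1$, dagger is an antilinear bijection between $V_\lambda$ and the distinct eigenspace $V_{1/\lambda}$ (the two have equal complex dimension). I would pick any orthonormal basis $\{T_i\}$ of $V_\lambda$ and take $\{T_i^\dagger\}$ as the basis of $V_{1/\lambda}$; orthonormality of the latter follows from $\braket{T_i^\dagger, T_j^\dagger} = \Tr(T_i T_j^\dagger) = \Tr(T_j^\dagger T_i) = \braket{T_j, T_i} = \delta_{ij}$. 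For the $\lambda = 1$ eigenspace $V_1$, which is closed under dagger, I would use that dagger restricts to an antilinear involution on $V_1$: the real subspace $V_1^{\mathrm{h}} = \{A \in V_1 : A = A^\dagger\}$ of Hermitian elements satisfies $V_1 = V_1^{\mathrm{h}} \oplus i\, V_1^{\mathrm{h}}$, so any orthonormal real basis of $V_1^{\mathrm{h}}$ (obtained by Gram--Schmidt under the real-valued inner product $\Tr(AB)$ on Hermitian operators) is an orthonormal complex basis of $V_1$ whose elements are fixed by dagger, i.e.\ $\alpha'=\alpha$. Concatenating these per-eigenspace bases across all eigenvalues yields the desired global modular basis.
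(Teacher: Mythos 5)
Your argument for properties~1--3 is correct and is essentially the paper's: diagonalize the self-adjoint positive superoperator $\mcG_\sigma$, fix $S_0=\Id/\sqrt{N}$ using $\mcG_\sigma(\Id)=\Id$, get tracelessness from orthogonality to $S_0$, and use $[\mcG_\sigma(A)]^\dagger=\mcG_\sigma^{-1}(A^\dagger)$ to place $S_\alpha^\dagger$ in the $e^{\omega_\alpha}$-eigenspace. Where you genuinely add something is property~4: the paper only observes that $S_\alpha^\dagger$ is \emph{some} eigenoperator with eigenvalue $e^{\omega_\alpha}$ and then asserts closure of the basis under dagger without spelling out why the basis can be so chosen; your eigenspace-by-eigenspace construction (pairing $V_\lambda$ with $V_{1/\lambda}$ via the antilinear, inner-product-preserving dagger map for $\lambda\neq 1$, and taking a Hermitian orthonormal basis of $V_1$) fills that gap correctly. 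The paper's implicit alternative is the explicit basis $O_{ij}=\ketbra{i}{j}$ built from the eigenbasis of $\sigma$, which satisfies properties~1, 3, 4 for free since $O_{ij}^\dagger=O_{ji}$, but then requires a rotation within the $\omega=0$ eigenspace to isolate $\Id/\sqrt N$ --- which is exactly the $V_1$ step you describe. One small point to make explicit: when you Gram--Schmidt the Hermitian subspace $V_1^{\mathrm{h}}$, you should seed the process with $\Id/\sqrt{N}$ so that the remaining $\lambda=1$ basis elements stay traceless and property~2 survives the per-eigenspace reassembly; as written this is compatible with your construction but not stated.
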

Being a diagonalizing basis, the modular basis is unique up to
unitary transformations within every eigenspace. The numbers
$\omega_\alpha$, which determine the eigenvalues of the modular
superoperator, are called \emph{Bohr frequencies}. We say that an
operator $O$ has well-defined Bohr frequency $\omega$ if
$\mcG_\sigma(O) = e^{-\omega}O$, i.e., it belongs to the eigenspace
of $\mcG_\sigma$ with eigenvalue $e^{-\omega}$. For example, it is
evident that $\sigma$ itself (or any other state that commutes with
it) has a well-defined Bohr frequency $\omega=0$ since
$\Delta_\sigma(\sigma)=\sigma$.

The Bohr frequencies are related to the eigenvalues of $\sigma$. To
see this, we use an explicit construction of a modular basis. Given
the spectral decomposition $\sigma=\sum_i e^{-E_i} \ketbra{i}{i}$,
we define the operators $O_{ij} \EqDef \ketbra{i}{j}$ with $i,j \in
1, \ldots, N$, and note that they form an orthonormal diagonalizing
basis of $\mcG_\sigma$:
\begin{align*}
  \mcG_\sigma(O_{ij}) = \sigma \ketbra{i}{j} \sigma^{-1} 
    = e^{-\omega_{ij}} O_{ij}, \qquad \omega_{ij} 
  \EqDef E_i - E_j .
\end{align*}

Finally,
Lindbladians that satisfy the QDB condition for $s\in
[0,1/2)\cup(1/2,1]$ with respect to an invertible state $\sigma$
can be written in a particular canonical way. This was first proved
in \cRef{ALICKI1976249} under slightly different conditions. Here we
will follow \cRef{Maas}, and use an adapted form of Theorem 3.1 from
that reference: 
\begin{theorem}[Canonical form of QDB Lindbladians, adapted from
Theorem 3.1 in \cRef{Maas}] 
\label{thm:canonical_form}
  A Lindbladian $\mcL$ satisfies quantum detailed-balance with
  respect to an invertiable state $\sigma$ and $s\in
  [0,1/2)\cup(1/2,1]$ if and only if it can be written as
  \begin{align} 
  \label{def:canonical_form}
    \mcL(A) = \sum_{\alpha\in I} \gamma_\alpha e^{-\omega_\alpha /2}
      \Big( S_\alpha A S_\alpha^\dagger 
        - \frac{1}{2}\big\{S_\alpha^\dagger S_\alpha, A\big\}\Big)
  \end{align} 
  where $\omega_\alpha$ are Bohr frequencies, the jump operators
  $\{S_\alpha\}$ are taken from a modular basis, and  
  $\{\gamma_\alpha\}$
  are positive weights that satisfy\footnote{Recall that in
  accordance with the modular basis definition (see
  Corollary~\ref{corol:modular-basis}), for every index $\alpha$,
  there exists an index $\alpha'$ such that $S^\dagger_\alpha =
  S_{\alpha'}$ and $\omega_\alpha=-\omega_{\alpha'}$.}
  $\gamma_\alpha=\gamma_{\alpha '}$, and in particular the set of
  indices $I$ contains $\alpha '$ for every $\alpha\in I$.
\end{theorem}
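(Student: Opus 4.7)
The theorem is an equivalence, so I would handle the two directions separately, starting with the easier ``canonical form $\Rightarrow$ QDB'' direction. Here I would plug \eqref{def:canonical_form} directly into the identity $\mcL\circ\Gamma_s = \Gamma_s\circ\mcL^*$. The only algebraic fact required is the Bohr-frequency commutation rule $\sigma^{x}S_\alpha = e^{-x\omega_\alpha}S_\alpha \sigma^{x}$ for every real $x$, which follows from Corollary~\ref{corol:modular-basis}. Using it to push $\sigma^{1-s}$ and $\sigma^s$ past each $S_\alpha$, and then pairing the $\alpha$-term with the $\alpha'$-term via $\gamma_\alpha=\gamma_{\alpha'}$ and $\omega_{\alpha'}=-\omega_\alpha$, both sides collapse to the same expression; the $e^{-\omega_\alpha/2}$ weighting in front of the dissipator is precisely what cancels the residual $s$-dependent exponentials.

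For the harder direction ``QDB $\Rightarrow$ canonical form'', I would start from a generic traceless Lindblad representation (Theorem~\ref{thm:Lin-jump}) and use Theorem~\ref{thm:freedom} to unitarily rotate the jump operators into the traceless modular basis $\{S_\alpha\}_{\alpha\neq 0}$, obtaining
\begin{align*}
  \mcL(A) = -i[H,A] + \sum_{\alpha,\beta\neq 0} c_{\alpha\beta}
    \Big(S_\alpha A S_\beta^\dagger - \tfrac{1}{2}\{S_\beta^\dagger S_\alpha,A\}\Big),
\end{align*}
with $c$ a positive semidefinite Hermitian matrix and $H$ traceless Hermitian. I would then impose QDB via QDB2 and use the remark after Claim~\ref{cl:QDB} that QDB for one $s\in[0,1/2)\cup(1/2,1]$ yields QDB for every $s\in[0,1]$. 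Expanding both sides in the modular basis, the main term $S_\alpha A S_\beta^\dagger$ on the $\mcL\circ\Gamma_s$ side picks up a factor $e^{(1-s)\omega_\alpha + s\omega_\beta}$ while the corresponding term on the $\Gamma_s\circ\mcL^*$ side does not; demanding equality for every $s$ forces $c_{\alpha\beta}=0$ whenever $\omega_\alpha\neq\omega_\beta$. A similar matching of the coherent contribution (most cleanly at $s=0$) forces $[H,\sigma]=0$ and then $H\propto\Id$, hence $H=0$ by tracelessness. Within each degenerate $\mcG_\sigma$-eigenspace, a further unitary rotation of the modular basis (allowed by Corollary~\ref{corol:modular-basis}) diagonalizes the surviving block of $c$, and matching the diagonal entries yields $c_{\alpha\alpha}=\gamma_\alpha e^{-\omega_\alpha/2}$ with the symmetry $\gamma_\alpha=\gamma_{\alpha'}$ emerging from hermiticity preservation combined with $S_{\alpha'}=S_\alpha^\dagger$.

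The main obstacle is the matching step in the hard direction: translating the single operator identity QDB2 into pointwise scalar constraints on every $c_{\alpha\beta}$ and on $H$. What makes this tractable is the availability of a one-parameter family of admissible $s$-values, which converts QDB2 into a family of relations whose $s$-dependence immediately kills both the cross-Bohr-frequency couplings in $c$ and the coherent contribution. Once this block-diagonal structure across distinct Bohr frequencies is in place, the remaining bookkeeping --- rotating within each degenerate eigenspace, extracting the $e^{-\omega_\alpha/2}$ prefactor, checking positivity of the surviving $\gamma_\alpha$, and verifying the $\alpha\leftrightarrow\alpha'$ symmetry --- is notationally heavy but conceptually routine.
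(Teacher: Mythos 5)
The paper does not actually prove Theorem~\ref{thm:canonical_form}: it is imported from Theorem~3.1 of \cRef{Maas} (with the statement adapted), so there is no in-paper proof to compare against. Your reconstruction follows the standard literature argument and is essentially correct in both directions. The easy direction is exactly right: the commutation rules $\sigma^x S_\alpha = e^{-x\omega_\alpha}S_\alpha\sigma^x$ and $\sigma^x S_\alpha^\dagger = e^{x\omega_\alpha}S_\alpha^\dagger\sigma^x$ give $S_\alpha\Gamma_s(A)S_\alpha^\dagger = e^{\omega_\alpha}\Gamma_s(S_\alpha A S_\alpha^\dagger)$ and show that $\{S_\alpha^\dagger S_\alpha,\cdot\}$ commutes with $\Gamma_s$, after which the $\alpha\leftrightarrow\alpha'$ reindexing with $\gamma_\alpha=\gamma_{\alpha'}$ closes the identity (this is the same computation the paper later performs in \Lem{lem:modular-diag}). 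For the hard direction, your plan is the right one, and the coefficient-matching step you identify as the crux is legitimate because the GKS matrix in a fixed traceless orthonormal basis is unique (the paper invokes this as Theorem~2.2 of \cRef{Gorini} in the proof of \autoref{thm:quasi-local}); the matching gives $c_{\alpha\beta}e^{(1-s)\omega_\alpha+s\omega_\beta}=c^*_{\alpha'\beta'}$, and applying it also to the pair $(\beta',\alpha')$ yields $c_{\alpha\beta}e^{(1-2s)(\omega_\alpha-\omega_\beta)}=c_{\alpha\beta}$, so in fact a \emph{single} $s\neq 1/2$ already kills the cross-frequency couplings --- this is precisely why $s=1/2$ is excluded from the hypothesis, and it spares you the imported all-$s$ lemma.

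Two small inaccuracies worth fixing. First, the symmetry $\gamma_\alpha=\gamma_{\alpha'}$ does not come from hermiticity preservation alone: after diagonalization the QDB matching of the diagonal entries reads $d_\alpha e^{\omega_\alpha}=d_{\alpha'}$, and writing $d_\alpha=\gamma_\alpha e^{-\omega_\alpha/2}$ turns this into $\gamma_\alpha=\gamma_{\alpha'}$; hermiticity of $c$ enters only through $c^*_{\alpha'\beta'}=c_{\beta'\alpha'}$. Second, when you rotate within each degenerate $\mcG_\sigma$-eigenspace to diagonalize the surviving block of $c$, you must check that the new basis remains closed under adjoints (property~4 of Corollary~\ref{corol:modular-basis}); this follows from the relation $c_{\alpha\beta}e^{\omega_\alpha}=c_{\beta'\alpha'}$ linking the $\omega$ and $-\omega$ blocks, which guarantees that the diagonalizing unitaries of the two blocks can be chosen as complex conjugates of one another. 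Neither point changes the architecture of your argument.
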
 

{~}

We end this section with a few remarks.
\begin{enumerate}
  \item In the original text of \cRef{Maas}, the formula for $\mcL$
    is given in the Heisenberg picture, which is the formula for
    $\mcL^*$ in our notation. Additionally, the $\gamma_\alpha$
    weights are missing, as they are instead absorbed into the
    $S_\alpha$. The normalization condition $\braket{S_\alpha, S_\beta}
    = \delta_{\alpha\beta}$, however, remained unchanged, which we
    believe is a mistake.
    
  \item In \Eq{def:canonical_form} we sum over strictly 
    positive $\gamma_\alpha$ and ignore the vanishing weights. In
    \cRef{Maas} (Equation 3.3) the authors considered the vanishing
    coefficients as well in the summation.
        
  \item It follows from the canonical representation that if $\mcL$
    satisfies the QDB condition, it is purely dissipative, i.e., its
    Hamiltonian is vanishing. In the literature, QDB condition is often referred to the dissipator part of the Lindbladian.
        
  \item \autoref{thm:canonical_form} applies to any Lindbladian defined 
      on a finite dimensional Hilbert-space regardless of the
      many-body structure of the underlying Hilbert space, which can
      describe spins, fermionic, or hard-core  bosonic systems. 
    
  \item A physical example of a Lindbladian that
    satisfies the QDB condition (for $s=1$) is the Davies generator
    of some Hamiltonian $H$\cc{davies}. It is the principle example of a
    semigroup whose unique fixed point is a Gibbs state (i.e.
    thermal state), and is often referred to as
    a thermal semigroup\cc{logsobolev, kastoryano2016quantum,
    bardet2021rapid}. 
\end{enumerate}

\section{Main Results}
\label{sec:results} 

In this section we give precise statements of our main results about
the mapping of detailed-balanced local Lindbladians to local
Hamiltonians (\autoref{sec:mapping}), and discuss their application to the
complexity of the steady states of these systems
(\autoref{sec:localapp}). The proofs of the main results are given in
\autoref{sec:Proofs}.

\subsection{Mapping detailed-balanced Linbaldians to local Hamiltonians}
\label{sec:mapping}

 We consider a geometrically-local $k$-body Lindbladian $\mcL$ 
defined on a finite $D$ dimensional lattice $\Lambda$ of qudits with
local dimension $d$. We assume that $\mcL$ satisfies the QDB
condition in \Def{def:QDB} with respect to some $s\in [0,1/2)\cup
(1/2,1]$ and a unique, full-rank steady state $\sigma$.
As $\mcL$
is detailed-balanced, it does not have a Hamiltonian part, and therefore
it can be written as 
\begin{align}
\label{eq:Lindbladian2}
  \mcL(\rho) = \sum_j L_j\rho L_j^\dagger 
    - \frac{1}{2}\{L_j^\dagger L_j,\rho\},
\end{align}
with $k$-body traceless jump operators $L_j$. We also assume that
$\mcL$ has a spectral gap $\gamma>0$. Note that since $\mcL$ obeys
detailed-balance, it has a real spectrum, and $\gap(\mcL)=\gamma$.

Our goal is to map $\mcL$ and its steady state $\sigma$ to a local
Hamiltonian problem, where they can be analyzed using plethora of
well-established tools\cc{knabe, ref:Hastings2007-1DAL,
ref:Arad2013-1DAL, ref:Landau2015-1Dalg, hastings2021gapped}. This
is achieved by mapping the Lindbladian to the super-operator
\begin{align}
\label{def:super-H}
  \mcH \EqDef -\Gamma_s^{-1/2}\circ \mcL \circ \Gamma_s^{1/2} ,
\end{align}
which is self-adjoint according to Claim~\ref{cl:QDB} (QDB2). We
shall refer to $\mcH$ as the \emph{super-Hamiltonian}, and study its
vectorization as a local Hamiltonian\footnote{To obtain a physical
interpretation for the super Hamiltonian, one can relate its
imaginary time evolution to the Lindblad evolution by similarity
transformation, namely, $e^{-\mcH t}= \Gamma_s^{-1/2}\circ e^{\mcL
t} \circ \Gamma_s^{1/2}$.}. 

Since $\mcH$ and $\mcL$ are related by a similarity transformation,
they share the same spectrum (up to an overall global minus sign),
and therefore also $\mcH$ has a spectral gap $\gamma>0$. Moreover,
it is easy to see that an eigenoperator $A$ of $\mcL$ maps to an
eigenoperator $\Gamma^{-1/2}_s(A) = \sigma^{-(1-s)/2}A\sigma^{-s/2}$
of $\mcH$, and in particular the steady state $\sigma$ maps to
$\sqrt{\sigma}$ which is in the kernel of $\mcH$.  We remark that
the superoperator $\Gamma_s^{-1/2}\circ \mcL \circ \Gamma_s^{1/2}$
and its steady state $\sqrt{\sigma}$ were already studied in
\cRef{chi,DOC,alicki2009} as tools for relating the $\chi^2$ decay
constant to the spectral gap of a QDB Lindbladian, and in
\cRef{kastoryano2016quantum} for demonstrating strong clustering of
information using the detectability lemma. While the locality
of the super-Hamiltonian was not addressed in \cRef{chi,DOC}, it was
used in \cRef{kastoryano2016quantum, alicki2009}.  However, in these
papers, locality was automatically achieved since the steady state
was a Gibbs state of a commuting local Hamiltonian. Our work is
devoted to proving this in the general case.

We are left with the task of showing that $\mcH$ is geometrically
local, or at least local with decaying interactions. We will prove
this under two possible assumptions. In \autoref{thm:local} we
assume a certain linear-independence conditions on the jump
operators, and consequently find that $\mcH$ is a $k$-body
geometrically local. In \autoref{thm:quasi-local} we expand the jump
operators in terms of a local orthonormal basis, and show that if
the coefficient matrix in that basis is gapped, then $\mcH$ is
$2k$-body local with coefficients that decay exponentially with the
lattice distance.

\begin{theorem}
\label{thm:local}
  Let $\{L_j\}$ be a set of linearly independent and normalized jump
  operators such that for every index $j$, there exists an index
  $\pi(j)$ such that $L_j^\dagger = L_{\pi(j)}$. Assume that $\mcL$
  satisfies the QDB condition (\Def{def:QDB}) for $s\in
  [0,1/2)\cup(1/2,1]$ and is given by
  \begin{align}
  \label{eq:local-indp-L}
    \mcL(\rho) = \sum_j c_j \big(L_j\rho L_j^\dagger 
      - \frac{1}{2}\{ L_j^\dagger L_j, \rho\}\big)
  \end{align}
  for some positive coefficients $c_j>0$. Then the super-Hamiltonian
  \eqref{def:super-H} is given by
  \begin{align} 
  \label{eq:super-H-1}
    \mcH(\rho) = -\sum_j\big(\sqrt{c_j c_{\pi(j)}} 
      L_j\rho L_j^\dagger  
        - \frac{c_j}{2}\{L_j^\dagger L_j,\rho\}\big) .
  \end{align}
  Consequently, if $\mcL$ is $k$-body and geometrically local, then
  so is $\mcH$.
\end{theorem}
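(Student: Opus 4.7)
The plan is to compute $\mcH=-\Gamma_s^{-1/2}\circ\mcL\circ\Gamma_s^{1/2}$ term by term, exploiting the structure that QDB imprints on the jump operators. The key reduction is to show that, under the theorem's hypotheses, each $L_j$ is a modular eigenoperator: $\sigma^a L_j=e^{-a\omega_j}L_j\sigma^a$ for some real Bohr frequency $\omega_j$, with $\omega_{\pi(j)}=-\omega_j$. Once this is in place, the similarity transformation collapses to scalar factors and the formula for $\mcH$ follows by direct computation.

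I would establish the eigenoperator property by comparing the given representation of $\mcL$ to its canonical form, guaranteed by \autoref{thm:canonical_form}. By \autoref{thm:freedom}, the traceless jump operators $\{\sqrt{c_j}\,L_j\}$ and the canonical operators $\{\sqrt{\gamma_\alpha e^{-\omega_\alpha/2}}\,S_\alpha\}$ are linked by a unitary $U$. Writing out this relation, taking its adjoint, and matching against the two involutions $S_\alpha^\dagger=S_{\alpha'}$ and $L_j^\dagger=L_{\pi(j)}$, together with the linear independence of $\{L_j\}$, forces $U$ to be block diagonal with respect to the Bohr-frequency decomposition of $\mcG_\sigma$. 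Block-diagonality is precisely the statement that each $L_j$ has a definite Bohr frequency, and the involutions yield $\omega_{\pi(j)}=-\omega_j$.

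With eigenoperator-ness in hand, the dressings reduce to scalars: $\sigma^{-(1-s)/2}L_j\sigma^{(1-s)/2}=e^{(1-s)\omega_j/2}L_j$ and $\sigma^{s/2}L_j^\dagger\sigma^{-s/2}=e^{s\omega_j/2}L_j^\dagger$, whose product contributes $e^{\omega_j/2}\,L_j\,(\cdot)\,L_j^\dagger$. Because $L_j^\dagger L_j$ has zero net Bohr frequency, it commutes with every $\sigma^a$, so the anticommutator term is unchanged. Summing over $j$ yields
\begin{align*}
  \mcH(\rho) = -\sum_j c_j\,e^{\omega_j/2}\,L_j\rho L_j^\dagger
    + \tfrac{1}{2}\sum_j c_j\,\{L_j^\dagger L_j,\rho\}.
\end{align*}

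It remains to show $c_j\,e^{\omega_j/2}=\sqrt{c_j c_{\pi(j)}}$, equivalently $c_j e^{\omega_j}=c_{\pi(j)}$. I would extract this identity from the QDB condition $\mcL\circ\Gamma_s=\Gamma_s\circ\mcL^*$: plugging in the given form on both sides, applying the eigenoperator property to simplify the LHS, and reindexing the RHS via $L_j^\dagger=L_{\pi(j)}$, one arrives at $\sum_j(c_j e^{\omega_j}-c_{\pi(j)})\,L_j A L_j^\dagger=0$ for every $A\in L(\BBH)$. The linear independence of the super-operators $A\mapsto L_j A L_j^\dagger$ (inherited from the linear independence of the $L_j$'s) forces each coefficient to vanish. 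Substituting yields the claimed expression for $\mcH$, and the geometric-locality statement is immediate since $\mcH$ is built from the same $L_j$'s as $\mcL$, with only scalar coefficients modified. The main obstacle is the first step: extracting eigenoperator-ness of the $L_j$'s by carefully exploiting the interplay between the intertwining unitary of \autoref{thm:freedom} and the two involutions.
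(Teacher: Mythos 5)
Your plan reaches the correct formula and rests on the same ingredients as the paper's proof (the canonical form, \autoref{thm:freedom}, conjugation of modular eigenoperators by powers of $\sigma$, the QDB identity at the ``$x=1$'' point, and linear independence), but the step you yourself flag as the main obstacle is under-argued as stated. Taking the adjoint of $\sqrt{\gamma_\alpha}e^{-\omega_\alpha/4}S_\alpha=\sum_jU_{\alpha j}\sqrt{c_j}L_j$ and matching the two involutions only yields a row-wise identity of the form $U_{\alpha' k}\sqrt{c_k}=e^{\omega_\alpha/2}\,U_{\alpha\,\pi(k)}^{*}\sqrt{c_{\pi(k)}}$; by itself, even together with linear independence of the $\{L_j\}$, this does not force block-diagonality of $U$. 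What actually makes the step work is the \emph{unitarity} of $U$: the rows indexed by $\pi(j)$ are obtained from the rows indexed by $j$ by complex conjugation, the permutation $\alpha\mapsto\alpha'$, and multiplication by the positive diagonal matrix $\diag(\{e^{-\omega_\alpha/2}\})$, and demanding that \emph{both} families of rows be mutually orthogonal forces each row to be an eigenvector of $\diag(\{e^{\omega_\alpha}\})$ --- which is precisely your eigenoperator statement. So the claim is true and provable, but you would need to supply this orthogonality argument; the adjoint-matching alone is not enough.

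The paper short-circuits this entirely. Writing $\mcL_x=\Gamma_s^{-x}\circ\mcL\circ\Gamma_s^{x}$ in the $L_j$ basis via \Lem{lem:modular-diag} gives $\mcL_x(A)=\sum_{j,j'}B^x_{j'j}\sqrt{c_jc_{j'}}\,L_jAL_{j'}^\dagger-\tfrac12\sum_jc_j\{L_j^\dagger L_j,A\}$ with $B=U^\dagger\diag(\{e^{\omega_\alpha}\})U$. The single identity $\mcL_1=\mcL^*$ --- which you invoke only at the very end to fix the scalars --- combined with $L_j^\dagger=L_{\pi(j)}$ and linear independence already forces $B_{j'j}\sqrt{c_jc_{j'}}=c_{\pi(j)}\delta_{jj'}$. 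This delivers in one stroke both the diagonality (equivalent to your eigenoperator property, since $B$ diagonal means each column of $U$ is an eigenvector of $\diag(\{e^{\omega_\alpha}\})$) and the eigenvalue relation $e^{\omega_j}=c_{\pi(j)}/c_j$; then $B^{1/2}$ is diagonal and setting $x=1/2$ gives the theorem. In effect you are doing the hard step twice, and the first pass is the one that is not yet a proof; I would reorganize the argument so that the comparison of $\mcL_1$ with $\mcL^*$ does all the work.
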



To state the second theorem, we first expand our jump operators in
\Eq{eq:Lindbladian2} in terms of an orthonormal operators basis
$\{P_a\}$, which we assume to be Hermitian and $k$-body
geometrically local: 
\begin{align}
\label{eq:P-expansion}
  L_j \EqDef \sum_a R_{aj} P_a .
\end{align}
For example, if the local Hilbert dimension is $d=2$, we can take
$\{P_a\}$ to be products of $k$ Paulis on neighboring sites.
Substituting \Eq{eq:P-expansion} in \Eq{eq:Lindbladian2} yields
\begin{align}
  \label{eq:LP}
  \mcL(\rho) = \sum_{a,b} C_{ab} \big(P_a \rho P_b 
    -\frac{1}{2}\{P_bP_a, \rho\}\big) ,
\end{align}
where $C_{ab} = \sum_j R_{aj} R^*_{bj}$ is a positive
semi-definite (and hence Hermitian) matrix. With this notation, we
have
\begin{theorem} 
\label{thm:quasi-local}
  Let $\mcL$ be a Lindbladinan given by \Eq{eq:LP} with $\{P_a\}$
  being a set of orthonormal hermitian operators, and assume that
  $\mcL$ satisfies the QDB condition (\Def{def:QDB}) with respect to
  $s\in[0,1/2)\cup(1/2,1]$.  Then  the
  super-Hamiltonian \eqref{def:super-H} is given by
  \begin{align}
  \label{eq:H-in-P}
      \mcH(\rho) = -\sum_{a,b}\left( (C\cdot C^*)^{1/2}_{ab}\,
        P_a \rho P_b - \frac{1}{2} C_{ab} \{P_bP_a, \rho\}\right) 
  \end{align}
where $C^*$ is the complex conjugate of $C$.
\end{theorem}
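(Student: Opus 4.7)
The plan is to route the computation through Theorem~\ref{thm:canonical_form} (the canonical form for QDB Lindbladians), which diagonalizes $\mcL$ in a \emph{modular basis} $\{S_\alpha\}$; the passage from there to the local basis $\{P_a\}$ is then a unitary change of variables. Concretely, I would first invoke the canonical form to write
\begin{align*}
  \mcL(A)=\sum_\alpha \gamma_\alpha e^{-\omega_\alpha/2}\bigl(S_\alpha A S_\alpha^\dagger - \tfrac{1}{2}\{S_\alpha^\dagger S_\alpha,A\}\bigr),
\end{align*}
with positive $\gamma_\alpha$ and Bohr frequencies $\omega_\alpha$ obeying the symmetries $\gamma_{\pi(\alpha)}=\gamma_\alpha$, $\omega_{\pi(\alpha)}=-\omega_\alpha$, $S_\alpha^\dagger=S_{\pi(\alpha)}$ for the Hermiticity involution $\pi$ of Corollary~\ref{corol:modular-basis}.

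Next I would apply $\Gamma_s^{\pm 1/2}$ directly in that basis. Since $S_\alpha$ is an eigenoperator of $\mcG_\sigma$ with Bohr frequency $\omega_\alpha$, one has $\sigma^x S_\alpha\sigma^{-x}=e^{-x\omega_\alpha}S_\alpha$, which produces a factor $e^{\omega_\alpha/2}$ around $S_\alpha\,\cdot\, S_\alpha^\dagger$ that exactly cancels the canonical weight $e^{-\omega_\alpha/2}$; meanwhile $S_\alpha^\dagger S_\alpha$ has Bohr frequency $0$, hence commutes with $\sigma$, so the anticommutator piece is unaffected. This yields
\begin{align*}
  \mcH(\rho) = -\sum_\alpha \Bigl(\gamma_\alpha\, S_\alpha\rho S_\alpha^\dagger - \tfrac{1}{2}\gamma_\alpha e^{-\omega_\alpha/2}\{S_\alpha^\dagger S_\alpha,\rho\}\Bigr).
\end{align*}
Expanding $S_\alpha=\sum_a U_{a\alpha}P_a$ via a unitary $U$ (augmenting $\{P_a\}$ to a full orthonormal Hermitian basis of $L(\BBH)$ if needed, so that any extra $C$-entries vanish) and collecting coefficients of $P_a\rho P_b$ and of $\{P_bP_a,\rho\}$ in $\mcL$ and $\mcH$ respectively produces
\begin{align*}
  C = U\,\diag\bigl(\gamma_\alpha e^{-\omega_\alpha/2}\bigr)\,U^\dagger, \qquad M := U\,\diag(\gamma_\alpha)\,U^\dagger,
\end{align*}
where $M_{ab}$ is exactly the desired coefficient of $-P_a\rho P_b$ in $\mcH$ and $C_{ab}$ is exactly the coefficient of $-\tfrac{1}{2}\{P_bP_a,\rho\}$, as in~\eqref{eq:LP}. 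The theorem therefore reduces to the matrix identity $M=(C\cdot C^*)^{1/2}$.

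The only subtle step, and where I expect the main obstacle to hide, is this final identification. Its key input is the Hermiticity symmetry of the modular basis: $S_\alpha^\dagger=S_{\pi(\alpha)}$ combined with the Hermiticity of each $P_a$ forces $U^*_{a\alpha}=U_{a,\pi(\alpha)}$. Plugging this into $C^*_{ab}=\sum_\alpha \gamma_\alpha e^{-\omega_\alpha/2}U^*_{a\alpha}U_{b\alpha}$ and changing the summation variable to $\beta=\pi(\alpha)$, together with $\gamma_{\pi(\alpha)}=\gamma_\alpha$ and $\omega_{\pi(\alpha)}=-\omega_\alpha$, flips the exponential to $e^{+\omega_\beta/2}$ and gives $C^*=U\,\diag\bigl(\gamma_\alpha e^{\omega_\alpha/2}\bigr)\,U^\dagger$. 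Multiplying $C\cdot C^*$ the exponentials cancel, leaving $U\,\diag(\gamma_\alpha^2)\,U^\dagger$, whose unique positive-semidefinite square root is $U\,\diag(\gamma_\alpha)\,U^\dagger = M$. Everything else is routine index bookkeeping.
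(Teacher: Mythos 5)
Your proposal is correct, and its skeleton is the same as the paper's: invoke the canonical form of Theorem~\ref{thm:canonical_form}, conjugate by $\Gamma_s^{\pm 1/2}$ in the modular basis (your cancellation $e^{-\omega_\alpha/2}\cdot e^{\omega_\alpha/2}=1$ is exactly Lemma~\ref{lem:modular-diag} evaluated at $x=1/2$), and pass to the local basis via $S_\alpha=\sum_a U_{a\alpha}P_a$ so that $C=U\,\diag(\gamma_\alpha e^{-\omega_\alpha/2})\,U^\dagger$ and the sought coefficient matrix is $M=U\,\diag(\gamma_\alpha)\,U^\dagger$. The one genuine divergence is the step you correctly flag as the crux, namely identifying $C^{*}=U\,\diag(\gamma_\alpha e^{+\omega_\alpha/2})\,U^\dagger$: the paper gets this by invoking the QDB condition a second time, writing $\mcL_1=\Gamma_s^{-1}\circ\mcL\circ\Gamma_s=\mcL^{*}$ and matching coefficients in the $\{P_a\}$ basis, whereas you extract it from the reality structure of the modular basis, $S_\alpha^\dagger=S_{\pi(\alpha)}$ forcing $U^{*}_{a\alpha}=U_{a,\pi(\alpha)}$, combined with $\gamma_{\pi(\alpha)}=\gamma_\alpha$ and $\omega_{\pi(\alpha)}=-\omega_\alpha$. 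Both are valid; your variant has the small advantage of making $[C,C^{*}]=0$ (Corollary~\ref{corol:necessary-cond-QDB}) manifest as a consequence of the canonical form plus the Hermiticity of the expansion basis alone, while the paper's variant reads it off from detailed balance directly and also yields the interpolating family $C(x)=C^{1-x}\cdot (C^{*})^{x}$ for all $x$, which is what it reuses elsewhere. Your bookkeeping caveats (padding $U$ to a unitary by assigning $\gamma_\alpha=0$ to the unused modular directions, and the uniqueness of the coefficient matrix once the orthonormal basis is fixed, which the paper justifies by citing Theorem~2.2 of~\cite{Gorini}) are the right ones and do not hide any gap.
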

We note as a side-product of the proof, $C, C^*$ commute, and as
they are both non-negative matrices, the sqaure root $(C\cdot
C^*)^{1/2}$ is well-defined.  As a corollary, we achieve a
sufficient condition for quantum detailed balance, which can be
easily checked with a polynomial computation \emph{without
the knowledge of the steady state $\sigma$}. Let us formally state
this result: 
\begin{corol}[A necessary condition for quantum detailed-balance]
\label{corol:necessary-cond-QDB}
    Let $\mcL$ be a Lindbladinan given by \Eq{eq:LP} with $\{P_a\}$
  being a set of orthonormal hermitian operators, and assume that
  $\mcL$ satisfies the QDB condition (\Def{def:QDB}) with respect to
  $s\in[0,1/2)\cup(1/2,1]$.  Then the matrix $C$ from \Eq{eq:LP}
  commutes with its complex conjugate $C^*$.
\end{corol}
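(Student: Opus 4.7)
The plan is to derive $[C, C^*] = 0$ as an intermediate step on the way to the explicit formula (\ref{eq:H-in-P}) of Theorem~\ref{thm:quasi-local}. Indeed, since $C$ is Hermitian (as the Gram matrix $C_{ab} = \sum_j R_{aj} R_{bj}^*$), the Hermiticity of $CC^*$ --- which is required for $(CC^*)^{1/2}$ to be a legitimate positive square root --- is equivalent to $CC^* = C^* C$. Thus the corollary essentially says that the formula in the theorem is well-posed.

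My approach goes through the canonical form. By Theorem~\ref{thm:canonical_form}, $\mcL$ admits a representation with jump operators $M_\alpha = \sqrt{\gamma_\alpha e^{-\omega_\alpha/2}}\, S_\alpha$ in an orthonormal modular basis $\{S_\alpha\}$ obeying $S_\alpha^\dagger = S_{\alpha'}$, $\gamma_\alpha = \gamma_{\alpha'}$, and $\omega_{\alpha'} = -\omega_\alpha$. By Theorem~\ref{thm:freedom}, the $\{M_\alpha\}$ and $\{L_j\}$ are related by a unitary, and a short calculation shows that the Gram matrix $C$ is invariant under this freedom, so I may compute $C$ from the $M_\alpha$ instead. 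Extending $\{P_a\}$ if necessary to a full orthonormal Hermitian basis of $L(\BBH)$ --- which merely pads $C$ with zero rows and columns and preserves commutativity block-wise --- the change-of-basis matrix $V_{a\alpha} \EqDef \braket{P_a, S_\alpha}$ becomes unitary and $S_\alpha = \sum_a V_{a\alpha} P_a$. Substituting into the canonical form and matching to (\ref{eq:LP}) then yields
\begin{align*}
  C_{ab} = \sum_\alpha \gamma_\alpha e^{-\omega_\alpha/2}\, V_{a\alpha}\, \overline{V_{b\alpha}},
\end{align*}
that is, $C = V D V^\dagger$ with $D = \diag(\gamma_\alpha e^{-\omega_\alpha/2})$.

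The key algebraic identity is $\overline{V_{a\alpha}} = V_{a\alpha'}$, which follows from $\overline{\Tr(P_a S_\alpha)} = \Tr(S_\alpha^\dagger P_a) = \Tr(P_a S_{\alpha'})$ using hermiticity of $P_a$ and $S_\alpha^\dagger = S_{\alpha'}$. Combining this with the involution $\alpha \leftrightarrow \alpha'$ and the pairings $\gamma_{\alpha'} = \gamma_\alpha$, $\omega_{\alpha'} = -\omega_\alpha$, a re-indexing of the sum produces $C^* = V D' V^\dagger$ with $D' = \diag(\gamma_\alpha e^{+\omega_\alpha/2})$. Since $D$ and $D'$ are diagonal (hence mutually commute) and $V$ is unitary, I conclude $CC^* = V D D' V^\dagger = V D' D V^\dagger = C^* C$. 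The main subtlety I anticipate is the careful tracking of the involution $\alpha \mapsto \alpha'$ through complex conjugation and the Bohr-frequency pairings, together with verifying that the WLOG extension of $\{P_a\}$ and the invariance of $C$ under the jump-operator freedom of Theorem~\ref{thm:freedom} are both innocuous; everything else is routine bookkeeping.
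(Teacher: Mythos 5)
Your proof is correct, and it reaches the conclusion by a genuinely different mechanism than the paper, even though both arguments share the same starting point. Like the paper, you use \autoref{thm:canonical_form} and \autoref{thm:freedom} to write $C=VDV^\dagger$ with $V_{a\alpha}=\braket{P_a,S_\alpha}$ and $D=\diag(\gamma_\alpha e^{-\omega_\alpha/2})$. Where you diverge is in identifying $C^*$: the paper introduces the one-parameter family $\mcL_x=\Gamma_s^{-x}\circ\mcL\circ\Gamma_s^x$, applies \Lem{lem:modular-diag} to get $C(x)=V\diag(\gamma_\alpha e^{\omega_\alpha(x-1/2)})V^\dagger$, and reads off $C(1)=C^*$ from the QDB identity $\mcL_1=\mcL^*$; commutativity then follows because all the $C(x)$ are simultaneously diagonalized. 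You instead obtain $C^*=V\diag(\gamma_\alpha e^{+\omega_\alpha/2})V^\dagger$ by pure algebra --- entrywise conjugation of $C=VDV^\dagger$ combined with the involution $\overline{V_{a\alpha}}=V_{a\alpha'}$ (which follows from hermiticity of $P_a$ and $S_\alpha^\dagger=S_{\alpha'}$) and the pairings $\gamma_{\alpha'}=\gamma_\alpha$, $\omega_{\alpha'}=-\omega_\alpha$. The two routes land on the same simultaneous diagonalization, so they are consistent, but yours isolates the corollary from the machinery of \Lem{lem:modular-diag} and the conjugated family, making it a self-contained statement about the structure of the canonical form; the paper's route gets the result for free along the way to the full formula \eqref{eq:H-in-P}, which it needs anyway. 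Two bookkeeping points you flag are indeed the only delicate spots, and both check out: the Gram matrix is invariant under the unitary freedom of \autoref{thm:freedom} (if $R'=RW^T$ with $W$ unitary then $R'R'^\dagger=RR^\dagger$), and the extension to a square unitary $V$ requires padding \emph{both} index sets --- extending $\{P_a\}$ to a full Hermitian orthonormal basis \emph{and} letting $\alpha$ run over the full modular basis with $\gamma_\alpha=0$ outside $I$ --- after which the involution argument still goes through since $\gamma_\alpha=0$ forces $\gamma_{\alpha'}=0$. The paper is no more careful than you are on this last point.
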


Assuming $\mcL$ is $k$-body and geometrically local, the matrix $C$
is sparse: $C_{ab}\neq 0$ only for $a,b$ for which $P_a$ and $P_b$
appear in the expansion of the same jump operator $L_j$, and
therefore their joint support is at most gemotrically $k$-local.
This means that the $\sum_{a,b} C_{ab}\{P_bP_a, \rho\}$ term in
\Eq{eq:H-in-P} is geometrically-local $k$-body. However, the
geometrical locality of the first term $\sum_{a,b}(C\cdot
C^*)^{1/2}_{ab}\, P_a \rho P_b$ is not so clear, as it might mix
$P_a$ and $P_b$ with distant supports (which might lead to a
long-range $\mcH$). In the following lemma, we show that as long as
the smallest non-zero eigenvalue of $C\cdot C^*$ is $\Omega(1)$, the
coefficient $(C\cdot C^*)^{1/2}_{ab}$ decays exponentially with the
distance between the support of $P_a, P_b$, making the terms of
$\mcH$ decay exponentially with distance.

\begin{lemma}
\label{lem:exp-decay-matrix}
  Let $\lambda_{\mathrm{min}}>0$ be the smallest non-zero eigenvalue of
  $C\cdot C^*$, and assume that $|C_{ab}|\le J$ for every $a,b$. 
  Finally, let $|a-b|$ denote the lattice distance between the
  supports of $P_a, P_b$. Then
  \begin{align*}
      |(C\cdot C^*)^{1/2}_{ab}| 
        \le c_1 J e^{-c_2|a-b|\cdot\lambda_{\mathrm{min}}/J^2}, 
  \end{align*}
  where $c_1, c_2$ are constants that depend only on the geometry of
  the lattice and $k$.
\end{lemma}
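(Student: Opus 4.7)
The plan is to approximate $M\EqDef C\cdot C^*$ by a polynomial in $M$ and exploit the fact that $M$ is banded on the lattice. The key elementary observation is that a degree-$n$ polynomial in a matrix of lattice bandwidth $r$ is itself banded with bandwidth $nr$, so whenever $|a-b|>nr$ the entry $p_n(M)_{ab}$ vanishes and $|(M^{1/2})_{ab}|\le\|M^{1/2}-p_n(M)\|$. Geometric locality supplies this bandedness: because each jump operator $L_j=\sum_a R_{aj}P_a$ is supported on a cluster of constant diameter, $C_{ab}=\sum_j R_{aj}R_{bj}^{*}$ is nonzero only when $P_a$ and $P_b$ share such a cluster, and hence $C_{ab}=0$ whenever $|a-b|>r_0$ for some constant $r_0$ depending only on $k$ and the lattice. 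Consequently $M$ has bandwidth at most $2r_0$ and any degree-$n$ polynomial in $M$ has bandwidth at most $2nr_0$. In addition, $|C_{ab}|\le J$ together with the bounded number of nonzero entries per row and column of $C$ (again from geometric locality) gives $\|C\|=O(J)$ via the Schur test, so $\|M\|\le\|C\|^2=O(J^2)$, and the spectrum of the positive semidefinite matrix $M$ lies in $\{0\}\cup[\lambda_{\mathrm{min}},\|M\|]$.

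The crux is to build a polynomial $p_n$ of degree $n$ with $p_n(0)=0$ that approximates $\sqrt{x}$ uniformly on $[\lambda_{\mathrm{min}},\|M\|]$. The constraint $p_n(0)=\sqrt{0}=0$ is the main technical obstacle and is forced by the possible kernel of $M$: without it the operator-norm error would pick up a large contribution from the zero eigenvalue, yet $\sqrt{x}$ is non-analytic at the origin, so a direct Taylor expansion of $\sqrt{x}$ cannot be made to vanish at $0$. I would bypass this by writing $p_n(x)=x\,q_{n-1}(x)$, where $q_{n-1}$ is a polynomial approximation of $1/\sqrt{x}$ on $[\lambda_{\mathrm{min}},\|M\|]$. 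Since $1/\sqrt{x}$ is analytic on this interval, truncating its binomial expansion centred at $x=\|M\|$ (or, equivalently, applying a Chebyshev-type approximation) yields
\begin{equation*}
  \sup_{x\in[\lambda_{\mathrm{min}},\|M\|]}\bigl|q_{n-1}(x)-x^{-1/2}\bigr|
   \;\le\; \frac{C_1\,\|M\|^{1/2}}{\lambda_{\mathrm{min}}}\,
     \exp\!\Bigl(-\frac{c\,n\,\lambda_{\mathrm{min}}}{\|M\|}\Bigr),
\end{equation*}
and hence $|p_n(x)-\sqrt{x}|\le (C_1\|M\|^{3/2}/\lambda_{\mathrm{min}})\exp(-cn\lambda_{\mathrm{min}}/\|M\|)$ on $[\lambda_{\mathrm{min}},\|M\|]$, while $p_n(0)=0$ by construction.

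To finish, functional calculus on the Hermitian matrix $M$ converts this pointwise bound into the operator-norm estimate $\|M^{1/2}-p_n(M)\|\le\sup_{x\in\sigma(M)}|p_n(x)-\sqrt{x}|$, and because $p_n$ and the square root agree at $0$ this supremum is controlled by the expression above. Choosing $n=\lfloor|a-b|/(2r_0)\rfloor$ forces $p_n(M)_{ab}=0$, so $|(M^{1/2})_{ab}|\le\|M^{1/2}-p_n(M)\|$. Substituting $\|M\|=O(J^2)$ produces a bound of the form $(J^3/\lambda_{\mathrm{min}})\exp(-c_2'|a-b|\lambda_{\mathrm{min}}/J^2)$; the $J^2/\lambda_{\mathrm{min}}$ prefactor in front of $J$ is absorbed into the exponential at the cost of slightly shrinking $c_2'$, and the small values of $|a-b|$ for which this absorption would fail are handled by the trivial bound $|(M^{1/2})_{ab}|\le\|M^{1/2}\|=O(J)$. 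This yields the claimed estimate $c_1 J\,\exp(-c_2|a-b|\lambda_{\mathrm{min}}/J^2)$.
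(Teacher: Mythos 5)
Your proposal is correct and follows essentially the same route as the paper's proof in Appendix~A: exploit the bandedness of $C$ inherited from geometric locality, approximate $\sqrt{z}$ by $z$ times a truncated binomial expansion of $1/\sqrt{z}$ about $\lambda_{\mathrm{max}}$ so that the approximant vanishes at the origin, and convert the uniform error on $\{0\}\cup[\lambda_{\mathrm{min}},\lambda_{\mathrm{max}}]$ into an entrywise bound using the fact that a degree-$m$ polynomial of a banded matrix has zero entries beyond bandwidth $O(mk)$, together with $\lambda_{\mathrm{max}}\le \eta J^2$ from sparsity. The one place you are slightly lossier is cosmetic but worth noting: the paper multiplies the pointwise remainder bound $|R_m(z)|\le(\sqrt{\lambda_{\mathrm{max}}}/z)(1-z/\lambda_{\mathrm{max}})^m$ by $z$ \emph{before} taking the supremum, so the $1/z$ cancels and the prefactor is $\sqrt{\lambda_{\mathrm{max}}}=O(J)$ outright, whereas your sup-then-multiply order leaves an extra $J^2/\lambda_{\mathrm{min}}$ whose absorption into the exponential makes $c_1$ depend (mildly) on $J^2/\lambda_{\mathrm{min}}$ rather than only on the lattice geometry and $k$ as the lemma states.
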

The proof of this lemma is given in \autoref{App:A}.

\begin{corol} 
\label{cor:quasi-local} 
  Let $\mcL$ be a geometrically local $k$-body Lindbladian given by
  \Eq{eq:LP} that satisfies the QDB condition with respect to a full
  rank state and $s\in [0,1/2)\cup(1/2,1]$. Let
  $\lambda_{\mathrm{min}}$ be the smallest non-zero eigenvalue of
  $C\cdot C^*$, and assume that $|C_{ab}|\le J$ for all $(a,b)$
  pairs. Then the super-Hamiltonian in~\autoref{thm:quasi-local} is
  of the form 
  \begin{align*}
    \mcH = \sum_{a,b}\mcH_{ab},
  \end{align*}
  where $\mcH_{ab}$ is a $2k$-local super operator that acts non
  trivially on $\supp(P_a)\cup\supp(P_b)$, with an exponentially
  decaying interaction strength $\norm{\mcH_{ab}} =
  Je^{-O(|a-b|\lambda_{\mathrm{min}}/J^2)}$.
\end{corol}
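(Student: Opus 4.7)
The plan is to read off the decomposition directly from the explicit formula for $\mcH$ given by \autoref{thm:quasi-local}, and then apply \Lem{lem:exp-decay-matrix} termwise. Concretely, starting from
\begin{align*}
  \mcH(\rho) = -\sum_{a,b}\left( (C\cdot C^*)^{1/2}_{ab}\,
    P_a \rho P_b - \tfrac{1}{2} C_{ab} \{P_bP_a, \rho\}\right),
\end{align*}
I would define, for each ordered pair $(a,b)$,
\begin{align*}
  \mcH_{ab}(\rho) \EqDef -(C\cdot C^*)^{1/2}_{ab}\, P_a \rho P_b
    + \tfrac{1}{2} C_{ab}\{P_b P_a,\rho\}.
\end{align*}
Each $\mcH_{ab}$ only multiplies $\rho$ by operators supported on $\supp(P_a)\cup \supp(P_b)$, and since each $P_a$ is by assumption $k$-body geometrically local, the union has at most $2k$ sites, giving the claimed $2k$-locality.

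The next step is to bound $\norm{\mcH_{ab}}$ (as a super-operator on $L(\BBH)$ in operator norm) term by term. The operator norm of each $P_a$ is bounded by a constant $K=K(d,k)$ depending only on the local dimension and $k$, because $P_a$ is supported on at most $k$ sites and is normalized. Hence the super-operator $\rho\mapsto P_a\rho P_b$ and the anticommutator super-operator both have operator norm at most $K^2$, so
\begin{align*}
  \norm{\mcH_{ab}}\le K^2 \, |(C\cdot C^*)^{1/2}_{ab}| + K^2\, |C_{ab}|.
\end{align*}
For the first piece, \Lem{lem:exp-decay-matrix} directly gives $|(C\cdot C^*)^{1/2}_{ab}|\le c_1 J e^{-c_2|a-b|\lambda_{\min}/J^2}$.

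The key observation for the second piece is that $C_{ab}\neq 0$ only when $P_a$ and $P_b$ both appear in the expansion \Eq{eq:P-expansion} of some common jump operator $L_j$. Since the Lindbladian is geometrically $k$-body local, $\supp(L_j)$ consists of at most $k$ neighbouring sites, so both $P_a$ and $P_b$ are supported inside a common ball of radius $O(k)$. Consequently there is a constant $R=R(k)$ such that $C_{ab}=0$ whenever $|a-b|>R$, and the anticommutator contribution is either zero (for $|a-b|>R$) or bounded by $K^2 J$ (for $|a-b|\le R$), which in both cases is dominated by $J e^{-\Omega(|a-b|\lambda_{\min}/J^2)}$ after adjusting constants. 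Combining the two bounds then yields $\norm{\mcH_{ab}}=Je^{-O(|a-b|\lambda_{\min}/J^2)}$, and summing over $(a,b)$ recovers $\mcH=\sum_{a,b}\mcH_{ab}$.

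I do not expect any serious obstacles: \autoref{thm:quasi-local} already supplies the explicit form of $\mcH$, and \Lem{lem:exp-decay-matrix} does the analytic heavy lifting. The only small subtlety to keep track of is the cutoff argument for the anticommutator term, which requires noting that sparsity of $C$ follows from the geometric locality of $\mcL$ (in contrast to the first term, whose sparsity is \emph{not} preserved by taking a matrix square root, which is precisely why \Lem{lem:exp-decay-matrix} is needed).
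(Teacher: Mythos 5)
Your proposal is correct and follows essentially the same route the paper takes: the paper states this corollary without a separate proof, relying on exactly the termwise decomposition of \Eq{eq:H-in-P}, the sparsity of $C$ (nonzero only for $|a-b|\le k$) for the anticommutator part, and \Lem{lem:exp-decay-matrix} for the $(C\cdot C^*)^{1/2}$ part. The only detail worth noting is that absorbing the short-range anticommutator term into the bound $Je^{-O(|a-b|\lambda_{\mathrm{min}}/J^2)}$ implicitly uses $\lambda_{\mathrm{min}}/J^2=O(1)$, which holds because $\lambda_{\mathrm{min}}\le\lambda_{\mathrm{max}}=\norm{C\cdot C^*}\le \eta J^2$ by the sparsity argument at the end of \autoref{App:A}.
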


We conclude this section with two remarks:
\begin{enumerate}

  \item Related to Corollary~\ref{corol:necessary-cond-QDB}, also the
    linear independence of $\{L_j\}$ from \autoref{thm:local}, as
    well as the minimal non-vanishing eigenvalue of $C\cdot C^*$
    from \autoref{thm:quasi-local}, can be checked and calculated
    efficiently without any prior knowledge of the steady state or
    the gap. This might be beneficial in numerical applications of
    these mappings, as the formula for $\mcH$ is explicitly given in
    terms of the jump operators and $C$.
    
  \item Theorems~\ref{thm:local} and \ref{thm:quasi-local} are valid 
    for any Lindbladians defined on a finite-dimensional algebra of
    operators, as they do not use any commutativity or locality
    properties of the jump operators or basis. Therefore these
    theorems apply to fermionic/hard-core bosonic systems
    as well.
\end{enumerate}
Before providing proofs of Theorems~\ref{thm:local}
and~\ref{thm:quasi-local} in \autoref{sec:Proofs}, we will use the
following subsection to discuss their main consequences.

\subsection{Application to the complexity of steady states of QDB
Lindbladians}
\label{sec:localapp}

The mapping $\mcL \mapsto \mcH$ allows us to easily import results
from the Hamiltonian realm into the Lindbladian realm. In the
following subsection, we describe some of the main results that can
be imported and discuss the underlying techniques. We begin by
describing the vectorization mapping that allows us to map
super-Hamiltonians and density operators to Hamiltonians and
vectors, respectively.

\subsubsection{The vectorization map}

\begin{figure}
  \begin{center}
    \includegraphics[scale=1]{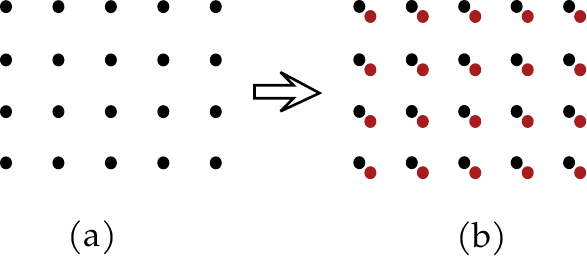}
  \end{center}
  \caption{\label{fig:vectorization} The composite system 
    in the vectorization of a many-body system on a lattice.
    Starting from lattice on which $\sigma$ is defined (the black
    dots in (a)), we introduce
    a `fictitious' site next to every site in the lattice (the red
    dots in (b)). The composite system can then be viewed as a new
    many-body lattice system in which the local dimension is $d^2$.}
\end{figure}

We map the super Hamiltonian \eqref{def:super-H} to a local
Hamiltonian using standard \emph{vectorization}, which is the
well-known isomorphism $L(\BBH) \mapsto \mathbbm{H}\otimes
\mathbbm{H}$ (see, for example, \cRef{ref:Watrous2018-QI}).  We will
denote this mapping by $X \to \dket{X}$, where $X\in L(\BBH)$ is an
operator and $\dket{X}\in \BBH\otimes\BBH$ is a vector. For the sake of
completeness, we explicitly define it here and list some of its
main properties. The map is defined first on the standard basis
elements by
\begin{align*}
  \ketbra{i}{j} \mapsto \ket{i}\otimes\ket{j},
\end{align*}
and then linearly extended to all operators such that
\begin{align*}
  c_1\ketbra{\psi_1}{\phi_1} +
  c_2\ketbra{\psi_2}{\phi_2} \mapsto
    c_1\ket{\psi_1}\otimes\ket{\phi_1^*} 
    + c_2\ket{\psi_2}\otimes\ket{\phi_2^*}.
\end{align*}
Above, $\ket{\phi^*}$ is the state one obtains by complex
conjugating the coefficients of $\ket{\phi}$ in the standard basis.
We note that the vectorization map preserves inner products, $(A, B)
= \Tr(A^\dagger B) = \dbraket{A|B}$. It can also be extended
naturally to many-body setup where we have a lattice $\Lambda$ and
the global space is $\BBH = \bigotimes_{x\in \Lambda} \BBH_x$. In
such case, we view $\BBH\otimes \BBH$ as composite system in which
at every site $x\in \Lambda$, there are \emph{two} copies of the
Hilbert space $\BBH_x$: the original $\BBH_x$ and a fictitious
$\BBH_x$, so that $\mcH\otimes\mcH = \bigotimes_{x\in \Lambda}
(\mcH_x\otimes \mcH_x)$ as described in \autoref{fig:vectorization}.
With this definition, the notion of locality in $\BBH$ can be
naturally mapped to locality in $\BBH\otimes\BBH$. This point of
view is very intuitive when considering the vectorization
$\dket{\sigma^{1/2}}$ of a quantum state $\sigma$. First note
that $\dket{\sigma^{1/2}}$ is a normalized vector since by the
preservation of inner product, 
\begin{align*}
  \dbraket{\sigma^{1/2}|\sigma^{1/2}} 
    = \braket{\sigma^{1/2},\sigma^{1/2}}
    = \Tr(\sigma^{1/2}\cdot \sigma^{1/2}) = \Tr(\sigma)=1.
\end{align*}
Moreover, it is easy to see that 
for every observable $A$ on
$\BBH$, 
\begin{align} \label{eq:EV_sigma}
  \dbra{\sigma^{1/2}}A\otimes\Id \dket{\sigma^{1/2}} 
    = \braket{\sigma^{1/2}, A\sigma^{1/2}} = \Tr(\sigma A),
\end{align}
i.e., the expectation value of every observable $A$ with respect to
$\sigma$ is equal to the expectation value of $A\otimes\Id$ with
respect to $\dket{\sigma^{1/2}}$. This implies that
$\dket{\sigma^{1/2}}$ is a \emph{purification} of
$\sigma$
\begin{align*}
  \sigma = \Tr_{\mathrm{fict}} \dketbra{\sigma^{1/2}}{\sigma^{1/2}},
\end{align*}
where $\Tr_{\mathrm{fict}}$ denotes the tracing over the fictitious
$\BBH_x$ Hilbert spaces. The above observations can be summarized in
the following easy lemma, which relates the entanglement structure
of $\dket{\sigma^{1/2}}$ to that of $\sigma$.
\begin{lemma}
\label{lem:vectorization}
  Let $\sigma$ be a many-body quantum state on a $D$-dimensional
  lattice $\Lambda$, let $\dket{\sigma^{1/2}}$ be the vectorization
  of $\sigma^{1/2}$, and define $\rho\EqDef
  \dketbra{\sigma^{1/2}}{\sigma^{1/2}}$. Then:
  \begin{enumerate}
    \item For any bi-partition $\Lambda = A\cup B$,
      \begin{align}
        I(A:B)_\sigma \le 2S(A)_\rho,
      \end{align}
      where $I(A:B)_\sigma$ is the mutual information between $A,B$
      in the state $\sigma$, and $S(A)_\rho$ is the entanglement
      entropy of region $A$ in the composite system $\BBH\otimes
      \BBH$ (where for every $x\in A$ we include \emph{both} the
      original system and its fictitious partner) with respect to
      the state $\dket{\sigma^{1/2}}$. $S(A)_\rho$ is also known as
      the operator-space entanglement entropy of
      $\sqrt\sigma$ (see
      references\cc{prosen2007operator, jonay2018coarse}).
      
    \item If $\Lambda$ is 1D and there exists an MPS $\ket{\psi_D}$
      on the composite system with bond dimension $D$ such that
      $\norm{\dket{\sigma^{1/2}}-\ket{\psi_D}}\le \epsilon$, then
      $\Psi_{D^2}\EqDef \Tr_{\mathrm{fict}} \ketbra{\psi_D}{\psi_D}$ can
      be described by an MPO with bond dimension $D^2$ and $\norm{\sigma -
      \Psi_{D^2}}_1\le \sqrt 2\epsilon$. A similar relation exists
      also for higher dimensions (replacing MPS with, say, PEPS).
  \end{enumerate}
\end{lemma}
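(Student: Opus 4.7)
The plan is to handle the two parts separately, both via standard purification and MPS manipulations. In both parts the guiding picture is the doubled-lattice structure: every site $x$ carries an ``original'' copy $\BBH_x^{o}$ and a ``fictitious'' copy $\BBH_x^{f}$, so a bipartition $\Lambda = A\cup B$ lifts to $(A_oA_f)\cup(B_oB_f)$ on the composite system, and $\rho = \dketbra{\sigma^{1/2}}{\sigma^{1/2}}$ is a pure state whose reduction to $A_oB_o$ equals $\sigma$ by \Eq{eq:EV_sigma}.

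For part~1, the key tool is monotonicity of the quantum mutual information under local CPTP maps. Since $\rho$ is pure on $(A_oA_f)(B_oB_f)$, the bipartite entropies satisfy $S(A_oA_f)_\rho = S(B_oB_f)_\rho = S(A)_\rho$ and $S(\rho)=0$, hence
\begin{align*}
  I(A_oA_f : B_oB_f)_\rho = 2\,S(A)_\rho.
\end{align*}
Tracing out $A_f$ on one side of the cut and $B_f$ on the other is a product of local partial traces, hence a product channel acting independently on the two sides, and applying data processing twice gives $I(A:B)_\sigma = I(A_o:B_o)_\rho \le I(A_oA_f:B_oB_f)_\rho = 2\,S(A)_\rho$.

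For part~2, I would first derive the MPO structure by direct index manipulation. Writing the composite MPS as $\ket{\psi_D} = \sum \Tr(A_1^{(i_1 j_1)}\cdots A_n^{(i_n j_n)})\,\ket{i_1 j_1,\dots,i_n j_n}$ with bond dimension $D$ and physical dimension $d^2$ per site, forming $\ketbra{\psi_D}{\psi_D}$ and tracing over the fictitious indices $\{j_k\}$ identifies them across bra and ket. This yields site tensors $M_k^{(i_k,i_k')} = \sum_{j_k} A_k^{(i_k j_k)} \otimes \overline{A_k^{(i_k' j_k)}}$ of bond dimension $D\cdot D = D^2$, giving the claimed MPO for $\Psi_{D^2}$. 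For the trace-distance bound I would invoke contractivity of $\norm{\cdot}_1$ under the partial trace over the fictitious sites, which maps $\dketbra{\sigma^{1/2}}{\sigma^{1/2}}\mapsto \sigma$ and $\ketbra{\psi_D}{\psi_D}\mapsto \Psi_{D^2}$, and then apply the standard estimate $\norm{\ketbra{a}{a}-\ketbra{b}{b}}_1 = 2\sqrt{1-|\braket{a|b}|^2}$ combined with $|\braket{a|b}|\ge 1-\tfrac12\norm{\ket{a}-\ket{b}}^2$ to extract the claimed $O(\epsilon)$ bound. The PEPS analog in higher dimensions is identical, replacing MPS contraction by the appropriate tensor-network contraction.

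I do not expect any serious obstacle: part~1 is a one-line consequence of data processing for mutual information applied to a purification, and part~2 is bookkeeping of MPS/MPO indices plus contractivity of the trace norm. The only mild care needed is verifying that the partial trace used in part~1 factorizes as a product channel across the bipartition, which is automatic from the site-wise tensor-product structure depicted in \autoref{fig:vectorization}.
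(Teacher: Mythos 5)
Your proposal is correct and follows essentially the same route as the paper: part~1 is the purity identity $I=2S(A)_\rho$ combined with monotonicity under the partial trace over the fictitious factors (the paper phrases this as monotonicity of the relative entropy $S(\rho\|\rho_A\otimes\rho_B)$, which is the same data-processing fact you invoke), and part~2 uses the identical fidelity-to-trace-distance estimates plus contractivity of $\norm{\cdot}_1$ under $\Tr_{\mathrm{fict}}$, with your explicit tensor $M_k^{(i_k,i_k')}=\sum_{j_k}A_k^{(i_k j_k)}\otimes\overline{A_k^{(i_k' j_k)}}$ just spelling out the leg contraction the paper presents pictorially. No gaps.
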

\begin{proof}\ 
  \begin{enumerate}
    \item By definition, $I(A:B)_\rho = S(A)_\rho + S(B)_\rho - 
      S(AB)_\rho$, and as $\rho$ is a pure state and $A,B$ a
      bi-partition of the system, $S(A)_\rho = S(B)_\rho$ and
      $S(AB)_\rho = 0$. Therefore, $I(A:B)_\rho = 2S(A)_\rho$. As
      $\sigma = \Tr_{\mathrm{fict}}\rho$, it follows from the
      monotonicity of the relative entropy that
      \begin{align*}
        I(A:B)_\sigma = S(\sigma||\sigma_A\otimes \sigma_B) \le
          S(\rho||\rho_A\otimes \rho_B) = I(A:B)_\rho = 2S(A)_\rho.
      \end{align*}
      
    \item First note that 
      \begin{align*}
        \epsilon \ge \norm{\dket{\sigma^{1/2}} - \ket{\psi_D}}
          = \sqrt{2(1-\Re\langle\psi_D\dket{\sigma^{1/2}})}
          \ge \sqrt{2(1-|\langle\psi_D\dket{\sigma^{1/2}}|)}
      \end{align*}
      and therefore $|\langle\psi_D\dket{\sigma^{1/2}}| \ge
      1-\epsilon^2/2$ and 
      \begin{align*}
        \norm{\rho - \ketbra{\psi_D}{\psi_D} }_1 =
        \norm{\dketbra{\sigma^{1/2}}{\sigma^{1/2}}
          - \ketbra{\psi_D}{\psi_D} }_1 =
          2 \sqrt{1-|\dbra{\sigma^{1/2}}\psi_D\rangle|^2} 
          \le \sqrt 2 \epsilon.
      \end{align*}
      Therefore, by the monotonicity of the trace distance, 
      \begin{align*}
        \norm{\sigma - \Psi_{D^2} }_1 =
        \|\Tr_{\mathrm{fict}}\rho - \Tr_{\mathrm{fict}}\ketbra{\psi_D}{\psi_D}\|_1
        \le \sqrt 2 \epsilon.
      \end{align*}
      It remains to show that $\Psi_{D^2}$ can be written as an MPO
      with bond dimension $D^2$. This can be understood from
      inspecting \autoref{fig:MPS-MPO}, which shows how $\Psi_{D^2}$
      is obtained by contracting the fictitious legs in the TN
      that describes $\rho=\dketbra{\sigma^{1/2}}{\sigma^{1/2}}$.
  \end{enumerate}
  
\end{proof}

\begin{figure}
  \begin{center}
    \includegraphics[scale=1]{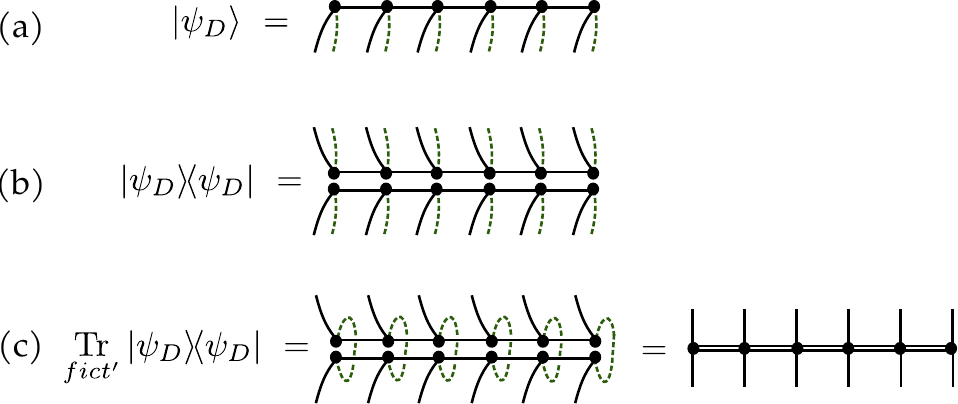}
  \end{center}
  \caption{\label{fig:MPS-MPO} Obtaining an MPO description of
    $\Tr_{\mathrm{fict}}\ketbra{\psi_D}{\psi_D}$ from the MPS
    $\ket{\psi_D}$. (a) The MPS $\ket{\psi_D}$. Each leg is
    represented by two legs, where the solid leg corresponds to the original local 
    Hilbert space and the dashed leg to the fictitious Hilbert
    space. (b) The TN representation of $\ketbra{\psi_D}{\psi_D}$.
    (c) To calculate $\Tr_{\mathrm{fict}}\ketbra{\psi_D}{\psi_D}$ we contract the
    fictitious legs, leading to an MPO with bond dimension $D^2$. }
\end{figure}

We conclude the discussion by noting that vectorization also maps
super-operators to operators. Indeed, for every super operator
$\mcF$ there is a unique $F\in L(\BBH\otimes\BBH)$ that satisfies
\begin{align*}
  \dket{\mcF(X)} \EqDef F\dket{X} .
\end{align*}
This map respects the notion of adjoints in the sense that $\mcF
\mapsto F$ iff $\mcF^*\mapsto F^\dagger$, and so a self-adjoint
super Hamiltonian $\mcH$ is mapped to an hermitian Hamiltonian $H$.
It is easy to verify that if $\mcF$ is the superoperator $\mcF(X)
\EqDef A\cdot X\cdot B$, then its vectorization is the operator $F =
A\otimes B^T$. Using this formula it is clear that a $k$-body
super-operator maps to a $2k$-body operator. For example, the
super Hamiltonian \eqref{eq:super-H-1} from \autoref{thm:local} is
mapped to the Hamiltonian
\begin{align*}
  H = -\sum_j\left[\sqrt{c_j c_{\pi(j)}} L_j\otimes L_j^*
   - \frac{c_j}{2}\big(L_j^\dagger L_j\otimes \Id 
     + \Id \otimes (L_j^\dagger L_j)^T\big)\right] , 
\end{align*}
and a similar expression arises for the super Hamiltonian from
\autoref{thm:quasi-local}.

\subsubsection{Area laws for steady states}

In this subsection we use the notation $\tilde{O}(X)$ for
$\bigO{X\cdot \poly(\log X)}$. Our first result will be an area law
for 1D Lindbladians that satisfy the requirements of
\autoref{thm:local}. These map to geometrically local 1D super
Hamiltonians, for which we will use the following result:
\begin{theorem}[Taken from~Refs.\cc{ref:Arad2013-1DAL, 
  ref:Landau2015-1Dalg}]\label{thm:ref:Arad2013-1DAL} 
  Let $\Lambda$ be a 1D lattice of sites with local dimension $d$.
  Let $H=\sum_i h_i$ be a nearest neighbours Hamiltonian with a
  spectral gap $\gamma>0$ and a unique ground state $\ket{\gs}$.
  Assume, in addition, that $\norm{h_i}\le J$ for every $i$, where
  $J$ is some energy scale. Then the entanglement entropy of
  $\ket{\gs}$ with respect to any bi-partition satisfies $S_E
  =\tilde{O}\big( \frac{\log^3(d)}{\gamma/J} \big)$. Moreover, there
  is a matrix product state (MPS) $\ket{\psi_D}$ of sublinear bond
  dimension $D=e^{\tilde{O}(\log^{3/4}n/(\gamma/J)^{1/4})}$ such
  that $\norm{\ket{\gs}-\ket{\psi_D}}\leq \frac{1}{\poly(n)}$, and
  such MPS can be found efficiently on a classical computer.
\end{theorem}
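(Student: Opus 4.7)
The plan is to follow the \emph{approximate ground state projector} (AGSP) strategy developed in Refs.\cc{ref:Arad2013-1DAL, ref:Landau2015-1Dalg}, which refines Hastings' original 1D area law\cc{ref:Hastings2007-1DAL}. Fix a cut of the 1D chain into left and right halves $L, R$. The goal is to build an operator $K$ (the AGSP) that (i) has small Schmidt rank $D(K)$ across the cut, and (ii) shrinks all excited states relative to the ground state by a factor $\mcG(K) \le \Delta$. The central lemma of Arad--Kitaev--Landau--Vazirani states that if $D(K)\cdot \mcG(K)^2 \le 1/2$, then there exists a product state $\ket{\phi_L}\otimes\ket{\phi_R}$ whose overlap with $\ket{\gs}$ is at least $1/\sqrt{2D(K)}$. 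This non-trivial overlap is the key ingredient that Hastings' original argument established more laboriously via saturation of mutual information.

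The first step is to construct the AGSP. Starting from the coarse-grained Hamiltonian $H' = \sum_i h_i / J$ with $\|h_i\|\le 1$, apply a low-degree Chebyshev polynomial $T_\ell$ rescaled so that it is exponentially small on the excited spectrum $[\gamma/J, 2]$ and equal to $1$ at $0$. A degree-$\ell$ polynomial yields error $\mcG(K) = e^{-\Omega(\ell\sqrt{\gamma/J})}$. Bounding the Schmidt rank of $T_\ell(H')$ across a cut requires a \emph{truncation step}: one replaces the tails of local terms near the cut by operators supported in a finite window, using the Lieb--Robinson-like localization of polynomials of local Hamiltonians. After truncation, the Schmidt rank is bounded by $D(K) = (\ell d)^{\bigO{\ell}}$, and choosing $\ell = \tilde{\bigO{\sqrt{J/\gamma}\cdot \log d}}$ balances the two to satisfy $D(K)\mcG(K)^2 < 1/2$.

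Given the product-state overlap bound, the second step is the iterative construction of a subspace that approximately contains $\ket{\gs}$ and has dimension controlling the entanglement. One starts with a product state of overlap $\mu = 1/\sqrt{2D(K)}$ and repeatedly applies the AGSP. Each application increases the Schmidt rank multiplicatively by $D(K)$ but amplifies the ground-state component, and after $t \sim \log(1/\mu)/\log(1/\mcG(K))$ iterations one obtains an approximation $\ket{\tilde\gs}$ with Schmidt rank $e^{\tilde{\bigO{\log^3(d)/(\gamma/J)}}}$ and exponentially small error. Truncating the Schmidt decomposition yields the entropy bound $S_E = \tilde{\bigO{\log^3(d)/(\gamma/J)}}$. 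The main obstacle in this part is controlling the accumulation of errors across all cuts simultaneously so that a single global MPS approximation with sublinear bond dimension is obtained; this is handled by running the argument at every cut and gluing via the bound $\norm{\ket{\gs}-\ket{\psi_D}} \le 1/\poly(n)$.

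Finally, for the efficient classical algorithmic part, I would invoke the Landau--Vazirani--Vidick procedure\cc{ref:Landau2015-1Dalg}: iteratively build local Schmidt-rank-bounded approximations to the ground state by applying the AGSP to a trial state, trimming the bond dimension, and sweeping; the area-law bound above certifies that the sweep converges with bond dimension $D = e^{\tilde{\bigO{\log^{3/4}n / (\gamma/J)^{1/4}}}}$. The hardest conceptual step, as above, is the construction of the AGSP with the sharp Schmidt-rank/error tradeoff, since a naive Taylor expansion gives polynomial rather than exponential decay and fails the $D\cdot\mcG^2<1/2$ test; the Chebyshev acceleration combined with careful operator-truncation near the cut is what makes the whole machinery go through.
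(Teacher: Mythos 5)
The paper does not prove this theorem at all --- it is imported verbatim from Refs.\cc{ref:Arad2013-1DAL, ref:Landau2015-1Dalg} --- and your proposal is a faithful recapitulation of exactly the AGSP strategy those references use (Chebyshev acceleration, operator truncation near the cut, the $D\cdot\Delta\le 1/2$ product-overlap lemma, bootstrapping, and the Landau--Vazirani--Vidick sweep). The only imprecision worth noting is that the Chebyshev shrinkage $e^{-\Omega(\ell\sqrt{\gamma/\norm{H}})}$ depends on the norm of the \emph{truncated} Hamiltonian rather than on $J$ alone, which is the other reason (besides Schmidt rank) that the truncation step is indispensable; this does not affect the validity of your outline.
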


Using the above theorem together with \Lem{lem:vectorization} and
\autoref{thm:local}, we immediately obtain the following corollary
\begin{corol} \label{corol:3.7}
  Let $\mcL$ be a geometrically local $2$-body Lindbladian defined
  on a 1D lattice of local dimension $d$ that satisfies the
  requirements of \autoref{thm:local}. Assume in addition that 
  $\mcL$ has a gap $\gamma>0$, a unique steady state, and that 
  $|c_j|\le J$ for all $j$. Then:
  \begin{enumerate}
    \item $\sqrt{\sigma}$ satisfies an area law for the 
      operator-space entanglement entropy\cc{prosen2007operator},
      that is, for any cut in the 1D lattice into $\Lambda = L \cup
      R$ the entanglement entropy of $\dket{\sqrt\sigma}$ between
      the two parts satisfies
      \begin{align}
        S(L)_\rho = \tilde{O}\big(
          \frac{\log^3(d)}{\gamma/J}\big).
      \end{align} 
    
    \item $\sigma$ satisfies an area-law for the mutual information.
      Specifically, for any cut in the 1D lattice, the mutual
      information between the two sides satisfies 
      \begin{align} 
        \label{eq:sigma_area_law}
        I(L:R)_\sigma = \tilde{O}\big(
          \frac{\log^3(d)}{\gamma/J}\big).
      \end{align}
      
      \item There is a matrix product operator (MPO) 
        $\Psi_D$ with bond dimension
        $D=e^{\tilde{O}(\log^{3/4}n/(\gamma/J)^{1/4})}$ such that
        $\norm{\sigma-\Psi_D}_1\le \frac{1}{\poly(n)}$. Moreover, a
        similar approximating MPO for $\sigma$ can be found
        efficiently on a classical computer.
  \end{enumerate}
\end{corol}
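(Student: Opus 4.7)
The plan is to chain \autoref{thm:local}, vectorization, and \autoref{thm:ref:Arad2013-1DAL}. First, I would reduce to a Hamiltonian problem: by \autoref{thm:local}, the super-Hamiltonian $\mcH=-\Gamma_s^{-1/2}\circ\mcL\circ\Gamma_s^{1/2}$ inherits from $\mcL$ its geometrically local $2$-body structure via \eqref{eq:super-H-1}. Since $\mcH$ and $-\mcL$ are related by a similarity transformation, they share the same spectrum, so $\mcH$ has gap $\gamma$ and the unique steady state $\sigma$ of $\mcL$ is mapped to the (unnormalized) unique zero-eigenvector $\sqrt{\sigma}$ of $\mcH$. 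Because the $L_j$ are normalized and $|c_j|\le J$, every local term in \eqref{eq:super-H-1} has operator norm $O(J)$.

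Next, I would apply the vectorization map described in \autoref{sec:localapp}. Using the doubled-lattice viewpoint of \autoref{fig:vectorization}, in which each lattice site $x$ is combined with its fictitious copy into a single super-site of local dimension $d^2$, $\mcH$ is mapped to a nearest-neighbor Hamiltonian $H$ on a 1D chain of $n$ super-sites. Since vectorization respects adjoints, $H$ is Hermitian; it inherits the gap $\gamma$ of $\mcH$; its unique ground state is $\dket{\sqrt{\sigma}}$; and the vectorization of a $2$-body local term $\rho\mapsto A\rho B$ is $A\otimes B^T$, whose operator norm is at most $\|A\|\|B\|$, so $\|h_i\|=O(J)$.

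Now I would invoke \autoref{thm:ref:Arad2013-1DAL} with local dimension $d^2$ (so that $\log^3(d^2)=8\log^3(d)$ is absorbed into the $\tilde{O}$) and energy scale $J$. This immediately gives part (1): the operator-space entanglement entropy of $\dket{\sqrt{\sigma}}$ across any cut is $S(L)_\rho=\tilde{O}(\log^3(d)/(\gamma/J))$. Part (2) then follows from \Lem{lem:vectorization}(1), which bounds $I(L:R)_\sigma$ by $2S(L)_\rho$. For part (3), \autoref{thm:ref:Arad2013-1DAL} also furnishes an efficiently computable MPS $\ket{\psi_D}$ of bond dimension $D=e^{\tilde{O}(\log^{3/4}n/(\gamma/J)^{1/4})}$ with $\norm{\dket{\sqrt{\sigma}}-\ket{\psi_D}}\le 1/\poly(n)$; feeding this into \Lem{lem:vectorization}(2) yields an MPO $\Psi_{D^2}$ of bond dimension $D^2$ (still of the claimed form) with $\norm{\sigma-\Psi_{D^2}}_1\le\sqrt{2}/\poly(n)$.

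The main piece of bookkeeping is verifying that gap, uniqueness, locality and the energy-scale bound $J$ all survive the compound map $\mcL\mapsto\mcH\mapsto H$. The potentially delicate point is locality: a priori the similarity transformation by $\Gamma_s^{\pm 1/2}$ could spoil locality (which is exactly why the analogous constructions in \cRef{DOC} and \cRef{chi} did not extract a local Hamiltonian), but \autoref{thm:local} shows that under the linear-independence hypothesis the jump operators are merely reweighted by $\sqrt{c_j c_{\pi(j)}}$, leaving their support structure intact. With locality and the $O(J)$ norm bound in hand, the remainder is a direct application of the imported 1D area-law machinery.
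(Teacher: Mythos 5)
Your proposal is correct and follows exactly the route the paper intends: \autoref{thm:local} gives locality of $\mcH$, vectorization turns it into a gapped nearest-neighbor Hamiltonian on a chain of $d^2$-dimensional super-sites with unique ground state $\dket{\sqrt\sigma}$ and $\norm{h_i}=O(J)$, and \autoref{thm:ref:Arad2013-1DAL} combined with \Lem{lem:vectorization} yields the three claims. The paper states this derivation as immediate; your write-up simply makes the bookkeeping (gap, uniqueness, locality, and the $O(J)$ norm bound surviving the map $\mcL\mapsto\mcH\mapsto H$) explicit.
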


For the next result, we will use the main result of
\cRef{ref:Kuwajara2020-long-range-AL} adapted to the $2$-local
settings with exponentially decaying interactions.


\begin{theorem}[Taken from~\cRef{ref:Kuwajara2020-long-range-AL}]
\label{thm:Tomotaka} Let $\Lambda$ a 1D lattice of sites with local
  dimension $d$. Let $H=\sum_{i,j} h_{ij}+\sum_i h_i$ be a $2$-body
  Hamiltonian defined on $\Lambda$, and suppose that there exist a
  constant $J$ such that $\norm{h_{ij}} \le \frac{J}{r_{ij}^\alpha},\;
  \norm{h_i}\le J$, and that $H$ has a spectral gap $\gamma>0$ and a
  unique ground state $\ket{\gs}$. Then the entanglement entropy of
  $\ket \gs$ with respect to any cut in the 1D grid satisfies 
  \begin{align*}
    S_E = \tilde{O}\left(\log^2d\left(
      \frac{\log d}{\gamma/J}\right)^{1+ \frac{2}{\alpha-2}}
    \right).
  \end{align*}
  Moreover, there is a MPS $\ket{\psi_D}$ of bond dimension
  $D=e^{\bigO{\log^{5/2}(n)}}$ such that
  $\norm{\ket{\gs}-\ket{\psi_D}}\leq \frac{1}{\poly(n)}$.
\end{theorem}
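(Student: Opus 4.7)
The plan is to follow the approximate ground-state projector (AGSP) framework of Arad--Kitaev--Landau--Vazirani, adapted to handle algebraically decaying two-body interactions. Recall that an AGSP across a bipartition $\Lambda = L \cup R$ is an operator $K$ that fixes $\ket{\gs}$, damps its orthogonal complement by at most $\epsilon$ in operator norm, and has operator Schmidt rank $D(K)$ across the cut. Whenever $D(K)\cdot \epsilon \le 1/2$, the standard argument guarantees the existence of a product state with overlap $\gtrsim 1/\sqrt{D(K)}$ with $\ket{\gs}$; iterating this via the Arad--Landau--Vazirani tower produces the entanglement bound $S_E = \bigO{\log D(K)}$ together with a sub-exponential bond-dimension MPS approximation. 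The task therefore reduces to constructing an AGSP with an appropriate $(D(K), \epsilon)$ trade-off.

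First I would truncate the Hamiltonian near the cut. Write $H = H_L + H_R + H_{LR}$, with $H_{LR}$ collecting all two-body terms that straddle the cut, and define $\tilde H$ by discarding every $h_{ij}$ for which either endpoint lies at distance larger than $\ell$ from the boundary. The tail $\sum_{r > \ell} r \cdot r^{-\alpha} = \bigO{\ell^{-(\alpha - 2)}}$ controls $\norm{H - \tilde H}$, so standard perturbation theory shows that the gap and ground state of $\tilde H$ agree with those of $H$ up to an error of order $J/\ell^{\alpha - 2}$. Next I would apply a Chebyshev polynomial of degree $m$ to a suitably shifted and rescaled $\tilde H$, obtaining $K = p_m(\tilde H)$ that damps the excited subspace by $\epsilon \le 2\exp(-2m\sqrt{\gamma/\norm{\tilde H}})$. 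Since every factor of $\tilde H$ in the expansion of $p_m(\tilde H)$ contributes only crossing terms supported inside a buffer of width $\ell$, a careful counting yields an operator Schmidt rank that grows polynomially in $m$ with a prefactor exponential in $\ell \log d$.

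The main obstacle, and the technical heart of the argument, is balancing these three scales to obtain the sharp exponent $1 + 2/(\alpha - 2)$. Demanding that the truncation error be much smaller than $\gamma$ forces $\ell \sim (J/\gamma)^{1/(\alpha - 2)}$, and choosing $m$ just large enough that $D(K)\epsilon \le 1/2$ then produces, after a short calculation, $\log D(K) \sim (\log d)^2 \cdot (\log d / (\gamma/J))^{1 + 2/(\alpha - 2)}$ up to polylogarithmic factors, matching the stated bound. The delicate step is the refined Schmidt-rank bound on the Chebyshev polynomial of a long-range two-body Hamiltonian: the naive estimate $D(K) \le d^{\bigO{m\ell}}$ is too weak to reach the claimed exponent, and one must exploit cancellations in the polynomial expansion together with the two-body structure of the interactions, essentially as in the argument of \cRef{ref:Kuwajara2020-long-range-AL}, to push through.
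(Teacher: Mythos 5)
First, note that the paper does not prove this statement at all: it is imported verbatim from \cRef{ref:Kuwajara2020-long-range-AL} and used as a black box, so there is no internal proof to compare yours against. Judged on its own terms, your outline correctly identifies the general strategy of that reference (an AGSP built from a Chebyshev polynomial of a truncated Hamiltonian, fed into the Arad--Landau--Vazirani bootstrapping argument), but it is not a proof, and it stops exactly where all the work lies. You acknowledge that the naive Schmidt-rank estimate $d^{\bigO{m\ell}}$ is too weak and that one must ``exploit cancellations \ldots essentially as in the argument of \cRef{ref:Kuwajara2020-long-range-AL}'' --- that is, the decisive estimate is deferred back to the very result being established, which makes the argument circular rather than merely incomplete.

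Two further points would need attention even in a fleshed-out version. First, your Chebyshev error bound $\epsilon \le 2\exp(-2m\sqrt{\gamma/\norm{\tilde H}})$ involves $\norm{\tilde H}$, which is extensive, $\Theta(n)$; without the spectral truncation of $H_L$ and $H_R$ that is an essential ingredient of the AKLV construction (and which your sketch omits), this forces $m = \bOmega{\sqrt n}$ and destroys the area law. Second, discarding only the terms that straddle the cut does not suffice for the Schmidt-rank count: the long-range terms \emph{within} each half still connect sites adjacent to the cut to sites arbitrarily far away, so a single power of $\tilde H$ can spread support across any buffer of width $\ell$, and controlling this requires the multi-scale truncation scheme of \cRef{ref:Kuwajara2020-long-range-AL} rather than a single cutoff $\ell$. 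Since the theorem is used in this paper only as an imported tool, the honest course is to cite it, as the paper does, rather than to sketch a proof whose core step is the citation itself.
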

It should be noted that \autoref{thm:Tomotaka} can be restated for
1D $k$-local Hamiltonians and still give the same entropy bound ---
see Appendix B in \cRef{ref:Kuwajara2020-long-range-AL}. Using the above theorem,
together with \Lem{lem:vectorization} and
Corollary~\ref{cor:quasi-local}, we can choose an
appropriate $B$ for which $\norm{\mcH_{ab}}\le B/|a-b|^\alpha$ for
any $P_a,P_b$ for some $\alpha > 4$. This implies the following
corollary:
\begin{corol} \label{corol:3.9}
  Let $\mcL$ be a geometrically local $2$-body Lindbladian defined
  on a $1D$ lattice that satisfies the requirements of
  \autoref{thm:quasi-local}, with a spectral gap $\gamma>0$ and a steady
  state $\sigma$. Assume also that the smallest non-vanishing
  eigenvalue of $C\cdot C^*$ is $\lambda_{\mathrm{min}}=\bOmega{1}$. Then
  \begin{enumerate}
    \item $\sqrt \sigma$ satisfies an area law for the operator-space
      entanglement entropy\cc{prosen2007operator}.  That is, for any
      cut in the 1D lattice into $\Lambda = L \cup R$ the
      entanglement entropy of $\dket{\sqrt\sigma}$ between the two
      parts satisfies
      \begin{align}
        S(L)_\rho = \tilde{O}\left(\log^2d\left(
          \frac{\log d}{\gamma/J}\right)^{2}\right).
      \end{align} 
    
    \item $\sigma$ satisfies an area-law for the mutual information.
      Specifically, for any cut in the 1D lattice, the mutual
      information between the two sides satisfies 
      \begin{align} 
        \label{eq:sigma_area_law2}
        I(L:R)_\sigma =  \tilde{O}\left(\log^2d\left(
          \frac{\log(d }{\gamma/J}\right)^{2}\right).
      \end{align}
      
      \item There is a matrix product operator (MPO) 
        $\Psi_D$ with bond dimension   $D=e^{\bigO{\log^{5/2}(n)}}$ such that
        $\norm{\sigma-\Psi_D}_1\le \frac{1}{\poly(n)}$.
  \end{enumerate}
\end{corol}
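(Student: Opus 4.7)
The plan is to pull back the Hamiltonian area law of \autoref{thm:Tomotaka} through the mapping $\mcL\mapsto\mcH$ of \autoref{thm:quasi-local} followed by vectorization, mirroring the proof of \autoref{corol:3.7} but now with exponentially decaying rather than strictly local interactions. I would organize the argument in three steps.

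First, by \autoref{thm:quasi-local} the self-adjoint super-Hamiltonian $\mcH=-\Gamma_s^{-1/2}\circ\mcL\circ\Gamma_s^{1/2}$ is related to $-\mcL$ by a similarity transformation, so it has the same spectrum (hence gap $\gamma>0$) and $\sqrt{\sigma}$ lies in its kernel; uniqueness of the steady state transfers to uniqueness of this kernel vector. By \autoref{cor:quasi-local}, together with the assumption $\lambda_{\min}=\bOmega{1}$, we decompose $\mcH=\sum_{a,b}\mcH_{ab}$ where each $\mcH_{ab}$ is at most $2k=4$-body, supported on $\supp(P_a)\cup\supp(P_b)$, with $\|\mcH_{ab}\|\le J e^{-\bOmega{|a-b|}}$. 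Vectorization then produces a Hermitian $H$ on a 1D chain of dimension-$d^2$ qudits, built from at-most $4$-body terms $H_{ab}$ obeying the same norm bounds, whose unique ground state is $\dket{\sqrt{\sigma}}$.

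Second, I convert the exponential decay into polynomial decay suitable for \autoref{thm:Tomotaka}. Since $e^{-cx}\le(\alpha/c)^\alpha e^{-\alpha}x^{-\alpha}$ for every $\alpha>0$, for any $\alpha>4$ we can find a constant $B=\bigO{J}$ (depending on $\alpha$ and $\lambda_{\min}$) such that $\|H_{ab}\|\le B/|a-b|^\alpha$. Applying \autoref{thm:Tomotaka} in its $k$-body form (Appendix~B of \cRef{ref:Kuwajara2020-long-range-AL}) to $H$, with gap $\gamma$, yields
\begin{align*}
  S(L)_{\dket{\sqrt\sigma}}
  = \tilde{O}\!\left(\log^2 d\left(\frac{\log d}{\gamma/J}\right)^{1+\tfrac{2}{\alpha-2}}\right)
  = \tilde{O}\!\left(\log^2 d\left(\frac{\log d}{\gamma/J}\right)^{2}\right),
\end{align*}
together with an MPS $\ket{\psi_D}$ of bond dimension $D=e^{\bigO{\log^{5/2}n}}$ satisfying $\|\dket{\sqrt\sigma}-\ket{\psi_D}\|\le 1/\poly(n)$. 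This establishes item~1 of the corollary.

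Third, I push the bounds on the doubled system back to $\sigma$ using \autoref{lem:vectorization}. Part~1 of that lemma gives $I(L:R)_\sigma\le 2S(L)_{\dket{\sqrt\sigma}}$, which is precisely the mutual-information area law~\eqref{eq:sigma_area_law2} of item~2. Part~2 turns the MPS approximation of $\dket{\sqrt\sigma}$ into an MPO $\Psi_{D^2}$ for $\sigma$ with bond dimension $D^2=e^{\bigO{\log^{5/2}n}}$ and $\|\sigma-\Psi_{D^2}\|_1\le 1/\poly(n)$, giving item~3. The main obstacle I anticipate is the trade-off in the second step: the constant $B(\alpha)$ grows in $\alpha$ while the entropy exponent $1+2/(\alpha-2)$ shrinks, so one must check that for an $\alpha$ just above $4$ the effective $B$ is still $\bigO{J}$ in terms of the original Lindbladian scales, so that the stated dependence on $\gamma/J$ is correct. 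The remaining bookkeeping---verifying that vectorization sends a geometrically-local $k$-body super-operator to a geometrically-local $2k$-body operator on the doubled chain, and that ground-state uniqueness survives the similarity transformation---is routine given \autoref{lem:vectorization}.
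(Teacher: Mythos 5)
Your proposal is correct and follows essentially the same route as the paper: the paper likewise invokes Corollary~\ref{cor:quasi-local} to obtain exponentially decaying terms, converts this to a polynomial decay $\norm{\mcH_{ab}}\le B/|a-b|^{\alpha}$ for some $\alpha>4$, applies \autoref{thm:Tomotaka} to the vectorized Hamiltonian, and transfers the bounds back to $\sigma$ via \Lem{lem:vectorization}. Your explicit attention to the trade-off between $B(\alpha)$ and the exponent $1+2/(\alpha-2)$ is a detail the paper glosses over but handles the same way implicitly.
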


{~}

We finish this part by noting that there are many more applications of the
theorems in \autoref{sec:mapping} which we did not address.
For example:
\begin{enumerate}
    \item One can import other ground
      state area-law results to the Linbladian settings, including
      results for 2D and higher dimensions such as those given in
      Refs.~\cc{ref:Masanes2009-AL, ref:Spyridon2012-adiabaticAL,
      ref:Jaeyoon2014-AL, ref:Brandao2015-sp-heatAL,
      ref:Anshu2021-2DAL}.

    \item One can use the results in Refs.\cc{Has_LSM, 
      hastings2021gapped} to show exponential decay of correlations
      for $\dket{\sqrt\sigma}$, which implies the same for $\sigma$
      using \Eq{eq:EV_sigma}. This was already shown in \cRef{DOC}
      (see Theorem 9 in the paper), so we omit the details. 
      Notice that our result are implied for any such Lindbladian without assuming local uniqueness (see \emph{regular} Lindbladians in \cRef{DOC}).

    \item One can demonstrate a gap in the Lindbladian by demonstrating a gap
      in the super-Hamiltonian instead. As the former task is
      somewhat challenging, the latter has been discussed more
      frequently in the literature.
      Below we will exemplify this by using the
      finite-size criteria from \cRef{knabe} to demonstrate a gap for a family of
      Lindbladians (see the proof of Claim~\ref{clm:local_toy_model},
      bullet~\ref{clm:bul4} for details). We remark that this type of work has been also done in \cRef{alicki2009}.
\end{enumerate}

\subsection{Proofs of Theorems~\ref{thm:local}, \ref{thm:quasi-local}}
\label{sec:Proofs}

In this section we give the proofs of our main results,
Theorems~\ref{thm:local}, \ref{thm:quasi-local}. The two proofs
follow the same outline: Examining the continuous family of
superoperators $\mcL_x\EqDef \Gamma_s^{-x}\circ\mcL\circ
\Gamma_s^x$, we would like to show that it obeys some notion of locality  at $x=1/2$. In fact, we will show that it is local for all $x\in
\BBR$. By definition, $\mcL_0 = \mcL$, and by the QDB
condition~\eqref{eq:QDB}, $\mcL_1 = \Gamma_s^{-1}\circ\mcL\circ
\Gamma_s = \mcL^*$, both of which are local. To prove locality
for all other $x\in \BBR$ we use the modular basis (see
Corollary~\ref{corol:modular-basis}). We will show that $\mcL_x$
remains diagonal in the modular basis for \emph{every} $x$. Then, we
will use \autoref{thm:freedom} to establish a connection between the
local representation of $\mcL$ and the modular basis. Finally, we
will employ the fact that $\mcL_1$ is local to argue that it must remain
local for all $x\in\BBR$. 

Let us then begin by proving that $\mcL_x$ remains diagonal in the
modular basis.

\begin{lemma} (Adaptation of Lemma 7 from \cRef{Szegedy})
\label{lem:modular-diag} \ \\ Let $\mcL$ satisfy the quantum
  detailed-balance condition with respect to $s\in
  [0,1/2)\cup(1/2,1]$. Then for every $x\in
  \BBR$, the superoperator $\Gamma_s^{-x}\circ \mcL\circ
  \Gamma_s^x$ is diagonal in the modular basis, and is given by
  \begin{align*}
    (\Gamma_s^{-x} \circ \mcL \circ \Gamma_s^x)(A)
    &= \sum_{\alpha\in I} \gamma_\alpha e^{-\omega_\alpha /2}
      \Big( e^{x\omega_\alpha}S_\alpha A S_\alpha^\dagger
        - \frac{1}{2}\big\{S_\alpha^\dagger S_\alpha, A\big\}\Big).
  \end{align*}
  \textbf{Note:} the above formula is identical to the formula for
  $\mcL$ in \autoref{thm:canonical_form}, except for the
  $e^{x\omega_\alpha}$ factor in front of the $S_\alpha A
  S_\alpha^\dagger$ term.
\end{lemma}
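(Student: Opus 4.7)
The plan is to start from the canonical form of $\mcL$ given by Theorem~\ref{thm:canonical_form} and then conjugate each term by $\Gamma_s^x$, exploiting the fact that the modular basis operators are eigenoperators of $\mcG_\sigma$. Since $\mcL$ obeys QDB at some $s\in[0,1/2)\cup(1/2,1]$, Theorem~\ref{thm:canonical_form} writes
\[
  \mcL(A) = \sum_{\alpha\in I}\gamma_\alpha e^{-\omega_\alpha/2}\Bigl(S_\alpha A S_\alpha^\dagger-\tfrac{1}{2}\{S_\alpha^\dagger S_\alpha, A\}\Bigr),
\]
with $\mcG_\sigma(S_\alpha)=e^{-\omega_\alpha}S_\alpha$ and $\mcG_\sigma(S_\alpha^\dagger)=e^{\omega_\alpha}S_\alpha^\dagger$, so it suffices to treat each term individually.

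The engine of the proof is the commutation rule
\[
  \sigma^y S_\alpha = e^{-y\omega_\alpha}S_\alpha\sigma^y,\qquad \sigma^y S_\alpha^\dagger = e^{y\omega_\alpha}S_\alpha^\dagger\sigma^y,\qquad y\in\BBR,
\]
which follows from the modular eigenvalue equations by iterating $\sigma S_\alpha\sigma^{-1}=e^{-\omega_\alpha}S_\alpha$ to integer $y$ and extending analytically via the spectral decomposition of $\sigma$. Applying this to the dissipative piece gives
\[
  \Gamma_s^{-x}\!\bigl(S_\alpha\,\Gamma_s^x(A)\,S_\alpha^\dagger\bigr) = \sigma^{-x(1-s)}S_\alpha\,\sigma^{x(1-s)}A\sigma^{xs}\,S_\alpha^\dagger\sigma^{-xs}.
\]
Moving $\sigma^{x(1-s)}$ leftward through $S_\alpha$ introduces $e^{x(1-s)\omega_\alpha}$ and cancels the outer $\sigma^{-x(1-s)}$, while moving $\sigma^{xs}$ rightward through $S_\alpha^\dagger$ introduces $e^{xs\omega_\alpha}$ and cancels the outer $\sigma^{-xs}$. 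The two factors combine into a single $e^{x\omega_\alpha}$ multiplying $S_\alpha A S_\alpha^\dagger$, exactly the prefactor claimed in the lemma.

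For the anticommutator term I will use that $S_\alpha^\dagger S_\alpha$ has Bohr frequency zero: indeed $\mcG_\sigma(S_\alpha^\dagger S_\alpha)=e^{\omega_\alpha}e^{-\omega_\alpha}S_\alpha^\dagger S_\alpha=S_\alpha^\dagger S_\alpha$, hence $S_\alpha^\dagger S_\alpha$ commutes with every power of $\sigma$ and therefore $\Gamma_s^{-x}\bigl(\{S_\alpha^\dagger S_\alpha,\Gamma_s^x(A)\}\bigr)=\{S_\alpha^\dagger S_\alpha,A\}$. Summing over $\alpha$ with the weights $\gamma_\alpha e^{-\omega_\alpha/2}$ reproduces the advertised formula and shows that $\Gamma_s^{-x}\circ\mcL\circ\Gamma_s^x$ remains diagonal in the modular basis for every $x\in\BBR$. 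I do not anticipate any real obstacle: the argument is essentially bookkeeping of how $\sigma^y$ passes through eigenoperators of $\mcG_\sigma$, with only the canonical form and the modular eigenvalue relations as inputs; the mildest subtlety is ensuring the analytic extension of $\sigma^y S_\alpha=e^{-y\omega_\alpha}S_\alpha\sigma^y$ to non-integer $y$, which is immediate from the spectral decomposition.
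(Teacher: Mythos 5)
Your proposal is correct and follows essentially the same route as the paper's proof: both conjugate each canonical-form term by $\Gamma_s^x$ separately, use the modular eigenvalue relations $\sigma^{y}S_\alpha\sigma^{-y}=e^{-y\omega_\alpha}S_\alpha$ and $\sigma^{y}S_\alpha^\dagger\sigma^{-y}=e^{y\omega_\alpha}S_\alpha^\dagger$ to collect the factor $e^{x(1-s)\omega_\alpha}e^{xs\omega_\alpha}=e^{x\omega_\alpha}$ on the $S_\alpha A S_\alpha^\dagger$ term, and observe that $S_\alpha^\dagger S_\alpha$ has Bohr frequency zero so the anticommutator is unchanged. The only cosmetic difference is that you phrase the key step as a commutation rule extended analytically to real powers, whereas the paper writes it directly as powers of the modular superoperator $\Delta_\sigma$; these are the same fact.
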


\begin{proof}
  Recall that $\Gamma_s^x(A)=\sigma^{x(1-s)}A\sigma^{xs}$. We
  consider the terms $S_\alpha A S_\alpha ^\dagger$ and
  $\{S_\alpha^\dagger S_\alpha,A \}$ separately. The conjugation of
  $S_\alpha A S_\alpha ^\dagger$ gives
  \begin{align} 
  \label{eq:transformation}
    \big(\Gamma_s^{-x} \circ (A\mapsto S_\alpha A S_\alpha^\dagger) 
      \circ \Gamma_s^x\big)(A) &=
    \Gamma_s^{-x}\big(S_\alpha\Gamma_s^x (A)S_\alpha^\dagger\big)
      = \sigma^{-x(1-s)}S_\alpha \sigma^{x(1-s)}
      A\sigma^{xs} S_\alpha^\dagger\sigma^{-xs}  .
  \end{align}
  Recalling that 
  \begin{align*}
    \sigma^{-x(1-s)} S_\alpha  \sigma^{x(1-s)}
      &= \Delta_\sigma^{-x(1-s)}(S_\alpha) 
        = e^{x(1-s)\omega_\alpha}S_\alpha,\\
    \sigma^{xs}S^\dagger_\alpha \sigma^{-xs} 
      &= \Delta_\sigma^{xs}(S^\dagger_\alpha)
      = e^{xs\omega_\alpha} S^\dagger_\alpha,  \\
  \end{align*}
  we find that, overall, $\big(\Gamma_s^{-x} \circ (A\mapsto S_\alpha A
  S_\alpha^\dagger) \circ \Gamma_s^x\big)(A) =
  e^{x\omega_\alpha}S_\alpha A S_\alpha^\dagger$. 
  
  For the the anti-commutator conjugation, we have:
  \begin{align*}
    \big(\Gamma_s^{-x} \circ (A\mapsto \{S^\dagger_\alpha S_\alpha, A\})
      \circ \Gamma_s^x\big)(A) &= \Gamma_s^{-x}\big( 
       \{S_\alpha^\dagger S_\alpha, \Gamma_s^{x}(A)\}\big) 
     = \sigma^{-x(1-s)} \{S_\alpha^\dagger S_\alpha,
       \sigma^{x(1-s)}A\sigma^{xs}\}\sigma^{-xs}\\
    &= \sigma^{-x(1-s)} S_\alpha^\dagger S_\alpha \sigma^{x(1-s)}A
      + A\sigma^{xs}S_\alpha^\dagger S_\alpha \sigma^{-xs}
    = \{S_\alpha^\dagger S_\alpha, A\} ,
  \end{align*}
  where in the last equality we used the fact that for every $t$,
  \begin{align*}
    \sigma^t S_\alpha^\dagger S_\alpha \sigma^{-t}
     = \sigma^t S_\alpha^\dagger\sigma^{-t}
       \sigma^{t} S_\alpha \sigma^{-t} =
       \Delta_\sigma^{t}(S_\alpha) \Delta_\sigma^{t}(S^\dagger_\alpha)
     = e^{-t\omega_\alpha}S_\alpha e^{t\omega_\alpha}S^\dagger_\alpha,
     = S_\alpha S^\dagger_\alpha .
  \end{align*}
\end{proof}

With \Lem{lem:modular-diag} at hand, we turn to the proof of
\autoref{thm:local}.

\subsubsection{Proof of \autoref{thm:local}}
\label{sec:proof-local}

By assumption, $\mcL$ is given by
\begin{align*}
  \mcL(A) = \sum_j c_j \big(L_jA L_j^\dagger 
    - \frac{1}{2}\{ L_j^\dagger L_j, A\}\big),
\end{align*}
but in addition, by the QDB condition and
\autoref{thm:canonical_form}, it is given by a canonical form
\begin{align*}
  \mcL(A) = \sum_{\alpha\in I} \gamma_\alpha e^{-\omega_\alpha /2}
    \Big( S_\alpha A S_\alpha^\dagger 
    - \frac{1}{2}\big\{S_\alpha^\dagger S_\alpha, A\big\}\Big).
\end{align*}
Therefore, by \autoref{thm:freedom}, there must be a unitary
$U_{\alpha,j}$ that connects these two bases, i.e.,
\begin{align}
\label{eq:L-S-connection}
  \sqrt{\gamma_\alpha}e^{-\omega_\alpha/4} S_\alpha = \sum_j
  U_{\alpha, j} \sqrt{c_j}L_j .
\end{align}
Let us now consider the superoperator $\mcL_x \EqDef \Gamma_s^{-x}
\circ \mcL \circ \Gamma_s^x$. By \Lem{lem:modular-diag}, it is
given by
\begin{align*}
  \mcL_x(A) = \sum_{\alpha\in I} \gamma_\alpha e^{-\omega_\alpha /2}
    \Big( e^{x\omega_\alpha}S_\alpha A S_\alpha^\dagger
      - \frac{1}{2}\big\{S_\alpha^\dagger S_\alpha, A\big\}\Big),
\end{align*}
and therefore by \Eq{eq:L-S-connection}, it can also be 
written in terms of the $L_j$ basis as
\begin{align*}
  \mcL_x(A) &= \sum_{\alpha,j,j'} e^{x\omega_\alpha} U_{\alpha, j}
  U_{\alpha, j'}^* \sqrt{c_j c_{j'}} L_j A L_{j'}^\dagger 
    - \frac{1}{2}\sum_j c_j \big\{ L_j^\dagger L_j, A\}.
\end{align*}
Note that the anti-commutator term remained unchanged. To
analyze the first term, we define the matrix $B$ 
\begin{align*}
  B_{j'j}\EqDef\sum_\alpha e^{\omega_\alpha}
      U_{\alpha,j}U_{\alpha, j'}^*.
\end{align*}
Note that $B=U^\dagger\cdot \diag(\{e^{\omega_\alpha}\})\cdot U$,
and therefore $B^x = U^\dagger\cdot\diag(\{e^{x\omega_\alpha}\})
\cdot U$, hence
\begin{align}
\label{eq:Lx}
  \mcL_x(A) = \sum_{j,j'} B^x_{j'j} \sqrt{c_{j'}c_j } L_j A L^\dagger_{j'}
    - \frac{1}{2}\sum_j c_j\big\{ L_j^\dagger L_j, A\} .
\end{align}
To prove the locality of $\mcL_x$, we would like to show that
$B_{j'j}^x \propto \delta_{jj'}$, , i.e., that $B^x$ is diagonal,
for at least one $\tilde{x}\neq 0$, which will then imply that it
also diagonal for \emph{all} $x\in \BBR$. This is done using the
QDB condition~\eqref{eq:QDB}, which will let show diagonality
for $\tilde{x}=1$. The QDB condition is
equivalent to 
\begin{align*}
  \mcL_1(A) = \mcL^* (A) = \sum_j c_j \big(L^\dagger_jA L_j
    - \frac{1}{2}\{ L_j^\dagger L_j, A\}\big).
\end{align*}
Comparing to \Eq{eq:Lx} gives the condition
\begin{align*}
  \sum_{j'j} B_{j'j} \sqrt{c_{j'}c_j } L_j A L^\dagger_{j'}
   = \sum_j c_j L^\dagger_jA L_j 
   = \sum_{j} c_{\pi(j)} L_j A L_j^\dagger, 
\end{align*}
where in the last equality we used the definition of $\pi(j)$,
$L_{\pi(j)} = L^\dagger_j$. Let us now use the assumption that the
jump operators $\{L_j\}$ are linearly independent. This implies that
the coefficients of $L_j A L_{j'}^\dagger$ on both sides of the
equation should be identical, and therefore,
\begin{align*}
  B_{j'j}\sqrt{c_{j'}c_j} = c_{\pi(j)}\delta_{j'j}.
\end{align*}
Using the assumption that $c_j>0$,  we conclude that
\begin{align*}
  B_{j'j} = \frac{c_{\pi(j)}}{c_j} \delta_{j'j},
\end{align*}
and therefore,
\begin{align}
  \mcL_x(A) = \sum_j \Big( c_{\pi(j)}^x c^{1-x}_j L_j A L^\dagger_j
    - \frac{c_j}{2} \big\{ L_j^\dagger L_j, A\}\Big) .
\end{align}
Substituting $x=1/2$ concludes the proof. \qedsymb

\subsubsection{Proof of \autoref{thm:quasi-local}}
\label{sec:proof-quasi-local}

As in the proof of \autoref{thm:local}, we show the locality of
$\mcH$ using its modular basis representation. We start by noting
that all the modular basis elements $S_\alpha$ that appear in $\mcL$
in~\eqref{def:canonical_form} (those with $\gamma_\alpha>0$) can be
unitarily expressed by the orthonormal basis $\{P_a\}$. Indeed,
starting from the jump operators representation in
\Eq{eq:Lindbladian2}, we conclude from \autoref{thm:freedom} that
whenever $\gamma_\alpha>0$, $S_\alpha$ can be written in terms of
the jump operators $L_j$. But as $L_j$ can be expanded in terms of
the \emph{local} $\{P_a\}$ operators, it follows that this also holds for
$S_\alpha$ as well. Finally, by the orthonormality of both sets of
operators, we conclude that they are unitarily related: There exists
a unitary $U_{a\alpha}$, where $\alpha$ runs over all the indices
$\alpha$ for which $\gamma_\alpha>0$, such that
\begin{align}
\label{eq:S-P-connection}
  S_\alpha = \sum_a U_{a\alpha} P_a .
\end{align}
Plugging the expansion~\eqref{eq:S-P-connection} into the canonical
form~\eqref{def:canonical_form} gives
\begin{align*}
 \mcL(\rho) = \sum_{a,b,\alpha} U_{a\alpha}\gamma_\alpha  
   e^{-\omega_\alpha/2} U_{b\alpha}^* 
     \big(P_a\rho P_b -\frac{1}{2}\{ P_b P_a,\rho\}\big) .
\end{align*}
Comparing with \Eq{eq:LP}, we use the fact that once an orthonormal
basis is fixed, the coefficients of the Lindbladians are also fixed
(see, e.g., Theorem~2.2 in \cRef{Gorini}). Therefore,
\begin{align*}
     C &= U D U^\dagger, & 
     D &\EqDef \diag \big(\{\gamma_\alpha
     e^{-\omega_\alpha/2}\}\big) ,
\end{align*}
where $C$ is the coefficient matrix of the expansion of $\mcL$ in
terms of the $\{P_a\}$ operators (\Eq{eq:LP}).

We now examine the superoperator $\mcL_x \EqDef
\Gamma_s^{-x}\circ\mcL\circ\Gamma_s^x$, which by
\Lem{lem:modular-diag} is given by
\begin{align*}
   \mcL_x(\rho) = \sum_{\alpha\in I} \gamma_\alpha e^{\omega_\alpha(x-1/2)}
     S_\alpha\rho S_\alpha^\dagger  
     - \frac{1}{2}\sum_{\alpha\in I} \gamma_\alpha  
       e^{-\omega_\alpha/2} \big\{S_\alpha^\dagger S_\alpha, \rho\big\}.
       \end{align*}
Using \Eq{eq:S-P-connection}, we can rewrite it in terms of the
$\{P_a\}$ operators as
\begin{align}
\label{eq:Lx-P}
  \mcL_x(\rho) &=  \sum_{a,b} C_{ab}(x) P_a\rho P_b 
      - \frac{1}{2}\sum_{a,b} C_{ab}\{ P_b P_a,\rho\} ,
\end{align}
where we defined
\begin{align} 
\label{def:Cx}
  C(x) \EqDef U \cdot\diag \left(\{\gamma_\alpha
    e^{\omega_\alpha(x-1/2)}\}\right) \cdot U^\dagger .
\end{align}
Note that $C(x)$ commutes with $C(x^\prime)$ for any $x,x^\prime$,
as they are both diagonalized by $U$. As in the proof of the
previous theorem, we know that $C(x)$ is local for $x=0$ and we
would like to prove locality for every $x\in \BBR$. We do this by
using the QDB condition to show that also the $\tilde{x}=1$ is
local, and by showing that any other $x\in \BBR$, $C(x)$ is a simple
function of $C(0)$ and $C(1)$.

Consider then the $\tilde{x}=1$ point. By the QDB
condition \eqref{eq:QDB}, it follows that
\begin{align*}
  \mcL_1 = \Gamma_s^{-1}\circ\mcL\circ\Gamma_s = \mcL^* 
    = \sum_{a,b} C^*_{ab} \big(P_a \rho P_b
    -\frac{1}{2}\{P_aP_b, \rho\}\big) .
\end{align*}
Comparing it to \Eq{eq:Lx-P} and using the fact that $C$ is
Hermitian, we conclude that $C(1)=C^*$, and therefore, $C^*$
commutes with $C=C(0)$. Finally, a simple algebra shows that
\begin{align*}
  C^{1-x}\cdot (C^*)^x &= U \cdot \diag\big(\{\gamma^{1-x}_\alpha\cdot
    e^{-\omega_\alpha(1-x)/2}\}\big) 
      \cdot\diag\big(\{\gamma^x_\alpha 
      \cdot e^{x \omega_\alpha/2}\}\big) \cdot U^\dagger\\
  &= U \cdot\diag\big(\{\gamma_\alpha e^{\omega_\alpha(x-1/2)}\}\big) 
    \cdot U^\dagger = C(x).
\end{align*}
which can also be written as $C(x) = C^{1-x}(0)\cdot
C^x(1)$. Substituting $x=1/2$, and using \Eq{eq:Lx-P} proves the
theorem.

\section{Examples}
\label{sec4}

In this section, we describe two exactly solvable models whose
dynamics is governed by a Lindbladian satisfying, respectively, the
requirements of \autoref{thm:local} and \autoref{thm:quasi-local}
from \autoref{sec:results}. In \autoref{sec4:2} we describe
classical Metropolis-based dynamics that satisfies the conditions of
\autoref{thm:local}, and in \autoref{sec4:1} we describe a family of
quadratic fermion models that satisfy the conditions of
\autoref{thm:quasi-local}. For both models, we prove detailed
balance, unique steady-state and constant spectral gap. As the
models are exactly solvable, we give explicit expressions for their
super Hamiltonians, which are derived independently of the results
of \autoref{sec:mapping}.

\subsection{Model for \autoref{thm:local}: Classical-like Lindbladians}
\label{sec4:2}

Here we describe a system that obeys the requirements of
\autoref{thm:local}.
First, we specify the steady state, which is
the Gibbs state of a classical
Hamiltonian, and then write the corresponding Lindbladian that
annihilates it; this Lindbladian gives rise to classical
thermalization dynamics of the diagonal elements of the density
matrix, while dephasing away the off-diagonal elements. The full
proofs are given in \autoref{App:toy1}.

We consider a 1D lattice with periodic boundary conditions, where a
single qubit occupies each site. Intuitively, we think of each site
$i$ as being able to hold a particle (e.g., a fermion or a hard core
boson) with some energy $\epsilon_i$, or being empty. The state of
the system is then described by a binary string $\ux=(x_1, \ldots,
x_n)\in \{0,1\}^n$, where $x_i$ determines the occupation of the
$i$th site (either $0$ or $1$). To achieve a non-trivial Gibbs
state, we define a constant interaction $u$ acting between nearest
neighbours, which is non-zero if an only if both neighbouring sites
are simultaneously occupied by a particle. To write down the
Hamiltonian of the system, we let
$n_i\EqDef\frac{1}{2}(\Id-Z_i)=\ketbra{1}{1}_i$ denote occupation
number operator at site $i$, and let $\mu>0$ denote a chemical
potential. Our Hamiltonian is then given by
\begin{align*}
  H \EqDef \sum_{i=1}^n (\epsilon_i-\mu)n_i 
    + u\sum_{i=1}^{n} n_i\cdot n_{i+1}. 
\end{align*}
This is a classical Hamiltonian that is diagonal in the
computational basis $\ket{\ux} =
\ket{x_1}\otimes\ldots\otimes\ket{x_n}$, and its Gibbs state is

\begin{align} \label{def:the_steadystate}
  \sigma & = \frac{1}{Z}\sum_{\ux\in \{0,1\}^n}
    e^{-\beta E_x}\ketbra{\ux}{\ux},\\
  E_\ux & \EqDef \sum_{i=1}^n (\epsilon_i-\mu)x_i 
    + u\sum_{i=1}^{n}x_i x_{i+1},\\
  Z & \EqDef \sum_{\ux\in \{0,1\}^n} e^{-\beta E_x}.
\end{align}

We now introduce a Lindbladian that drives the system into $\sigma$.
It is a geometrically $3$-local Lindbladian that is given by
\begin{align}
\label{def:L-toy1}
  \mcL \EqDef \sum_{k,\bi } \gamma_{k,b} \LLkl,
\end{align}
where $k\in\{1,\ldots,n\}$, $\bi\in \{0,1,2\}$ and $\{\gamma_{k,b}\}$ are
some positive (possibly random) $O(1)$ constants. For each $k,\bi$, the
superoperator $\LLkl$ act locally on qubits $k-1,k,k+1$ and
is defined by:
\begin{align}
\label{def:toy1_lindbladian}
  \LLkl(\rho) \EqDef e^{-\beta\omega_{k,b}/2} \left(L_{k,b} \rho L_{k,b}^\dagger 
    - \frac{1}{2}\{L_{k,b}^\dagger L_{k,b} ,\rho\} \right)
  + e^{\beta\omega_{k,b}/2}\left( L_{k,b} ^\dagger \rho L_{k,b}  
    - \frac{1}{2}\left\{L_{k,b}L_{k,b}^\dagger, \rho\right\} \right),
\end{align}
where
$\omega_{k,b} \EqDef -(\epsilon_k-\mu)-u\cdot b$ and 
$L_{k,b}$ are 
$3$-local jump operators given by
\begin{align}
\label{def:toy1_jump_operators}
  L_{k,b} \EqDef \sigma^-_k\otimes\Pi^b_{k-1,k+1} 
    \otimes \Id_\mathrm{rest} .
\end{align}
Above, $\sigma^-_k \EqDef \ketbra{0}{1}_k$ is the ``annihilation
operator'' on site $k$, and $\Pi^b_{k-1,k+1}$ is the projector onto
the subspace in which the sum of the occupation numbers of sites
$k-1$ and $k+1$ is equal to $b$ (for $b\in\{0,1,2\}$).

We will now show that $\sigma$ is the unique steady state of $\mcL$,
and that $\mcL$ is gapped and satisfies QDB.  The key is to show
explicitly that the representation of $\mcL$ in
\Eq{def:toy1_lindbladian} is the canonical representation, from
which it will follow by \autoref{thm:canonical_form} that $\mcL$
satisfies QDB with respect to $\sigma$, and therefore it is a steady
state of $\mcL$. Formally, we claim:
\begin{claim} 
\label{clm:local_toy_model} 
  The following properties hold for the Lindbladian $\mcL$ defined
  in \eqref{def:L-toy1}:
  \begin{enumerate}
    \item \label{clm:bul1} The jump operators $\{L_{k,b}, 
      L_{k,b}^\dagger\}$ are taken from a modular basis (up to
      normalization), and so $\mcL$ is given in the canoniconical
      representation.
  
    \item \label{clm:bul2} $\mcL_{k,b}$ satisfy local 
      detailed-balance with respect to $\sigma$,
      \begin{align*}
        \LLkl  \circ \Gamma_s = \Gamma_s \circ \LLkl^*
        \tab \forall s\in [0,1] .
      \end{align*}
  
    \item \label{clm:bul3} If $\gkl>0$, then $\mcL$ has a unique 
      steady state $\sigma$ defined in \Eq{def:the_steadystate}.
    
    \item \label{clm:bul4} If $\gkl\geq \alpha > 0$ and $\epsilon_k 
      - \mu,u = \bigO{1}$, then $\mcL$ is gapped with
      $\gap{\mcL}=\bOmega{\alpha}$.
  \end{enumerate}
\end{claim}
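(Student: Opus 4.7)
I plan to establish the four claims in order, using the canonical form to handle bullets 1 and 2, classical irreducibility plus coherence decay for bullet 3, and a Knabe-type finite-size criterion applied to the associated super-Hamiltonian for bullet 4. For bullets 1 and 2, I would first verify that each jump operator is an eigenvector of the modular superoperator $\mcG_\sigma$. Because $\sigma$ is diagonal in the computational basis and $L_{k,b}$ maps $\ket{\ux}$ (with $x_k=1$ and $x_{k-1}+x_{k+1}=b$) to $\ket{\ux'}$ where $x'_k$ is flipped to $0$, the energy change is precisely $E_{\ux'}-E_\ux=\omega_{k,b}$, so $\mcG_\sigma(L_{k,b})=e^{-\beta\omega_{k,b}}L_{k,b}$ and $\mcG_\sigma(L_{k,b}^\dagger)=e^{\beta\omega_{k,b}}L_{k,b}^\dagger$. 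Next I would check mutual orthogonality in the Hilbert-Schmidt inner product: any product $L_{k,b}^\dagger L_{k',b'}$ with $(k,b)\neq(k',b')$ contains an off-diagonal single-site factor $\sigma^\pm$ under the trace and therefore vanishes. Normalizing these operators and completing them to an orthonormal basis of each eigenspace of $\mcG_\sigma$ yields a modular basis, proving bullet~1. Substituting into the canonical form \eqref{def:canonical_form} identifies \eqref{def:toy1_lindbladian} as the canonical representation of $\LLkl$ with paired weights $\gamma_\alpha=\gamma_{\alpha'}$ matched by the explicit symmetry of \eqref{def:toy1_lindbladian}; bullet~2 then follows directly from \autoref{thm:canonical_form}.

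For bullet 3, I would exploit that each $L_{k,b}$ is a classical transition between computational basis vectors and that each $L_{k,b}^\dagger L_{k,b}$ is diagonal; consequently $\mcL$ preserves the decomposition of operator space into sectors indexed by the XOR difference $\ux\oplus\uy$ of bra/ket labels. On the diagonal sector, $\mcL$ reduces to a reversible continuous-time Markov chain on $\{0,1\}^n$ with positive single-site flip rates $\gkl e^{\mp\beta\omega_{k,b}/2}$ satisfying classical detailed balance with respect to $\pi(\ux)\propto e^{-\beta E_\ux}$; since every bit flip has positive rate, the chain is irreducible and $\pi$ is its unique invariant measure. On each off-diagonal sector the anti-commutator part of $\LLkl$ yields a strictly positive decay rate while the jump part merely permutes sector basis elements; a Perron-Frobenius argument then confines the restricted spectrum to the open left half-plane, excluding any nonzero fixed point. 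Combining, the unique steady state of $\mcL$ is $\sigma$.

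Bullet 4 is the main obstacle. I would apply \autoref{thm:local} using bullets 1 and 2 to map $\mcL$ to its super-Hamiltonian $\mcH$, then vectorize to obtain a 1D geometrically local frustration-free Hamiltonian $H=\sum_{k,b}\gkl H_{k,b}$ on the doubled lattice, where each $H_{k,b}\succeq 0$ acts on an $O(1)$-site block and annihilates $\dket{\sqrt{\sigma}}$ (using $\LLkl(\sigma)=0$, which follows from local QDB). By the similarity relation, $\gap(\mcL)=\gap(H)$. To bound $\gap(H)$ from below I would apply the finite-size criterion of \cRef{knabe}: compute the spectral gap of the block Hamiltonian $H_{[j,j+\ell]}=\sum_{k\in[j,j+\ell]}\sum_b \gkl H_{k,b}$ for a constant block size $\ell$---a finite-dimensional eigenvalue problem because $\epsilon_k-\mu$ and $u$ are $O(1)$---and verify that the result exceeds the Knabe threshold $\Theta(1/\ell^2)$, from which $\gap(H)=\bOmega{\alpha}$ follows. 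The technical heart is this block diagonalization together with the Knabe-inequality verification for a concrete choice of $\ell$; linearity of $H$ in the $\gkl$ weights makes the $\alpha$-scaling transparent.
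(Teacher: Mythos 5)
Your proposal follows essentially the same route as the paper on all four bullets: verifying that the $L_{k,b}$ are (normalized) modular-basis elements with Bohr frequencies $\beta\omega_{k,b}$ for bullets 1--2, reducing bullet 3 to an irreducible classical Markov chain on the diagonal plus a Perron--Frobenius decay argument for the off-diagonal part, and invoking a Knabe-type finite-size criterion for bullet 4. The only places your sketch is thinner than the paper's proof are (i) the off-diagonal decay, where the essential point is that the gain terms $f_{xy\to x'y'}$ are supported only on flips simultaneously allowed for both $x$ and $y$ (requiring $x_k=y_k$ and matching neighbor occupations), which is what makes the off-diagonal column sum \emph{strictly} smaller than the diagonal loss $\tfrac{1}{2}(g_x+g_y)$, and (ii) the local-gap condition $\gamma_{\mathrm{loc}}>1/2$, which the paper (like your block computation) ultimately verifies only by a finite numerical check for specific choices of $\epsilon_k$, $u$, $\mu$.
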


\begin{proof}
  To show that $\mcL$ is given in the canonical form, we will show
  that $\{L_{k,b}\}$ is a proper modular basis with well-defined
  Bohr frequencies given by $\beta\omega_{k b}$. First, it is easy to see that $\{L_{k,b}\}$ are
  orthogonal and traceless.  To see that they have
  well-defined Bohr frequencies, we need to show that
  $\Delta_\sigma(L_{k,b}) = e^{-\omega_{k,b}}L_{k,b}$, where
  $\Delta_\sigma(L_{k,b}) = \sigma L_{k,b} \sigma^{-1}$. By
  \Eq{def:the_steadystate},
  \begin{align}
  \label{eq:Delta-L}
    \Delta_\sigma(L_{k,b}) = \sum_{\ux^\prime,\ux} 
      e^{-\beta(E_{\ux^\prime}-E_{\ux})}
        \ketbra{\ux^\prime}{\ux^\prime}L_{k,b}\ketbra{\ux}{\ux}.
  \end{align}
  However, by the definition of $L_{k,b}$,
  \begin{align}
  \label{eq:Lkb-cross}
    \bra{\ux^\prime}L_{k,b}\ket{\ux} = 
    \begin{cases}
      1 , & x_k=1, \; x^\prime_k=0,\;  x_i=x^\prime_i \; \mathrm{for all} \; i\neq k ,
        \; \mathrm{and}\; x_{k-1}+x_{k+1} = b  \\
      0, &\mathrm{otherwise} .
    \end{cases}
  \end{align}
  Therefore, whenever $\bra{\ux^\prime}L_{k,b}\ket{\ux}\neq 0$, 
  \begin{align*} 
    E_{x^\prime}-E_x = \sum_{i=1}^n (\epsilon_i-\mu)({x^\prime}_i -x_i) 
      + u\sum_{i=1}^{n}({x^\prime}_i{x^\prime}_{i+1}-x_ix_{i+1})
      = -(\epsilon_k-\mu) - ub = \omega_{k,b}.
  \end{align*}
  Plugging this into \Eq{eq:Delta-L}, we get
  \begin{align*}
    \Delta_\sigma(L_{k,b}) = e^{-\beta\omega_{k,b}}\sum_{\ux^\prime,\ux}
      \ketbra{\ux^\prime}{\ux^\prime}L_{k,b}\ketbra{\ux}{\ux} 
      = e^{-\beta\omega_{k,b}}L_{k,b}.
  \end{align*}
  A similar calculation shows that $\Delta_\sigma(L^\dagger_{k,b})
  =e^{\beta\omega_{k,b}}L^\dagger_{k,b}$.

  Using \autoref{thm:canonical_form} it follows that for every
  $k,b$, $\mcL_{k,b}$ satisfies the QDB condition for $s\in
  [0,1/2)\cup(1/2,1]$ (and therefore also for $s=1/2$) with respect
  to $\sigma$, i.e., $\Gamma_s\circ \mcL_{k,b} = \mcL_{k,b}^*\circ
  \Gamma_s$. As shown in the end of \autoref{sec:QDB}, this implies
  that $\sigma$ is annihilated by $\mcL_{k,b}$. Therefore, $\sigma$
  is a fixed point of $\mcL$, and, moreover, it is a
  ``frustration-free'' Lindbladian. 
  
  Showing uniqueness and gap (bullets \ref{clm:bul3},\ref{clm:bul4})
  are technically more involved, requiring the use of Knabe and
  Perron-Frobenious Theorems, and are therefore deferred to
  \autoref{App:toy1}.
  
\end{proof}

Using the claim above, specifically bullet \ref{clm:bul2}, the
superoperator $\mcH_{k,\bi}\EqDef \Gamma_s ^{-1/2}\circ\LLkl \circ
\Gamma_s^{1/2} $ is self-adjoint and annahilates $\sqrt \sigma$.
Then, using bullet~\ref{clm:bul1}, we can apply
\Lem{lem:modular-diag} to $\LLkl$ given in \Eq{def:toy1_lindbladian}
to get
\begin{align}
\label{def:the_superhamiltonian}
\mcH_{k,\bi}  = -\left(
    L_{k,b} \rho L_{k,b}^\dagger + L_{k,b}^\dagger \rho L_{k,b}        
    -\frac{e^{-\beta\omega_{k,b}/2}}{2}\{L_{k,\bi}^\dagger L_{k,b} ,\rho\} 
   -\frac{e^{\beta\omega_{k,b}/2}}{2}\left\{ L_{k,b} L_{k,b}^\dagger , \rho \right\} .
\right)
\end{align}
Consequently, $\mcH=\sum_{k,\bi}\gamma_{k,\bi}\mcH_{k,\bi}$ is a
frustration-free local super-Hamiltonian. While the locality of $\mcH$
is guaranteed by \autoref{thm:local}, frustration-freeness is an
extra feature that follows from that fact that every $\mcL_{k,b}$ is
locally QDB. This allows us to use tools of frustration-free
Hamiltonians to study $\mcH$ and $\mcL$. For example, we prove
bullet \ref{clm:bul4} of Claim~\ref{clm:local_toy_model} using
techniques from Refs.~\cc{knabe,Lemm,jauslin2021random}. We believe
this method may be beneficial for physicists when investigating
detailed-balance semigroups and specifically Davies
generators\cc{davies,kastoryano2016quantum,bardet2021entropy}. As a
final remark, we mention that this could be extended to more
complicated Lindblad operators, higher dimensional lattices, and other degrees of freedom (e.g.,
bosons with a larger finite set of allowed occupancies per site or qudits).


\subsection{Model for Theorem \ref{thm:quasi-local}: Quadratic fermionic Lindbladians}
\label{sec4:1}

We now describe a family of models that satisfies the requirements
of \autoref{thm:quasi-local}. The models consist of free spinless
fermions on a lattice, coupled to two particle reservoirs, one with
infinitely high chemical potential and one with an infinitely low
chemical potential. The high chemical potential reservoir emits
particles into the system, and the low chemical potential reservoir
takes particles out of the system. The reservoirs are assumed to be
large and Markovian, such that they can be integrated out and result
in a dissipative Lindblad evolution of the lattice. A physical
realization of such settings can be found in, e.g., Section 3 of
\cRef{Moshe_fermions}; such Lindbladians can be used for the
dissipative generation of topologically nontrivial states
\cc{Shavit_Goldstein,Beck_Goldstein}.

The models are governed by quadratic
Lindbladians of the
form\cc{Prosen_2008,Moshe_fermions,barthel2021solving}
\begin{align} \label{def:lindbladian_toy2}
\mcL (\rho)= \sum_{i,j} 
\gin_{i j} \left(a_i^\dagger 
\rho a_j -\frac{1}{2}\{a_j a_i^\dagger,\rho\}\right)
+\gout_{j i}\left( a_i \rho a_j^\dagger
-\frac{1}{2}\{a_j^\dagger a_i,\rho\} \right),
\end{align}
$a_i$, $a_i^\dagger$ are, respectively fermionic annihilation and
creation operators obeying the standard anticommutation relations,
and where $\gin,\gout$ are positive semi definite matrices with
eigenvalues $d_k^\mathrm{in},d_k^\mathrm{out}$, respectively.
Physically, $\gin$ ($\gout$) is responsible for the absorption
(emission) of particles from (into) the environment. This model can
be seen as a special case of a scenario which is treated by\footnote{This 
can be seen, for example, by expanding the fermionic
operators in a Majorana basis\cc{Prosen_2008}, which is Hermitian,
orthonormal and local (in the fermionic sense). Then, expressing the coefficients matrix $C$ in terms of
the elements
of $\gin,\gout$ 
it is easy to show
that $C\cdot C^*$ is gapped if and only if $\gin$ and
$\gout$ are, and \autoref{thm:quasi-local} will indicate that the
super-Hamiltonian has exponentially decaying interactions.}
\autoref{thm:quasi-local}. However, we find it more illuminating to not use it, but rather prove directly the properties of 
this specific model.
Under the assumptions pronounced in the claim below, we show that the Lindbldian satisfies QDB with respect to a unique steady state which is Gaussian. This is proven similarly to \ref{clm:local_toy_model}, by employing the canonical representation of $\mcL$.

\begin{claim} \label{clm:q-local_toy_model}
  Let $\mcL$ be the Lindbladian in \eqref{def:lindbladian_toy2}.
  Suppose that $\gin$ and $\gout$ are commuting, full-rank matrices. Then
  \begin{enumerate}
  
    \item \label{bul1:toy2} There is a unique steady state of a
      Gaussian form:
        \begin{align*}
        \sigma & = \frac{1}{Z}\exp(-H_{ss}),\\
        \mathrm{where } & \enspace H_{ss}=\sum_{i,j}h_{i j}a_i^\dagger a_j.
        \end{align*}
  
    \item \label{bul2:toy2} $\mcL$ satisfies quantum detailed balance with respect to $\sigma$ for any $s\in[0,1]$.
  
    \item \label{bul3:toy2} Let $\alpha>0$ be a number such that $\gin+\gout\geq \alpha \Id$, then $\gap(\mcL)\geq \frac \alpha 2$.
      In particular, $\mcL$ is gapped if the minimal eigenvalue of $\gin+\gout$ is $\bOmega 1$.
  
  \end{enumerate}
\end{claim}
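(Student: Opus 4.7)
The plan is to exploit the quadratic nature of $\mcL$ by simultaneously diagonalizing $\gin$ and $\gout$, which is allowed by the commutativity hypothesis. Let $U$ be a common unitary diagonalizer and introduce new fermionic modes $b_k \EqDef \sum_i (U^\dagger)_{ki}\, a_i$; these obey canonical anticommutation relations. Substituting $a_i = \sum_k U_{ik} b_k$ into \eqref{def:lindbladian_toy2} kills all cross terms and yields
\begin{align*}
  \mcL = \sum_k \mcL_k, \qquad
  \mcL_k(\rho) \EqDef d^{\mathrm{in}}_k\Big(b^\dagger_k \rho\, b_k
    -\tfrac12\{b_k b^\dagger_k,\rho\}\Big)
  + d^{\mathrm{out}}_k\Big(b_k \rho\, b^\dagger_k
    -\tfrac12\{b^\dagger_k b_k,\rho\}\Big) .
\end{align*}
Because operators bilinear in $b_k, b^\dagger_k$ have even fermion parity, the fermionic signs cancel in pairs, so $\{\mcL_k\}$ mutually commute as superoperators and act on independent single-mode subalgebras. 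Since $\gin,\gout$ are full-rank, all $d^{\mathrm{in}}_k, d^{\mathrm{out}}_k>0$, which makes $\beta_k \EqDef \log(d^{\mathrm{out}}_k/d^{\mathrm{in}}_k)$ finite.

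For bullet~\ref{bul1:toy2}, a direct computation shows that $\sigma_k \propto e^{-\beta_k b^\dagger_k b_k}$ is annihilated by $\mcL_k$. Uniqueness on each mode follows because $\mcL_k$ restricted to the two-dimensional population subspace is an irreducible two-state rate matrix with one-dimensional kernel, and restricted to each coherence is multiplication by the nonzero constant $-\tfrac12(d^{\mathrm{in}}_k+d^{\mathrm{out}}_k)$. Taking the tensor product gives the unique steady state $\sigma=\bigotimes_k \sigma_k$ of $\mcL$; rotating back to the $a_i$-basis writes it as $\tfrac{1}{Z}\exp\!\big(-\sum_{ij} h_{ij}\, a^\dagger_i a_j\big)$ with $h = U\cdot\diag(\beta_k)\cdot U^\dagger$.

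For bullet~\ref{bul2:toy2}, the plan is to recognize the $b_k$-basis representation of $\mcL$ as already being in the canonical form of \autoref{thm:canonical_form}. Using $[b^\dagger_k b_k, b_k] = -b_k$ and the product structure of $\sigma$, one gets $\Delta_\sigma(b_k) = e^{\beta_k} b_k$ and $\Delta_\sigma(b^\dagger_k) = e^{-\beta_k} b^\dagger_k$, so, after normalization, $\{b_k, b^\dagger_k\}$ are mutually orthonormal eigenoperators of $\Delta_\sigma$ with Bohr frequencies $-\beta_k$ and $+\beta_k$; they can be completed to a modular basis of $\Delta_\sigma$ by appending operators carrying zero canonical weight. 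Matching prefactors in \autoref{thm:canonical_form}, the required symmetry $\gamma_\alpha = \gamma_{\alpha'}$ reduces to $d^{\mathrm{in}}_k e^{\beta_k/2} = d^{\mathrm{out}}_k e^{-\beta_k/2}$, which holds by definition of $\beta_k$. The theorem then delivers QDB with respect to $\sigma$ for every $s\in[0,1]$.

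For bullet~\ref{bul3:toy2}, I would diagonalize each $\mcL_k$ explicitly on the four-dimensional single-mode operator space. The population block is the classical rate matrix with eigenvalues $0$ and $-(d^{\mathrm{in}}_k+d^{\mathrm{out}}_k)$, and each of $\ketbra{0}{1}, \ketbra{1}{0}$ is an eigenoperator with eigenvalue $-\tfrac12(d^{\mathrm{in}}_k+d^{\mathrm{out}}_k)$. Since the $\mcL_k$ commute and act on disjoint modes, $\Spec(\mcL) = \{\sum_k \lambda_k : \lambda_k \in \Spec(\mcL_k)\}$, so the smallest nonzero real part in absolute value is $\tfrac12 \min_k(d^{\mathrm{in}}_k + d^{\mathrm{out}}_k) = \tfrac12\,\lambda_{\min}(\gin+\gout) \ge \alpha/2$. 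The main care point throughout is tracking fermionic anticommutation signs cleanly in the Bogoliubov rotation and verifying that the proper subset $\{b_k, b^\dagger_k\}_k$ genuinely embeds into a modular basis (with vanishing weights on its complement), so that \autoref{thm:canonical_form} applies as stated.
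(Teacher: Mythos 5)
Your proposal is correct and follows essentially the same route as the paper: simultaneous diagonalization of $\gin,\gout$ into single-particle eigenmodes, decomposition of $\mcL$ into commuting single-mode Lindbladians whose spectra give the gap $\tfrac12\min_k(d^{\mathrm{in}}_k+d^{\mathrm{out}}_k)$, a product Gaussian steady state, and QDB via the observation that the rotated annihilation/creation operators are canonical jump operators with well-defined Bohr frequencies. The only difference is that you spell out a few points the paper leaves implicit (parity-based commutation of the $\mcL_k$, the explicit four-dimensional single-mode spectrum, and the normalization/embedding of $\{b_k,b_k^\dagger\}$ into a modular basis), all of which check out.
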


\begin{proof}
We prove the last statement by transforming into a more convenient single-particle basis for which the Lindbladian decomposes to a sum of single mode sub-Lindbladians. To achieve this, we use the commutativity of $\gin$ and $\gout$, which implies that they can be simultaneously diagonalized:
\begin{align*}
    \gin_{ij}=\sum_{i,j} u_{ik}d^\mathrm{in}_k u_{jk}^*,\\
    \gout_{ij}=\sum_{i,j} u_{ik}d^\mathrm{out}_k u_{jk}^*.
\end{align*}
Plugging this expansion into $\mcL$ gives
\begin{align}\label{eq:canonical_toy_fermions}
\begin{split}
\mcL (\rho) & = \sum_{i,j,k}
u_{ik}d^\mathrm{in}_k u_{jk}^* \left(a_i^\dagger 
\rho a_j -\frac{1}{2}\{ a_ja_i^\dagger,\rho\}\right)
+u_{jk}d^\mathrm{out}_k u_{ik}^*\left( a_i \rho a_j^\dagger
-\frac{1}{2}\{a_j^\dagger a_i,\rho\} \right)\\
& = \sum_{k}
d^\mathrm{in}_k  \left(c_k^\dagger 
\rho c_k -\frac{1}{2}\{ c_k c_k^\dagger,\rho\}\right)
+d^\mathrm{out}_k \left( c_k \rho c_k^\dagger
-\frac{1}{2}\{c_k^\dagger c_k,\rho\} \right)
\EqDef \sum_k \mcL_k(\rho),
\end{split}\end{align}
where we defined $c_k\EqDef \sum_i u_{i,k}^* a_i$, the annihilation operators corresponding to the single-particle eigenmodes of $\gin$ and $\gout$.
Notice that in the new basis, $\mcL$ decomposes into a sum of single mode Lindbladians $\mcL_k$, as promised. 
By calculating the spectrum of each $\mcL_k$, we see that the 
\begin{align*}
    \gap{(\mcL)}
    =\min_k\gap{(\mcL_k)}
    =\min_k{\frac{d^\mathrm{in}_k+d^\mathrm{out}_k}{2}} 
\end{align*}
which is bullet~\ref{bul3:toy2}.
Moreover, each $\mcL_k$ has a unique zero state $\sigma_k$.
Solving for each $\mcL_k$ it is easy to see that it has the form $\sigma_k \propto \exp{\big(-\epsilon_k c_k^\dagger c_k\big)}$ with $\epsilon_k\EqDef -\log\big(\frac{d_k^\mathrm{in}}{d_k^\mathrm{out}}\big)$. The global steady state is then unique, being the product of all single mode zero states, $\sigma=\bigotimes_k \sigma_k \propto \exp{\big(-\sum_k \epsilon_k c_k^\dagger c_k\big)}$. Going back to the original $a_j$ basis completes the proof of bullet~\ref{bul1:toy2}.
Finally, bullet~\ref{bul2:toy2} follows from the observation that $\{c_k,c_k^\dagger \}$ are canonical jump operators, namely, they have well-defined Bohr frequencies:
\begin{align*}
    \sigma c_k \sigma^{-1} = e^{\epsilon_k}c_k ,\\
    \sigma c_k^\dagger \sigma^{-1} = e^{-\epsilon_k}c_k^\dagger .
\end{align*}
The above identities can be easily checked in the eigenbasis of the occupation number operators, $c_k^\dagger c_k$.
\end{proof}
We remark that: (1) The above can be also proven using correlation matrix formalism and the \emph{continuous Lyaponouv equation}\cc{barthel2021solving}. (2) Starting from a Gaussian steady state and going to the corresponding single-particle eigenbasis one can show that $\gin$,$\gout$ must be diagonal in the same basis. Hence, QDB is satisfied if and only if $\gin$, $\gout$ commute.

As a corollary, the super-Hamiltonian can be derived, thus verifying the results of \autoref{thm:quasi-local} for the current model:
\begin{claim} \label{clm:super-Hamiltonian-2}
Let $\mcL$ be a Lindbladian as in \eqref{def:lindbladian_toy2} with
commuting $\gin,\gout$. Then $\mcH=-\glg$ is given by
\begin{align} \label{eq:Hamiltonian_toy1}
    \mcH(\rho) = - \sum_{i,j}  (\sqrt{\gin\cdot \gout})_{i j}a_i^\dagger  \rho   a_j +
    (\sqrt{\gin\cdot \gout})^*_{i j}a_i  \rho a_j^\dagger -  \frac{\gin_{ij}}{2}\{a_j a_i^\dagger ,\rho\} - \frac{\gout_{ji}}{2}\{a_j^\dagger a_i ,\rho\}.
\end{align}
\end{claim}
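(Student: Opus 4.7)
The plan is to compute $\mcH=-\glg$ mode by mode, exploiting the single-mode decomposition $\mcL=\sum_k \mcL_k$ in \eqref{eq:canonical_toy_fermions} that was already established in the proof of Claim~\ref{clm:q-local_toy_model}. Each $\mcL_k$ involves only the two operators $c_k,c_k^\dagger$ with $c_k\EqDef\sum_i u_{ik}^* a_i$, so the transformation by $\Gamma_s^{\pm 1/2}$ reduces to a single-mode calculation, after which I would translate the final expression back into the $a_i$ basis.

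The first step is to recognize that \eqref{eq:canonical_toy_fermions} is already the canonical form of \autoref{thm:canonical_form}: the pair $\{c_k,c_k^\dagger\}$ are eigenoperators of $\mcG_\sigma$ with Bohr frequencies $-\epsilon_k$ and $+\epsilon_k$ respectively (as verified at the end of the proof of Claim~\ref{clm:q-local_toy_model}), and $\epsilon_k=-\log(d_k^{\mathrm{in}}/d_k^{\mathrm{out}})$. Matching coefficients, the canonical weight $\gamma_\alpha e^{-\omega_\alpha/2}$ equals $d_k^{\mathrm{in}}$ for $S_\alpha=c_k^\dagger$ and $d_k^{\mathrm{out}}$ for $S_\alpha=c_k$, so in both cases $\gamma_\alpha=\sqrt{d_k^{\mathrm{in}}d_k^{\mathrm{out}}}$. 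Applying \Lem{lem:modular-diag} at $x=1/2$ the factor $e^{x\omega_\alpha}$ exactly cancels the $e^{-\omega_\alpha/2}$ in the weight of the $S_\alpha\rho S_\alpha^\dagger$ term, yielding
\begin{align*}
\Gamma_s^{-1/2}\circ\mcL_k\circ\Gamma_s^{1/2}(\rho)
= \sqrt{d_k^{\mathrm{in}}d_k^{\mathrm{out}}}\bigl(c_k^\dagger\rho c_k+c_k\rho c_k^\dagger\bigr)
-\tfrac{d_k^{\mathrm{in}}}{2}\{c_k c_k^\dagger,\rho\}
-\tfrac{d_k^{\mathrm{out}}}{2}\{c_k^\dagger c_k,\rho\},
\end{align*}
so $\mcH_k$ is the negative of this expression.

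Summing over $k$ and substituting $c_k=\sum_i u_{ik}^* a_i$, I would then use the simultaneous diagonalizations $\gin=U\,\mathrm{diag}(d_k^{\mathrm{in}})\,U^\dagger$ and $\gout=U\,\mathrm{diag}(d_k^{\mathrm{out}})\,U^\dagger$ (which exist precisely because of the commutativity assumption). The relevant identities are $\sum_k u_{ik}\sqrt{d_k^{\mathrm{in}}d_k^{\mathrm{out}}}\,u_{jk}^*=(\sqrt{\gin\gout})_{ij}$ and, via Hermiticity of the square root, $\sum_k u_{ik}^*\sqrt{d_k^{\mathrm{in}}d_k^{\mathrm{out}}}\,u_{jk}=(\sqrt{\gin\gout})^*_{ij}$; analogous computations give $\sum_k d_k^{\mathrm{in}} c_k c_k^\dagger=\sum_{i,j}\gin_{ij}a_j a_i^\dagger$ (after relabeling $i\leftrightarrow j$) and $\sum_k d_k^{\mathrm{out}} c_k^\dagger c_k=\sum_{i,j}\gout_{ji}a_j^\dagger a_i$. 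Collecting these four contributions produces exactly \eqref{eq:Hamiltonian_toy1}.

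The only obstacle is bookkeeping: making sure the combination $u_{ik}^* u_{jk}$ (which arises because $c_k$ carries the conjugated unitary entries) produces $(\sqrt{\gin\gout})^*$ in the $a_i\rho a_j^\dagger$ term rather than $(\sqrt{\gin\gout})$, and that the anti-commutators receive $\gin_{ij}$ and $\gout_{ji}$ respectively with the correct index ordering. Beyond this, there is nothing conceptually hard: the commutativity of $\gin$ and $\gout$ is used solely to guarantee that $\sqrt{\gin\gout}$ is well defined and diagonalized by the same unitary $U$, which is what makes the back-transformation close up into the compact form stated in the claim.
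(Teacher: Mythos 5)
Your proposal is correct and follows essentially the same route as the paper: both use the single-mode decomposition $\mcL=\sum_k\mcL_k$, the fact that $c_k,c_k^\dagger$ have well-defined Bohr frequencies $\mp\epsilon_k$ so that conjugation by $\Gamma_s^{\pm1/2}$ turns the coefficients $d_k^{\mathrm{in}},d_k^{\mathrm{out}}$ of the jump terms into $\sqrt{d_k^{\mathrm{in}}d_k^{\mathrm{out}}}$ while leaving the anti-commutators untouched, and then transform back to the $a_i$ basis via the common diagonalizing unitary. The only cosmetic difference is that you invoke \Lem{lem:modular-diag} after identifying the canonical weights $\gamma_\alpha=\sqrt{d_k^{\mathrm{in}}d_k^{\mathrm{out}}}$, whereas the paper redoes the conjugation computation explicitly "in a similar fashion" to that lemma; the index bookkeeping you flag works out exactly as you describe.
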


\begin{proof}
We prove this in a similar fashion to \Lem{lem:modular-diag}.
Recall that $c_k,c_k^\dagger$ are canonical jump operators, that is, $\sigma c_k \sigma^{-1} = e^{\epsilon_k}c_k=\frac{d_k^\mathrm{out}}{d_k^\mathrm{in}}c_k$
and
$\sigma c_k^\dagger \sigma^{-1} =
e^{-\epsilon_k} c_k^\dagger
= \frac{d_k^\mathrm{in}}{d_k^\mathrm{out}}c_k^\dagger$.
Using the Lindblad form in $\Eq{eq:canonical_toy_fermions}$, we see that 
\begin{align*}
    \mcH(\rho) &  = -\glg (\rho) \\
    & = -\sum_k d_k ^{in} (\sigma^{-(1-s)/2} c_k^\dagger \sigma^{(1-s)/2}) \rho (\sigma^{s/2} c_k \sigma^{-s/2})
    - \sum_k d_k ^{out} (\sigma^{-(1-s)/2} c_k \sigma^{(1-s)/2}) \rho (\sigma^{s/2} c_k^\dagger \sigma^{-s/2}) - \frac{1}{2}\{\ldots\} \\
    & = -\sum_k d_k ^{in} (e^{(1-s)\epsilon_k/2} c_k^\dagger )  \rho 
    ( e^{s\epsilon_k/2} c_k  
    - \sum_k d_k ^{out} (e^{-(1-s)\epsilon_k/2} c_k)  \rho (e^{-s\epsilon_k/2} c_k^\dagger ) - \frac{1}{2}\{\ldots\} \\
    & = -\sum_k d_k ^{in} e^{\epsilon_k/2} c_k^\dagger  \rho  c_k
    - \sum_k d_k ^{out} e^{-\epsilon_k/2} c_k \rho c_k^\dagger - \frac{1}{2}\{\ldots\} \\
    & = -\sum_k \sqrt{d_k ^{out}d_k ^{in}}(c_k^\dagger  \rho   c_k  + c_k  \rho c_k^\dagger) - \frac{1}{2}\{\ldots\}
\end{align*}
Similarly to \Lem{lem:modular-diag}, it can be checked that the anti-commutator terms  $\{\ldots\}$ are unchanged by the transformation.
Switching back to the original (local) creation and annihilation operators $a_k$ yields
\begin{align*}
    \mcH(\rho) & =
    -\sum_{i,j,k} u_{i k}\sqrt{d_k ^{out}d_k ^{in}} u_{jk}^* (a_i^\dagger  \rho   a_j)
    + u_{i k}^*\sqrt{d_k ^{out}d_k ^{in}} u_{jk} (a_i  \rho a_j^\dagger) - \frac{1}{2}\{\ldots\} \\
    & = -\sum_{i,j} (\sqrt{\gin\cdot \gout})_{i j}a_i^\dagger  \rho   a_j 
    +(\sqrt{\gin\cdot \gout})_{i j}^* a_i  \rho a_j^\dagger) - \frac{1}{2}\{\ldots\}.
\end{align*}
\end{proof}

Since we are interested in local Lindbladians, we consider fermions
that occupy the sites of a lattice. The locality properties of $\sqrt{\gin\cdot \gout}$ then from \Lem{lem:exp-decay-matrix}.
For the purpose of illustration, let us treat explicitly the case of a 1D lattice
with periodic boundary conditions. We focus on commuting
$\gin,\gout$ that connect nearest neighbours sites only, i.e.
tridiagonal matrices (augmented by the upper-right and lower-left elemetns). Moreover, we consider
translation invariant systems (Toeplitz matrices), such that
\begin{align} \label{eq:local_gammas}
\gin =
\begin{pmatrix}
\gin_0 & \gin_1 & 0 & & \hdots & 0 & \gin_1 \\
\gin_1 & \gin_0 & \gin_1 & & & & 0 \\
0 & \gin_1  &\gin_0   & \\
\vdots & &  &  \ddots &  & & \vdots \\
& & &  & \gin_0 & \gin_1 & 0\\
0 & & & &\gin_1 & \gin_0 & \gin_1 \\
\gin_1 & 0 & & \hdots  & 0 & \gin_1 & \gin_0 
\end{pmatrix}, \quad
\gout = \begin{pmatrix}
\gout_0 & \gout_1 & 0 & & \hdots & 0 & \gout_1 \\
\gout_1 & \gout_0 & \gout_1 & & & & 0 \\
0 & \gout_1  &\gout_0   & \\
\vdots & &  &  \ddots &  & & \vdots \\
& & &  & \gout_0 & \gout_1 & 0\\
0 & & & &\gout_1 & \gout_0 & \gout_1 \\
\gout_1 & 0 & & \hdots  & 0 & \gout_1 & \gout_0 
\end{pmatrix}.
\end{align}

\begin{figure} 
    \begin{subfigure}{.45\textwidth}
        \begin{center}
        \includegraphics[width=1\linewidth]{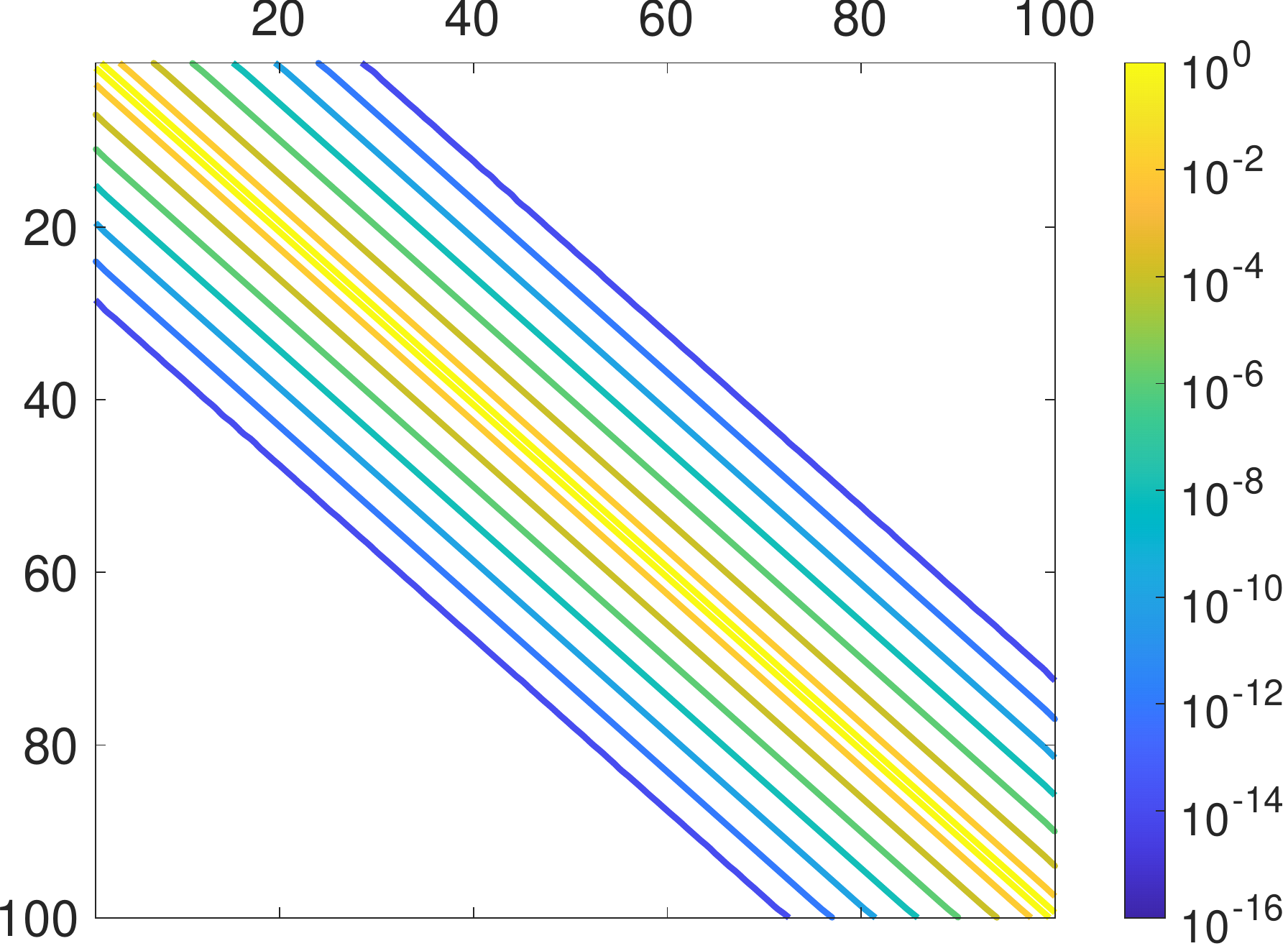} 
        \label{fig:graph1}
        \end{center} 
    \end{subfigure}
    \hfill
    \vspace{0.2 cm}
    \begin{subfigure}[h]{0.45\linewidth}
        \begin{center}
        \includegraphics[width=1\linewidth]{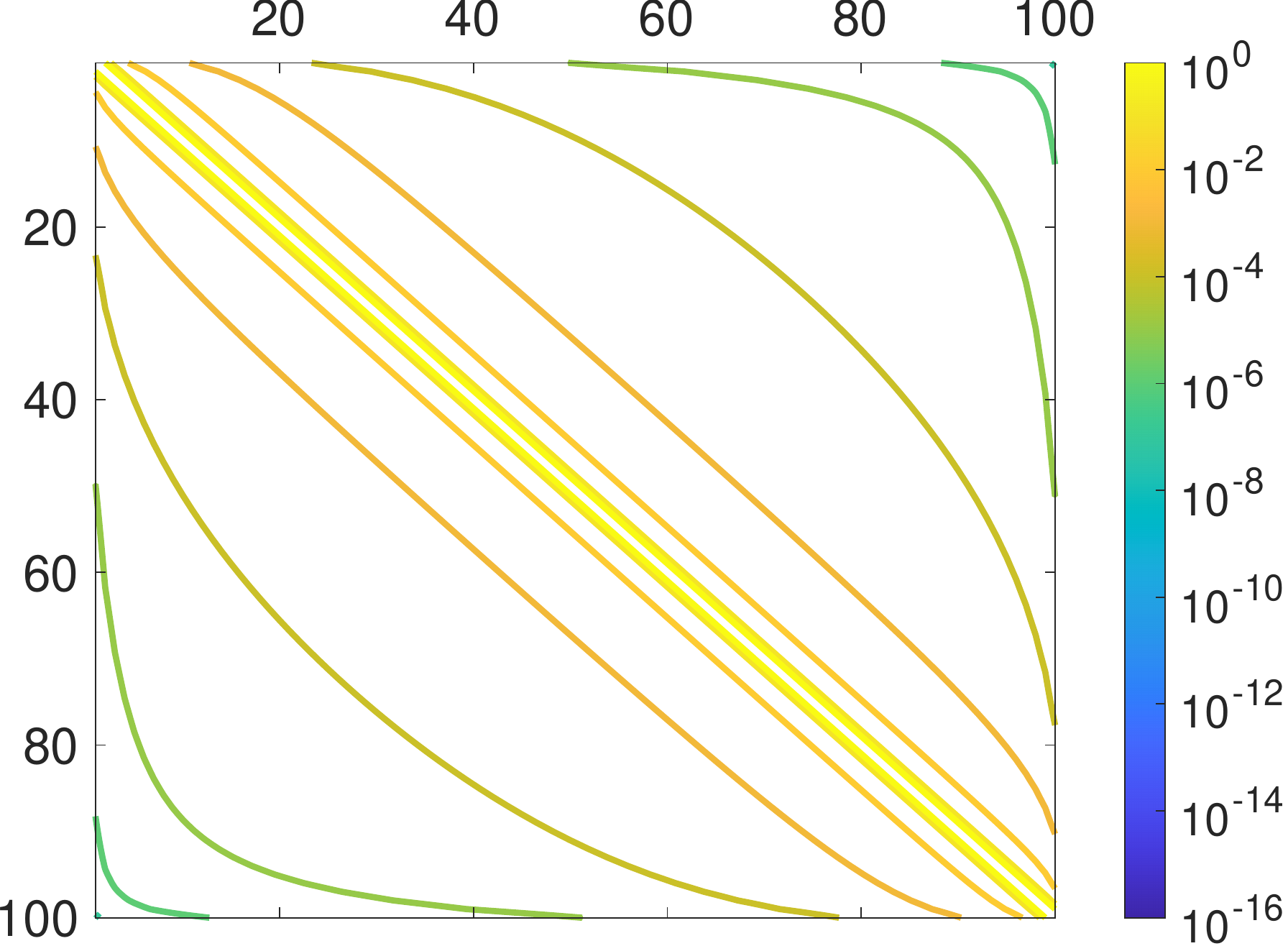} 
        \label{fig:graph2}
        \end{center}
    \end{subfigure}
\caption{Plot of the matrix elements,
$(\sqrt{\gin\cdot\gout})_{ij}$, for $n=100$. The horizontal and
vertical axis represent $i$ and $j$, respectively. Left panel:
gapped $\gin\cdot\gout$, with $\gin_0,\gin_1=(3,1)$,
$\gout_0,\gout_1=(2,0.5)$, Right panel: gapless $\gin\cdot\gout$,
with $\gin_0,\gin_1=(3,1)$, $\gout_0,\gout_1=(2,1)$.}
\label{fig:graphs} 
\end{figure}

The canonical jump operators $c_k$ of such a Lindbladian are
achieved by diagonalizing $\gin$ and $\gout$, which, due to
translation invariance, are given by the a discrete Fourier
transform of the original $a_j$:
\begin{align*}
    c_k = \frac{1}{\sqrt{n}}\sum_j e^{-\frac{2\pi i}{n}k  j}a_j, \tab k = 1,\ldots, n.
\end{align*} 
These are in-fact non-local operators, but linear combinations of
such (as anticipated by the proof of \autoref{thm:quasi-local}). By
applying the discrete Fourier transform to \ref{eq:local_gammas}, we
see that $d^\mathrm{in}_k = \gin_0 + 2\gin_1 \cos(\frac{2\pi k}{n})$
and similarly $d^\mathrm{out}_k=\gout_0 + 2\gout_1 \cos(\frac{2\pi
k}{n})$. Therefore, $\gin,\gout$ are gapped\footnote{In the sense
that their smallest non-zero eigenvalue does not vanish as
$n\rightarrow \infty$.} when $\gin_0-2\gin_1=\bOmega{1}>0$ and
$\gout_0-2\gout_1=\bOmega{1}>0$, respectively. We also require that
$\gin_0,\gout_0=\bigO{1}$ for the Lindbladian to have bounded norm.
As we now show, the gaps in $\gin$ and $\gout$ stated above are
responsible for the decay of the matrix elements of
$\sqrt{\gin\cdot\gout}$, and therefore determine the locality of the
super-Hamiltonian \eqref{eq:Hamiltonian_toy1}. This is due to the
following lemma:
\begin{lemma} \label{lem:gap_in_gamma}
The matrices in \Eq{eq:local_gammas} satisfy
\begin{align*}
    |\sqrt{\gin\cdot \gout})_{i j}|\leq 2\frac{\sqrt{\gin_0}}{\left(1-\frac{2\gin_1}{\gin_0}\right)^2}
    \frac{\sqrt{\gout_0}}{\left(1-\frac{2\gout_1}{\gout_0}\right)^2} \left[
    \left(\frac{2\gin_1}{\gin_0}\right)^{d(i,j)/2}
     +
     \left(\frac{2\gout_1}{\gout_0}\right)^{d(i,j)/2}\right] 
\end{align*}
where $d(i,j)\EqDef \min\{|i-j|,n-|i-j|\}$ is the metric on the circle.
\end{lemma}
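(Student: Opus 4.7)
The plan is to exploit the simple circulant structure of $\gin,\gout$ by expanding each square root in a Taylor series in a local shift matrix, and then extracting the exponential decay of matrix elements combinatorially.

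First, since $\gin$ and $\gout$ are commuting positive definite matrices (both diagonal in the discrete Fourier basis), $\sqrt{\gin\cdot\gout}=\sqrt{\gin}\,\sqrt{\gout}$. Write $\gin=\gin_0(I+r_1\tilde T)$ and $\gout=\gout_0(I+r_2\tilde T)$ with $r_1\EqDef 2\gin_1/\gin_0$, $r_2\EqDef 2\gout_1/\gout_0$, and $\tilde T\EqDef (S+S^{-1})/2$, where $S$ is the cyclic shift on $\mathbb{Z}/n\mathbb{Z}$. Then $\tilde T$ is the transition matrix of the simple symmetric random walk on the $n$-cycle; as such, it is doubly stochastic (so $(\tilde T^p)_{ij}\in[0,1]$ for every $p\ge 0$), nearest-neighbour (so $(\tilde T^p)_{ij}=0$ whenever $p<d(i,j)$), and satisfies $\|\tilde T\|=1$.

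Second, the assumptions $2\gin_1<\gin_0$ and $2\gout_1<\gout_0$ imply $r_1,r_2<1$, so the Taylor series $\sqrt{I+r_i\tilde T}=\sum_{n\ge 0}\binom{1/2}{n}r_i^n\tilde T^n$ converges absolutely in operator norm. Multiplying the two series and taking matrix elements gives
\begin{align*}
(\sqrt{\gin\cdot\gout})_{ij}=\sqrt{\gin_0\gout_0}\sum_{n,m\ge 0}\binom{1/2}{n}\binom{1/2}{m}\,r_1^n r_2^m\,(\tilde T^{n+m})_{ij}.
\end{align*}
Using $|\binom{1/2}{n}|\le 1$ together with the two combinatorial properties of $\tilde T^p$ above, the double sum is bounded by $\sqrt{\gin_0\gout_0}\sum_{n+m\ge d}r_1^n r_2^m$, where $d\EqDef d(i,j)$. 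Any pair $(n,m)$ with $n+m\ge d$ satisfies $n\ge d/2$ or $m\ge d/2$, so by a simple union bound,
\begin{align*}
\sum_{n+m\ge d}r_1^n r_2^m\;\le\;\sum_{n\ge d/2}r_1^n\sum_{m\ge 0}r_2^m + \sum_{m\ge d/2}r_2^m\sum_{n\ge 0}r_1^n \;=\;\frac{r_1^{\lceil d/2\rceil}+r_2^{\lceil d/2\rceil}}{(1-r_1)(1-r_2)}.
\end{align*}
Assembling the pieces yields exponential decay of the required form, $\sqrt{\gin_0\gout_0}\,[r_1^{d/2}+r_2^{d/2}]/[(1-r_1)(1-r_2)]$; the precise prefactor $2\sqrt{\gin_0\gout_0}/[(1-r_1)^2(1-r_2)^2]$ stated in the lemma is recovered by being slightly looser on the binomial coefficients or on the inner geometric tails.

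The main obstacle is really only bookkeeping of numerical constants rather than anything conceptual: the essential decay rate $r_i^{d/2}$ is fixed once one splits $n+m\ge d$ into $\{n\ge d/2\}\cup\{m\ge d/2\}$, while the prefactor depends on how sharply the Taylor coefficients $\binom{1/2}{n}$ and the two outer geometric sums are bounded. Verifying the two properties of $\tilde T^p$ (double stochasticity and the vanishing for $p<d(i,j)$) on the periodic cycle follows immediately from interpreting $\tilde T$ as a nearest-neighbour random walk on $\mathbb{Z}/n\mathbb{Z}$, so once the Taylor/hopping structure is set up the argument reduces to elementary geometric-series estimates.
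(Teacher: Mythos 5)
Your proof is correct and follows essentially the same route as the paper's: write $\gin=\gin_0(\Id+r_1 W)$ with $W$ the symmetric nearest-neighbour walk on the cycle, Taylor-expand the square root, and use that $(W^p)_{ij}\in[0,1]$ vanishes for $p<d(i,j)$ together with geometric tail sums. The only (cosmetic) difference is that you multiply the two series and split the constraint $n+m\ge d$ on the Taylor indices, whereas the paper first bounds $(\sqrt{\gin})_{ij}$ and $(\sqrt{\gout})_{ij}$ separately and then splits the matrix-product sum over the intermediate lattice site into two half-neighbourhoods; your bookkeeping in fact yields a slightly sharper constant (single rather than squared factors of $(1-r_i)^{-1}$), which implies the stated bound.
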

As a result, provided that $\gin$ and $\gout$ are gapped, $\mcH$
given in \eqref{eq:Hamiltonian_toy1} is geometrically 2-local
(quadratic) with exponentially decaying interactions. The proof of
\Lem{lem:gap_in_gamma} is technical and thus left to
\autoref{App:toy2}.

We remark that the derivation in \autoref{App:toy2} suggests that
when the gap in $\gin$ (or $\gout$) closes, that is, when
$\gin_0=2\gin_1$, the super-Hamiltonian becomes long range with
polynomially decaying interactions. 
The degree of the polynomial does not allow an area-law statement
for the steady state, according to the results of
\cRef{ref:Kuwajara2020-long-range-AL}. See also
figure~\ref{fig:graphs} for an illustration of the decay in
$\sqrt{\gin\cdot\gout}$.

\section{Discussion and further research}

In this work we have shown how a detailed-balance Lindbladian $\mcL$
can be mapped to a local, self-adjoint superoperator $\mcH$, which
we call a super Hamiltonian. The mapping is via a similarity
transformation, hence we are guaranteed that the Lindbladian
and the super Hamiltonian share the same spectrum (up to an overall
minus sign). Moreover, if $\sigma$ is the steady state of the
Lindbladian, $\sqrt{\sigma}$ is the steady state of $\mcH$. By
vectorizing the super Hamiltonian we get a local Hamiltonian whose
ground state is $\dket{\sigma^{1/2}}$. As a side consequence of our
mapping, we also found a necessary condition for a Lindbladian to
satisfy detailed balance, which can be checked efficiently.

We observed that local expectation values in $\sigma$ map to local
expectation values in the ground state $\dket{\sigma^{1/2}}$, and that the
the mutual information in $\sigma$ is bounded by the entanglement entropy in
$\dket{\sigma^{1/2}}$. Consequently, several well-known results
about the structure of gapped ground states of local Hamiltonians
can be imported to the steady state of gapped, detailed-balanced
Lindbladians. In particular, we have shown how under mild conditions
that can be checked efficiently, the steady state of 1D, gapped,
detailed-balanced Lindbladians satisfies an area-law in mutual
information, and can be well approximated by an efficient MPO. These
results cover many new systems for which the results of
Refs.~\cc{Brandao,DOC} are not known to apply.

The mapping applies for Lindbladians with traceless jump operators
and vanishing Hamiltonian part (a consequence of detailed balance).
However, it also applies to Lindbladians with a Hamiltonian that
commutes with the steady state, since it leaves the steady state
invariant.  An example to such a Lindbldian is the Davies
generator\cc{davies} that describes thermalization. The addition of
the corresponding Hamiltonian should not break primitivity\cc{sanz2010quantum},
and we expect that in many cases it will not close the spectral gap.

Our work leaves several open questions and possible directions for
future research. First, it would be interesting to see what other
results/techniques can be imported from local Hamiltonians to
Lindbladians using our mapping. It would also be interesting to see
if this mapping can be used in numerical simulations. For example,
one can apply DMRG to find the ground state of $\mcH$, and then plug
it back to the Lindbladian to see if this is indeed the fixed point.
Since $\mcH$ can be easily obtained from $\mcL$, this procedure can
be used even if we do not know if the Lindbladian satisfies QDB.

It would also be interesting to further study the various necessary
conditions that are needed to prove an area-law for steady states of
local Lindbladians. In particular, it would be interesting find a
concrete example of a gapped detailed-balance Lindbladian that does not
obey rapid-mixing, thus separating our results from \cRef{Brandao}.
It would also be interesting to see if the conditions under which
our mapping applies can be relaxed. Is there a weaker condition that
still ensure a local $\mcH$?

Finally, it would also be interesting to understand if our mapping
can be used in the opposite direction. Given a local Hamiltonian,
one might ask if it is the vectorization of a super Hamiltonian that
comes from some QDB Lindbladian. In such cases it might be possible
to probe the ground state of the local Hamiltonian by simulating the
time evolution of the Lindbladian on a quantum computer. This might
show that the local Hamiltonian problem for this class of Hamiltonians
in inside BQP. It would be interesting to characterize this class of 
Hamiltonians, and see if they can lead to interesting quantum 
algorithms.

\section{Acknowledgements} 

We are thankful for Curt von Keyserlingk and Jens Eisert for
insightful discussions.
M.G.\ was supported by the Israel Science Foundation (ISF) and the
Directorate for Defense Research and Development (DDR\&D) Grant
No.~3427/21, and by the US-Israel Binational Science Foundation
(BSF) Grant No.~2020072. I.A.\ acknowledges the support of the Israel
Science Foundation (ISF) under the Individual Research Grant
No.~1778/17 and joint Israel-Singapore NRF-ISF Research Grant
No.~3528/20.

\bibliographystyle{ieeetr}
\bibliography{biblist.bib}

\appendix

\section{Proof of \Lem{lem:exp-decay-matrix} }
\label{App:A}

In this appendix, we prove \Lem{lem:exp-decay-matrix}. For
convenience, we first restate it here.
\begin{lemma}
  Let $\lambda_{\mathrm{min}}>0$ be the smallest non-zero eigenvalue
  of $C\cdot C^*$, and assume that $|C_{ab}|\le J$ for every $a,b$. 
  Let $|a-b|$ denote the lattice distance between the supports of
  $P_a, P_b$. Then
  \begin{align*}
      |(C\cdot C^*)^{1/2}_{ab}| 
        \le c_1 J e^{-c_2|a-b|\cdot\lambda_{\mathrm{min}}/J^2}, 
  \end{align*}
  where $c_1, c_2$ are constants that depend only on the geometry of
  the lattice and on $k$.
\end{lemma}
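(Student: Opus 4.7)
The plan is to exploit the sparsity of $C$ (and hence of $M := C\cdot C^*$) inherited from the geometric locality of $\mcL$, combined with a polynomial approximation of the square-root function, to deduce exponential decay of the matrix elements of $M^{1/2}$.

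First, I would establish the bandedness of $M$. Since $\mcL$ is geometrically $k$-body local, $C_{ab}\ne 0$ only when the supports of $P_a$ and $P_b$ both overlap a common local jump operator $L_j$, so $|a-b|\le r$ for a geometric constant $r = \bigO{k}$. Consequently $M_{ab} = \sum_c C_{ac} C^*_{bc}$ vanishes whenever $|a-b| > 2r$, and by induction $(M^j)_{ab} = 0$ for $|a-b| > 2jr$. Second, the hypothesis $|C_{ab}|\le J$ together with the $\bigO{1}$ row-sparsity of $C$ gives $\norm{M}\le \norm{C}^2 = \bigO{J^2}$ via, e.g., a Gershgorin-type estimate.

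The main task is then to construct a polynomial $P_q$ of degree $q$ satisfying (i) $P_q(0) = 0$ and (ii) $|P_q(x) - \sqrt{x}|\le \epsilon_q$ uniformly on $[\lambda_{\mathrm{min}}, \norm{M}]$, with $\epsilon_q$ decaying geometrically in $q$. To this end I would first approximate $1/\sqrt{x}$ on $[\lambda_{\mathrm{min}}, \norm{M}]$ by a degree-$(q-1)$ truncation of the Taylor series of $(1-z)^{-1/2}$ about a suitable interior point, obtaining a polynomial $Q_{q-1}$ with $|Q_{q-1}(x) - 1/\sqrt{x}|\le C\,\norm{M}^{-1/2}(1 - \lambda_{\mathrm{min}}/\norm{M})^q$. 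Setting $P_q(x) := x\cdot Q_{q-1}(x)$ yields a degree-$q$ polynomial with $P_q(0) = 0$ and
\begin{align*}
  |P_q(x) - \sqrt{x}|
   = x\cdot\big|Q_{q-1}(x) - 1/\sqrt{x}\big|
   \le C\sqrt{\norm{M}}\,(1 - \lambda_{\mathrm{min}}/\norm{M})^q.
\end{align*}
The condition $P_q(0) = 0$ ensures that $P_q(M)$ also agrees with $M^{1/2}=0$ on $\ker M$, so $\norm{P_q(M) - M^{1/2}}\le \epsilon_q\le \bigO{J}\,e^{-c\,q\,\lambda_{\mathrm{min}}/J^2}$ after substituting $\norm{M} = \bigO{J^2}$.

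Combining the two ingredients: for any pair $(a,b)$ with $|a-b| > 2qr$, the bandedness of $M^j$ for $j\le q$ yields $P_q(M)_{ab} = 0$, hence $|M^{1/2}_{ab}| = |M^{1/2}_{ab} - P_q(M)_{ab}|\le \epsilon_q$. Choosing $q = \lfloor |a-b|/(2r)\rfloor$ then gives the desired bound $|M^{1/2}_{ab}|\le c_1 J\, e^{-c_2|a-b|\lambda_{\mathrm{min}}/J^2}$ with $c_1, c_2$ depending only on $k$ and the lattice geometry. The main technical subtlety I anticipate is the handling of a possibly nontrivial kernel of $M$: a naive polynomial approximation of $\sqrt{x}$ on $[\lambda_{\mathrm{min}}, \norm{M}]$ would generically have $p(0)\ne 0$ and thus fail to approximate $M^{1/2}$ on $\ker M$. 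The factorization $P_q(x) = x\cdot Q_{q-1}(x)$, which automatically enforces $P_q(0) = 0$, is the key technical move that sidesteps this difficulty.
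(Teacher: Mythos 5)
Your proposal is correct and follows essentially the same route as the paper's proof in Appendix A: bandedness of $C\cdot C^*$ from geometric locality, a degree-$q$ polynomial approximation of $\sqrt{z}$ obtained by truncating the binomial series for $1/\sqrt{z}$ and multiplying by $z$ (so that $P_q(0)=0$ handles the kernel), and the bound $\norm{C\cdot C^*}=\bigO{J^2}$ from row sparsity. The only cosmetic difference is that your intermediate uniform bound on $|Q_{q-1}(x)-1/\sqrt{x}|$ should carry a $1/x$ factor (as in the paper's remainder estimate), which is harmless since it cancels upon multiplication by $x$ in the final step.
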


\begin{proof}
  Set $A\EqDef C\cdot C^*$, and let $\lambda_{\mathrm{max}},
  \lambda_{\mathrm{min}}$ be the largest and smallest
  \emph{non-zero} eigenvalues of $A$. Let $|a-b|$ denote the lattice
  distance between the support of $P_a$ and $P_b$. Recall that
  $C_{ab}\neq 0$ only for $P_a, P_b$ that intersect a geometrically
  local region of $k$ sites, and therefore $C_{ab}\neq 0$ only for
  $|a-b|\le k$. Similarly, for any integer $\ell>0$,
  $C^\ell_{ab}\neq 0$ only when $|a-b|\le k\ell$, and therefore, as
  also $C^*_{ab}\neq 0$ only for $|a-b|\le k$, we conclude that
  $A^\ell_{ab} = (C\cdot C^*)^\ell_{ab}\neq 0$ only when $|a-b|\le
  2\ell k$.
  
  Following \cRef{Toeplitz}, we assume that there exists a family of
  polynomial approximations $\{P_m(z)\}$ to the function $\sqrt{z}$
  (indexed by their degree) with the following properties:
  \begin{enumerate}
    
    \item $|P_m(z)-\sqrt{z}|\le \tau_1 e^{-\tau_2 m}$ for every $z\in
      [\lambda_{\mathrm{min}}, \lambda_{\mathrm{max}}]$ for some
      constants $\tau_1,\tau_2$ that depend on
      $\lambda_{\mathrm{min}}, \lambda_{\mathrm{max}}$, but not on
      $m$.
      
    \item $P_m(0) = 0$.
  \end{enumerate}
  We will soon find such family, but for now let us discuss its consequences. 
  
  We first note that as the spectrum of $A$ is in $\{0\}\cup
  [\lambda_{\mathrm{min}}, \lambda_{\mathrm{max}}]$, then for any
  $m$, $\norm{\sqrt{A}-P_m(A)}\le \tau_1 e^{-\tau_2 m}$. Consider
  now a pair of indices $(a,b)$, and let $m$ be the largest integer
  for which $2k m < |a-b|$. Then by the discussion above
  $\big(P_m(A))_{ab} = 0$, and therefore by the triangle inequality,
  \begin{align*}
    |(\sqrt{A})_{ab}| &\le |(\sqrt{A})_{ab} - \big(P_m(A)\big)_{ab}| +
    |\big(P_m(A)\big)_{ab}| = |\big(\sqrt{A} - P_m(A)\big)_{ab}| \\
    &\le \norm{\sqrt{A} - P_m(A)} \le \tau_1e^{-m\tau_2}.
  \end{align*}
  However, as $m$ is the largest integer for which $2k m < |a-b|$, then
  $2k(m+1)\ge |a-b|$ and therefore $m\ge |a-b|/(2k) -1$, from which
  we deduce
  \begin{align} \label{eq:SAab1}
    |(\sqrt{A})_{ab}| \le \tau_1e^{\tau_2}
      \cdot e^{-|a-b|\cdot\tau_2/(2k)} .
  \end{align}
  
  Our next step is to show that the family of polynomials exists and
  find the dependence of $\tau_1, \tau_2$ on
  $\lambda_{\mathrm{min}}, \lambda_{\mathrm{max}}$. We will find an
  $m-1$ degree polynomial approximation to $1/\sqrt{z}$ in
  $[\lambda_{\mathrm{min}},\lambda_{\mathrm{max}}]$, and then
  multiply it by $z$. Following \cRef{ref:Anshu2022-subvol}, we use
  the expansion $1/\sqrt{1+z}=\sum_{j=0}^\infty \left( -\frac 1 4
  \right)^j \binom{2j}{j} z^j$ in the following manner:
  \begin{align}
  \label{eq:sqrt-z-expan}
    \frac{1}{\sqrt z} = \frac{1}{\sqrt{\lambda_{\mathrm{max}}}}\cdot
      \frac{1}{\sqrt{1+(z/\lambda_{\mathrm{max}}-1)}} 
      = \frac{1}{\sqrt{\lambda_{\mathrm{max}}}}\sum_{j=0}^\infty
    \left(-\frac{1}{4}\right)^j \cdot \binom{2j}{j}
      \cdot \left(\frac{z}{\lambda_{\mathrm{max}}}-1\right)^j .
  \end{align}
  For $z\in [\lambda_{\mathrm{min}},\lambda_{\mathrm{max}}]$, 
  $|z/\lambda_{\mathrm{max}}-1|<1$, and so the above series
  converges absolutely. Define $Q_m(z)$ to be the $m-1$ degree
  polynomial that is the sum of the terms in \eqref{eq:sqrt-z-expan}
  with degree $\le m-1$, and let $R_m(z)$ be the sum of all the
  higher order terms. Then $\frac{1}{\sqrt{z}} = Q_m(z) + R_m(z)$,
  and using the fact that $\binom{2j}{j}\leq 4^j$, we get
  \begin{align*}
    |R_m(z)|\leq \frac{1}{\sqrt{\lambda_{\mathrm{max}}}}
      \sum_{j=m}^\infty |z/\lambda_{\mathrm{max}}-1|^j
     = \frac{1}{\sqrt{\lambda_{\mathrm{max}}}} 
       \cdot \frac{|z/\lambda_{\mathrm{max}}-1|^m }{1-|z/\lambda_{\mathrm{max}}-1|}
    =\frac{\sqrt{\lambda_{\mathrm{max}}}}{z}
      \big(1-z/\lambda_{\mathrm{max}}\big)^m .
  \end{align*}
  Multiplying by $z$, and setting $P_m(z)\EqDef zQ_m(z)$, we find that
  \begin{align*}
    |\sqrt{z}-P_m(z)| \le \sqrt{\lambda_{\mathrm{max}}}
      \big(1-z/\lambda_{\mathrm{max}}\big)^m 
    \le \sqrt{\lambda_{\mathrm{max}}} e^{-mz/\lambda_{\mathrm{max}}}
    \le \sqrt{\lambda_{\mathrm{max}}} e^{-m\lambda_{\mathrm{min}}/\lambda_{\mathrm{max}}}.
  \end{align*}
  Therefore, $\tau_1=\sqrt{\lambda_{\mathrm{max}}},
  \tau_2=\lambda_{\mathrm{min}}/\lambda_{\mathrm{max}}\le 1$. Plugging to \eqref{eq:SAab1} yields
  \begin{align*}
    |(C\cdot C^*)^{1/2}_{ab}| \le e\cdot \sqrt{\lambda_{\mathrm{max}}} 
      \cdot e^{-|a-b|\lambda_{\mathrm{min}}/(2k\lambda_{\mathrm{max}})}.
  \end{align*}
  
  To conclude the proof, we need to show that
  $\lambda_{\mathrm{max}}\le \eta J^2$, where $\eta$ is a constant
  that is a function of the lattice geometry and $k$, independent of
  the system size. To do that, note that $\lambda_{\mathrm{max}} =
  \norm{C\cdot C^*} \le \norm{C}^2$. By definition, $C$ is a sparse
  matrix, since at every row $P_a$ there is only a constant number
  of $P_b$ that overlap the same geometrically $k$-local region.
  Call this constant $\sqrt{\eta}$, and note that it only depends on
  $k$ and the geometry of the lattice (i.e., dimension, etc.).
  Therefore, assuming that $|C_{ab}|\le J$ for all $a,b$, we deduce
  that for any normalized vector $v$, $\norm{Cv}\le J\sqrt{\eta}$,
  and so $\lambda_{\mathrm{max}}=\norm{C}^2\le \eta J^2$.
  
\end{proof}


\section{Classical-like Lindbladian: Technical Details}
\label{App:toy1}

This appendix is devoted to proving bullets~\ref{clm:bul3},
\ref{clm:bul4} in Claim~\ref{clm:local_toy_model}.

\subsection{Proving uniqueness of the steady state
(bullet \ref{clm:bul3})}
\label{sec:uniqueness_model1}

We prove uniqueness of the steady state in two steps. First, we show that the steady
state of the restriction of $\mcL$ to the diagonal elements is
unique. Then we show that all off-diagonal elements decay due to
the dissipative dynamics.  We start by writing the action of the
Lindbladian on a general diagonal element $\ketbra x x$ where
$x\in\{0,1\}^n$:
\begin{align*}
\mcL (\ketbra x x) = \sum_{k,\bi} \gamma_{k,b} \LLkl  (\ketbra x x)
= \sum_{x^\prime\sim x} f_{x\rightarrow x^\prime}\ketbra{x^\prime}{x^\prime} -g_x \ketbra x x,
\end{align*}
where the sum runs over all strings that can be achieved from $x$ by flipping one spin.
The corresponding weights are give by:
\begin{align*}
f_{x\rightarrow x^\prime} & =\gamma_{k,\bi_k}\cdot
\begin{cases}
e^{-\beta\omega_{k,b}/2}, & x_k=1,\\
e^{\beta\omega_{k,b}/2}, & x_k = 0,
\end{cases}\\
g_x &= \sum_{x^\prime \: ; \: x\rightarrow x^\prime}f_{x\rightarrow x^\prime},
\end{align*} 
where $\bi_k=\bi_k(x)=x_{k-1}+x_{k+1}$, and $k$ is defined by the spin that is flipped when $x\rightarrow x^\prime$.
Note that $g_x$ is responsible for ensuring that
$\Tr\Big(\mcL(\ketbra x x )\Big)=0$.

It is known that a Lindbladian, being a generator of a CPTP
semigroup, induces a (classical) continuous time Markov Process on
the diagonal elements $\{\ketbra{x}{x} \}$ which is defined
by\cc{Crooks}
\begin{align*}
  P_{xy}(t)\EqDef \Tr(\ketbra{x}{x} \cdot e^{\mcL t}(\ketbra{y}{y})) .
\end{align*}
To see this, Notice that 
$\Tr\big[\ketbra x x \cdot e^{\mcL t}(\ketbra y y)\big] \geq 0 $
due to $e^{\mcL t}$ being completely positive.
One can also check the sum of each column to see that
\begin{align*}
    \sum_{x}P_{xy}(t) = \sum_x \Tr\big[\ketbra x x 
      \cdot e^{\mcL t}(\ketbra y y)\big] = \Tr\big[e^{\mcL t}(\ketbra y y)\big] = 1,
\end{align*}
due to $e^{\mcL t}$ being trace preserving.
To deduce the uniqueness of a stationary distribution of $P_{xy}$,
we use the Perron-Frobenious Theorem and the connectivity of the Markov chain.
Specifically, we show that $P_{x y}(t)>0$ for any $x,y,t$, and uniqueness will follow as a consequence from Perron-Frobenious (see Theorem 1.1 in \cRef{Sen_book}).
\begin{claim}
  $P_{x y}(t)=\Tr\big[\ketbra x x e^{\mcL t}(\ketbra y y)\big]>0$ for any $t>0$.
\end{claim}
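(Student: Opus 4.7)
The plan is to reduce the claim to a standard fact about continuous-time Markov chains on a finite state space. The key observation is that $\mcL$ leaves the diagonal subspace $\mathcal{D}\EqDef\Span\{\ketbra{x}{x} : x\in\{0,1\}^n\}$ invariant. Indeed, every jump operator $L_{k,b}=\sigma^-_k\otimes\Pi^b_{k-1,k+1}\otimes\Id$ has $\{0,1\}$-entries in the computational basis and sends $\ket{y}$ either to $0$ or to the basis vector $\ket{y'}$ obtained by flipping bit $k$ from $1$ to $0$; consequently $L_{k,b}\ketbra{y}{y}L_{k,b}^\dagger$ lies in $\mathcal{D}$. The anticommutator terms involve $L_{k,b}^\dagger L_{k,b}=n_k\Pi^b_{k-1,k+1}$, which is diagonal in the computational basis, and the same analysis applies to the $L_{k,b}^\dagger$ terms (which perform $0\to 1$ flips). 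Hence $\mcL(\ketbra{y}{y})\in\mathcal{D}$, and by linearity $e^{\mcL t}$ preserves $\mathcal{D}$, giving $e^{\mcL t}(\ketbra{y}{y})=\sum_x P_{xy}(t)\ketbra{x}{x}$. The restriction of $\mcL$ to $\mathcal{D}$ is therefore a classical continuous-time Markov generator $Q$ on $\{0,1\}^n$, with $Q_{x'x}=f_{x\to x'}$ off-diagonal and $Q_{xx}=-g_x$ on the diagonal.

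Next I would verify irreducibility. Since $\gamma_{k,b}>0$ and the Boltzmann factors $e^{\pm\beta\omega_{k,b}/2}$ are strictly positive, $f_{x\to x'}>0$ whenever $x'$ is obtained from $x$ by flipping exactly one bit. The Hamming graph on $\{0,1\}^n$ is connected, so any two configurations can be joined by a path of length at most $n$ in which every edge carries a positive rate.

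Finally I would apply the standard fact that an irreducible continuous-time Markov chain on a finite state space satisfies $P_{xy}(t)>0$ for every $t>0$. A self-contained derivation: set $a\EqDef\max_x g_x$ and $B\EqDef aI+Q$, so that $B$ has only non-negative entries and $e^{Qt}=e^{-at}\sum_{m\geq 0}(Bt)^m/m!$. Choosing $m$ to be the length of a Hamming path from $y$ to $x$, the product of the positive rates along that path contributes a strictly positive term to $(B^m)_{xy}$, so $(e^{Qt})_{xy}=P_{xy}(t)>0$ for every $t>0$. The only subtle step is verifying the invariance of $\mathcal{D}$ under $\mcL$, which however falls out immediately from the $\{0,1\}$-structure of the jump operators; the remainder is a textbook irreducibility argument and I do not anticipate any serious obstacle.
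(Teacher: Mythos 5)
Your proposal is correct and follows essentially the same route as the paper: after restricting $\mcL$ to the diagonal subspace (which the paper does explicitly just before the claim), both arguments shift the generator by a multiple of the identity to obtain an entrywise non-negative matrix, expand the exponential as a power series, and extract a strictly positive contribution from a Hamming path connecting $y$ to $x$, with all remaining terms non-negative. Your version is slightly cleaner in writing $e^{Qt}=e^{-at}e^{(aI+Q)t}$, which keeps the $t$-dependence of the shift explicit, but the substance is identical.
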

\begin{proof}
Let $G=(V,E)$ be a graph where $V=\{0,1\}^n$ is the set of bit-strings and $(x,x^\prime)\in E$ if and only if they have Hamming distance $d_H(x,x^\prime)= 1$ ($x^\prime$ is obtained from $x$ by flipping one bit). Notice that this is also the connectivity graph of $\mcL$ with self edges omitted. 
For $x,y\in V$, define $\ell=(\ell_0,\ldots,\ell_{|\ell|+1})$ to be the shortest path in $G$ from $x$ to $y$ (i.e. $d_H(x,y)=|\ell |$).
Take $c>0$ such that $c>g_z$ for any $z\in V$. As a result,
$(\mcL t+c\Id)_{x y}\geq 0$ for any $x,y$, and in particular $(\mcL+c\Id)_{xy}>0$ if $(x,y)\in E$.
Let us expand  $e^{\mcL t}$
\begin{align}
e^{\mcL t} = e^{-c}e^{\mcL t+c\Id} = e^{-c}\sum_{k=0}^\infty \frac{1}{k!}(\mcL t+c\Id)^k .
\end{align}
Notice that $\left( (\mcL t+c\Id)^k\right)_{x y}\neq 0$ if and only if $d_H(x,y)\leq k$, therefore $P_{x y}$ receives its first non-zero contributions from the 
$k=|\ell |$th order in the expansion.
Moreover, this first contribution is positive, since
\begin{align*}
\braket{x |(\mcL t+c\Id)^k | y } & 
= \sum_{r_1,\ldots ,r_{k-1}}\braket{ x | \mcL t+c\Id |r_1 }
\braket{r_1 | \mcL t+c\Id |r_2 }\ldots \braket{r_{k-1}| \mcL t+c\Id |y}\\
& =   \braket{ x | \mcL t+c\Id |\ell_1 }
\braket{\ell_1 | \mcL t+c\Id |\ell_2 }\ldots \braket{\ell_{k-1}| \mcL t+c\Id |y}\\
&
+ \sum_{\{r_1,\ldots ,r_{k-1}\}\neq \ell}\braket{ x | \mcL t+c\Id |r_1 }
\braket{r_1 | \mcL t+c\Id |r_2 }\ldots \braket{r_{k-1}| \mcL t+c\Id |y}>0,
\end{align*}
where we abuse of notation by writing $\braket{x|\mathcal{T}|y}$
instead of $\Tr\big[\ketbra x x \cdot \mathcal{T}(\ketbra y y)\big]$,
exploiting the fact that the action of $\mathcal T=\mcL t+c\Id$ is
closed on the diagonal elements $\ketbra x x $.  Notice that the
first term after the equality sign is greater than zero using the path
$\ell$ and graph connectivity of the semigroup, and the rest of the
sum is greater than or equal to zero, hence the whole term is
greater than zero. The higher $k$ terms are greater or equal to zero
(since all matrix elements are), thus we conclude the claim.
\end{proof}
The only thing left to show now is that the off-diagonal elements decay.
Let us write the action of the Lindbladian on the off diagonal terms explicitly:
\begin{align*}
\mcL (\ketbra x y) = \sum_{x^\prime, y^\prime} f_{x y\rightarrow x^\prime y^\prime}
\ketbra{x^\prime}{y^\prime}- \left(\frac{1}{2}g_x + \frac{1}{2}g_y\right) \ketbra x y,
\end{align*}
where $f_{x y\rightarrow x^\prime y^\prime}$ is non-zero only if there are $k\in \{1,\ldots,n\}$ and $\bi\in\{0,1,2\}$  such that $L_{k,\bi}\ketbra x y L_{k,b} ^\dagger \neq 0$ or $L_{k,\bi}^\dagger\ketbra x y L_{k,b}  \neq 0$
(that is, if $x_k=y_k$, $x_{k-1}+x_{k+1}=y_{k-1}+y_{k+1}$), and then it would give $f_{x y\rightarrow x^\prime y^\prime}=f_{x\rightarrow x^\prime}=f_{y\rightarrow y^\prime}$.
The observation is that 
\begin{align*}
    \sum_{(x^\prime, y^\prime)} f_{x y\rightarrow x^\prime y^\prime}<\frac{1}{2}g_x+ \frac{1}{2}g_y,
\end{align*}
namely, looking at the representing matrix $\mcL |_{\{\ketbra m n\}_{m\neq n}}$, the sum of the off-diagonal elements in each column is strictly smaller than the absolute value of the element on the diagonal.
We claim that the eigenvalues of such matrix must be strictly negative.
\begin{claim}
Let $A_{ij}$ be a matrix with non-negative off diagonal elements and strictly negative diagonal elements, such that $\sum_{j \; ; \; j\neq i} A_{ij}<-A_{ii}$ for any $i$. Then $\mathrm{Spec}(A)\subset (-\infty,0)$.
\end{claim}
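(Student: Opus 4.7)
The plan is to apply the Gershgorin circle theorem directly to $A$. Because every off-diagonal entry is non-negative and every diagonal entry is strictly negative, the Gershgorin radius associated with row $i$ reads $R_i = \sum_{j\neq i}|A_{ij}| = \sum_{j\neq i}A_{ij}$, and the hypothesis $\sum_{j\neq i}A_{ij} < -A_{ii}$ rewrites as $R_i < |A_{ii}|$. For any $z$ in the Gershgorin disc $D_i = \{z\in\mathbb{C} : |z-A_{ii}|\le R_i\}$ one then has
\[
\mathrm{Re}(z) \;\le\; A_{ii} + R_i \;<\; A_{ii} + (-A_{ii}) \;=\; 0 ,
\]
so each $D_i$ sits strictly inside the open left half-plane. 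Since every eigenvalue of $A$ lies in some $D_i$, every eigenvalue has strictly negative real part. This is already what the uniqueness argument in \autoref{sec:uniqueness_model1} actually uses, since it immediately implies exponential decay of the off-diagonal matrix elements under $e^{\mcL t}$.

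To justify the literal claim $\mathrm{Spec}(A)\subset (-\infty,0)$, i.e.\ real and negative, I would additionally appeal to the QDB context in which $A$ arises. By Claim~\ref{clm:local_toy_model}, bullet~\ref{clm:bul2}, the Lindbladian $\mcL$ is detailed-balanced, and by \autoref{thm:local} it is similar to the self-adjoint super-Hamiltonian $\mcH$. Hence $\mathrm{Spec}(\mcL)$ is real, and so is the spectrum of its restriction $A$ to the invariant subspace spanned by $\{\ketbra{m}{n}\}_{m\neq n}$. Combined with the Gershgorin bound $\mathrm{Re}(\lambda)<0$, this gives $\mathrm{Spec}(A)\subset(-\infty,0)$.

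The proof is essentially a one-line Gershgorin argument and I do not anticipate a substantive obstacle. The only subtle point — which is more about the statement than the proof — is that the algebraic hypotheses on $A$ alone do not force the spectrum to be real (for instance, $-2\Id + P$ with $P$ a cyclic permutation on three elements satisfies the hypotheses yet has complex eigenvalues). So either one reads the claim as $\mathrm{Re}(\mathrm{Spec}(A))\subset(-\infty,0)$, in which case Gershgorin alone closes the argument, or one invokes the QDB/similarity structure above to secure reality and obtain the literal statement.
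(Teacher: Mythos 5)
Your proof is correct, and it takes a genuinely different --- and more elementary --- route than the paper. The paper writes $e^{A}=e^{-c}e^{A+c\Id}$ with $c$ chosen so that $A+c\Id$ is entrywise non-negative, invokes the Perron--Frobenius theorem to extract a top eigenvalue of $e^{A}$ with a non-negative eigenvector $v$, and then evaluates the eigenvalue equation at the index where $v$ is maximal to show the corresponding eigenvalue $\lambda$ of $A$ is negative; the rest of the spectrum is controlled because the moduli of the eigenvalues of $e^{A}$ are bounded by its spectral radius $e^{\lambda}$. Your Gershgorin argument reaches the same conclusion, $\Re(\mu)<0$ for every $\mu\in\mathrm{Spec}(A)$, in one line, with no exponentiation and no Perron--Frobenius. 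Your second observation is also on point: neither your Gershgorin step nor the paper's Perron--Frobenius step establishes that the spectrum is \emph{real}, and your example $-2\Id+P$ with $P$ a $3$-cycle shows the algebraic hypotheses alone cannot force it; the paper's closing remark that the remaining spectrum ``must be below $\lambda$'' glosses over exactly this point (what it actually yields is that the real parts lie below $\lambda$). For the application only $\Re(\mu)<0$ is needed, since that already gives decay of the off-diagonal block under $e^{\mcL t}$; but if one insists on the literal statement, your appeal to the similarity of $\mcL$ to the self-adjoint $\mcH$, together with the fact that the off-diagonal matrix units span an invariant subspace so that $\mathrm{Spec}(A)\subseteq\mathrm{Spec}(\mcL)\subset\BBR$, closes the gap cleanly --- arguably more carefully than the paper does.
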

\begin{proof}
Consider the exponentiation of $A$. This is a semi-stochastic matrix (its matrix-elements are non-negative), since
\begin{align*}
e^{A} = e^{-c}e^{A+c\Id} =e^{-c}\sum_{k=0}^\infty \frac{1}{k!}\left(A+c\Id\right)^k
\end{align*}
where $c>0$ is chosen such that $A_{ii}+c>0$ for each $i$.
Therefore, $(A+c\Id)_{i,j}$, and correspondingly $(e^A)_{ij}$, are greater or equal to zero for any $i,j$.
Using the Perron-Frobenious Theorem, $e^{A}$ has a maximal eigenvalue $r>0$ with an eigenvector $v$ in which $v_i\geq 0$ for any $i$.
We deduce that $v$ is also an eigenvector of $A$ with largest eigenvalue, due to the monotonicity of the exponent:
\begin{align}
\label{eq:matrix_eq}
\sum_{j}A_{ij}v_j=\lambda v_i \tab \forall i.
\end{align}
We finally show that $\lambda$ is negative due to the assumption in the claim, and the remaining spectrum will be negative as well (since it  must be below $\lambda$).
Indeed, let $i_0=\arg \max\{v_i\}$, and take
\begin{align*}
\lambda v_{i_0}=\sum_j A_{i_0j}v_j=A_{i_0 i_0} v_{i_0} +\sum_{j\neq i_0} A_{i_0 j}v_j \leq A_{i_0 i_0} v_{i_0} + v_{i_0}\sum_{j\neq i_0} A_{i_0 j}< 0,
\end{align*}
where in the last inequality we used $\sum_{j\neq i} A_{ij}<-A_{ii}$.
\end{proof}

\subsection{Proving that $\mathcal{L}$ is gapped (bullet  \ref{clm:bul4})}
\label{sec:gap_model1}

To show a gap in the system, we import the \emph{finite-size
criteria} for frustration-free Hamiltonians, originally introduced
by Knabe\cc{knabe} and used, for example, in 
Refs.~\cc{Lemm,jauslin2021random}. First we
introduce the method for generic systems defined on finite
dimensional Hilbert spaces, and then use it to show a gap for the
system under consideration. We change our notation correspondingly,
e.g., we first discuss projectors on a generic Hilbert space and
denote them by $P_e$, and in the next paragraph we refer to
projectors on operators space (super-projectors) which we denote by
$\mcP_e$.

\begin{claim}[Finite size criteria]
\label{clm:finite_size_cri}
Let $G=(V,E)$ be a regular graph of degree $\delta$. Assign a local Hilbert-space of dimension $d$ to each vertex. Let $H$
be a nearest-neighbors frustration free Hamiltonian defined on the joint Hilbert space
of the vertices of the form:
\begin{align} 
\label{eq:sumofproj}
H=\sum_{e\in E} P_{e},
\end{align}
where each $P_e$ is a projector defined on the bond $e$.
We define the local gap to be $\gamma_{loc}\EqDef \min_{e\cap e' \neq \emptyset} \left( \gap\left[ P_e+P_{e'}\right]\right)$.
Then
\begin{align} \label{eq:Knabe's_gap}
    \gap{(H)}\geq 2(\delta-1) \left(\gamma_{loc}-\left(1-1/2(\delta-1) \right)\right).
\end{align}
\end{claim}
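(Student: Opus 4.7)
The statement is Knabe's finite-size criterion, and I would follow the classical two-step route: first show a quadratic-in-$H$ inequality $H^{2}\ge \lambda H$ with $\lambda$ equal to the claimed bound, and then deduce $\gap(H)\ge\lambda$ from frustration-freeness. The reduction from the operator inequality to the gap is the easy half: since $H$ is positive semidefinite and $H^{2}\ge\lambda H$ forces every eigenvalue $\mu$ of $H$ to satisfy $\mu^{2}\ge\lambda\mu$, every eigenvalue is either $0$ or at least $\lambda$, so $\gap(H)\ge\lambda$.

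To produce the operator inequality, I would expand
\begin{equation*}
  H^{2} \;=\; \sum_{e\in E} P_{e}^{2} + \sum_{e\ne e'} P_{e}P_{e'}
  \;=\; H + \sum_{e\cap e'=\emptyset,\,e\ne e'} P_{e}P_{e'}
        + \sum_{e\cap e'\ne\emptyset,\,e\ne e'} P_{e}P_{e'},
\end{equation*}
using $P_{e}^{2}=P_{e}$. Pairs of edges with disjoint supports give commuting projectors whose product is itself a projector, hence positive semidefinite, and can simply be dropped from a lower bound. The work is therefore all in controlling the overlapping pairs.

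For the overlapping terms I would invoke the local gap hypothesis in the form $(P_{e}+P_{e'})^{2}\ge \gamma_{loc}(P_{e}+P_{e'})$ for every pair $\{e,e'\}$ with $e\cap e'\ne\emptyset$, which holds because on the common kernel both sides vanish and on its orthogonal complement $P_{e}+P_{e'}\ge\gamma_{loc}\Id$. Expanding the square and rearranging yields
\begin{equation*}
  P_{e}P_{e'}+P_{e'}P_{e}\;\ge\;(\gamma_{loc}-1)\,(P_{e}+P_{e'}).
\end{equation*}
Summing this over all unordered pairs of overlapping edges, and using that in a $\delta$-regular graph each edge $e$ is adjacent (shares a vertex) with exactly $2(\delta-1)$ other edges, gives $\sum_{\{e,e'\}:e\cap e'\ne\emptyset}(P_{e}+P_{e'})=2(\delta-1)H$ on the right-hand side and $\sum_{e\ne e',\,e\cap e'\ne\emptyset}P_{e}P_{e'}$ on the left. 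Putting everything together,
\begin{equation*}
  H^{2}\;\ge\;H+2(\delta-1)(\gamma_{loc}-1)H
  \;=\;2(\delta-1)\Bigl(\gamma_{loc}-\bigl(1-\tfrac{1}{2(\delta-1)}\bigr)\Bigr)\,H,
\end{equation*}
which is precisely the bound in \eqref{eq:Knabe's_gap} once combined with the eigenvalue argument above.

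The main subtlety I would watch is the combinatorial accounting: double-counting ordered vs.\ unordered pairs, and making sure the ``each edge has $2(\delta-1)$ overlapping neighbours'' count is applied with the right factor when converting $\sum_{\{e,e'\}}(P_{e}+P_{e'})$ back into a multiple of $H=\sum_{e}P_{e}$. Everything else, including the final passage from $H^{2}\ge\lambda H$ to $\gap(H)\ge\lambda$, is automatic once frustration-freeness is used to locate the kernel.
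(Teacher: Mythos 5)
Your proposal is correct and follows essentially the same route as the paper: expand $H^{2}$, drop the positive semidefinite contribution of non-overlapping pairs, bound the overlapping cross-terms via $(P_{e}+P_{e'})^{2}\ge\gamma_{loc}(P_{e}+P_{e'})$ with the $2(\delta-1)$ adjacency count, and conclude from $H^{2}\ge\lambda H$ together with frustration-freeness. The only cosmetic difference is that you rearrange the local inequality pairwise before summing, whereas the paper sums first and then rearranges; the combinatorics and the final bound are identical.
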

\begin{proof}
Note that due to frustration-freeness and spectral decomposition,
an inequality of the form 
\begin{align} 
    \label{eq:inequality_gap_c}
    H^2\geq c\cdot H
\end{align} 
immediately implies that $\gap(H)\geq c$.
This is the principle that lies at heart of the method.
To achieve a bound of the form (\ref{eq:inequality_gap_c}), Knabe used the Hamiltonian's local structure and the local spectral gap, as explained in the following.
We start by squaring $H$ and rearranging the double sum:
\begin{align}
\label{eq:squaringH}
H^2= \left( \sum_e P_e\right)^2 = \sum_{e} P_e ^2
+ \sum_{e\cap e'\neq \emptyset} (P_e\cdot P_{e'} + P_{e'}\cdot P_{e})
+ 2\sum_{e\cap e'=\emptyset} P_e\cdot P_{e'}.
\end{align}
Since we are looking for an inequality of the form $H^2\geq c H$, we can ignore the rightmost term in (\ref{eq:squaringH}), as it is positive semi-definite, being a sum of products of non-overlapping projectors.
Using $P_e^2=P_e$, one should notice that the first term in \Eq{eq:squaringH} reduces to $H$ itself.
Hence we obtained 
$H^2\geq H+Q $, where  $Q\EqDef \sum_{e\cap e'\neq \emptyset}P_e\cdot P_{e'}+P_{e'}\cdot P_{e}$.
To achieve the desired bound, we write down a lower bound for $Q$ using $H$.
To do so, let us consider
\begin{align*}
\sum_{e\cap e'\neq \emptyset}\left(P_e+ P_{e'}\right)^2.    
\end{align*}
Expanding the square, we see that the expression equals 
\begin{align*}
    \sum_{e\cap e'\neq \emptyset}\left(P_e+ P_{e'}+P_e P_{e'}+P_{e'} P_{e}\right) = 2(\delta-1) \cdot H+Q,
\end{align*}
where the $2(\delta-1)$ factor stems from the fact that for a simple regular graph, each edge intersects $2(\delta-1)$ distinct edges ($\delta-1$ on each vertex).
On the other hand, due to frustration freeness, we have $\left(P_e+ P_{e'}\right)^2\geq \gap\left[P_e+ P_{e'}\right]\left(P_e+ P_{e'}\right)$, and obtain
\begin{align*}
2(\delta-1) \cdot H + Q \geq 2(\delta-1)\gamma_{loc}  \cdot H.
\end{align*}
Plugging this into \Eq{eq:squaringH} gives
\begin{align} \label{eq:finite_size_criteria}
H^2 \geq H+Q \geq 2(\delta-1) \cdot H\left(\gamma_{loc}-\left(1-\frac{1}{2(\delta-1 )}\right)\right).
\end{align}
Using inequality \eqref{eq:inequality_gap_c} ($H^2\geq c \cdot H$) for $c=2(\delta-1)\left(\gamma_{loc}-\left(1-\frac{1}{2(\delta-1)}\right)\right)$, we see that
\begin{align*}
    \gap(H) \geq 2(\delta-1)\left(\gamma_{loc}-\left(1-\frac{1}{2(\delta-1)}\right)\right),
\end{align*}
meaning that if $\gamma_{loc}$ is greater than $1-\frac{1}{2(\delta-1)}$, a global gap in $H$ confirmed.
\end{proof}
We remark that the method can be generalized to larger sub-regions and to open boundary conditions\cc{knabe,lemm2020finite}.

\subsubsection*{Application to $\mcH$}
Now we apply the method to the system under consideration to show a global gap in $\mcL$, thus completing bullet~\ref{clm:bul4} of Claim~\ref{clm:local_toy_model}.
First, we reduce the super-Hamiltonian to the sum-of-local-projectors Hamiltonian presented in \Eq{eq:sumofproj} (up to a multiplicative factor). Then, we numerically verify the statement that $\gamma_{loc}>1-\frac{1}{2(\delta-1)}=\frac{1}{2}$ for the 1D ring, which will assure the global gap by \Eq{eq:finite_size_criteria}.

Recall the parent Hamiltonian, $\mcH= \sum_{k,\bi} \gamma_{k,b}\hkl$,
where $\hkl=-\Gamma_s^{-1/2}\circ \LLkl \circ \Gamma_s^{1/2}$ and $\gkl\geq \alpha >0$.
Also recall that $\mcH$ is frustration-free, where
$\hkl\geq 0$ has a ground state $\sqrt \sigma$ with energy $0$.
Then $\mcH$ satisfies
\begin{align*}
\mcH & \geq \alpha \sum_{k,\bi}\hkl
= \alpha \sum_{k}\mch_{k},\\
\mathrm{where}  &\tab \mch_{k}\EqDef \sum_{\bi=0,1,2} \hkl.
\end{align*}
Let $\mcP_k$ denote the local projector onto the positive spectrum of $\mch_k$ such that $\mch_k\geq g_k \mcP_k$ where $g_k \EqDef \gap(\mch_k)$.
By simply calculating the spectrum of $\mch_k$ we see that $g_k =\min_{\bi\in\{0,1,2\}} \{ e^{\beta\omega_{k,b}} \}=e^{-\beta(\epsilon_k-\mu+2u)}$, which is of order $1$ provided that $\epsilon_k-\mu, u = \bigO{1}$.
We thus conclude that 
\begin{align*}
\mcH\geq \alpha g_{\min} \sum_k \mcP_k,
\end{align*}
where we defined $g_{\min} \EqDef\min_{k} g_k$.
Moreover, since both sides have a common ground space which is spannded by $\sqrt \sigma$ with ground energy $0$, using Claim~\ref{clm:finite_size_cri} we have
\begin{align*}
    \gap(\mcH) \geq \alpha g_{\min} \gap\left(\sum_k \mcP_k\right) \geq 2\alpha g_{\min} (\gamma_{loc}-1/2),
\end{align*}
where $\gamma_{loc}=\min_k\gap(\mcP_k+\mcP_{k+1})$.
To show a gap in $\mcH$, we assure that $\gap(\mcP_k+\mcP_{k+1})> 1/2$.
Instead of giving a tedious derivation of the local gap,
we numerically verify it in the following table for $\mu=0$ and different choices of\footnote{In the first line of the table, the random energies $\{ \epsilon_k \}$ were picked from the uniform distribution on $[0,1]$, and the local gap was averaged over 100 instances.} $\{\epsilon_k\}$ and $u$.

\begin{center}
\begin{tabular}{ |p{3cm}||p{2cm}|p{2cm}|p{2cm}|}
 \hline
 \multicolumn{4}{|c|}{Local Gap $\gamma_{loc}$}\\
\hline
Model & u=-1 & u=0.5 & u=2 \\
\hline
Random $\{ \epsilon_k \}$ & 0.756 & 0.887 & 0.678 \\
Const $\epsilon_k=1$ & 0.769 & 0.913 & 0.778 \\
Const $\epsilon_k=0.5$ &  0.755 & 0.891 & 0.698 \\
$\epsilon_1=1$, $\epsilon_2=10$ & 0.993 & 0.998 & 0.997 \\
\hline
\end{tabular}
\end{center}

\vspace{2mm}
We remark that the derivation of the finite size criteria considers 2-local interaction (graph interactions), and the model under consideration here is 3-local.
However, the derivation and results are still valid. This is due to terms that overlap on a single qubit (e.g. $\mcH_k$ and $\mcH_{k+2}$) commute with each other, as they act trivially on their shared qubits.
Therefore, when squaring $\mcH$ in \Eq{eq:squaringH}, their product can be absorbed in the positive "leftover" (the last term in \eqref{eq:squaringH}).

\section{Quadratic fermionic Lindbladians:  Technical Details }
\label{App:toy2}

This appendix is devoted to proving  \Lem{lem:gap_in_gamma}, which demonstrates the exponential decay in
the coefficients of the super-Hamiltonian with respect to the distance
between the real space fermionic operators.

First, recall the settings of the lemma.
To relate the model  to the subject of our work, which discusses local Lindbladians that satisfy quantum detailed-balance,
we focus on commuting $\gin,\gout$ that connect nearest neighbours (i.e. tridiagonal matrices).
For simplicity, we consider translation invariant systems, such that
\begin{align*}
\gin =
\begin{pmatrix}
\gin_0 & \gin_1 & 0 & & \hdots & 0 & \gin_1 \\
\gin_1 & \gin_0 & \gin_1 & & & & 0 \\
0 & \gin_1  &\gin_0   & \\
\vdots & &  &  \ddots &  & & \vdots \\
& & &  & \gin_0 & \gin_1 & 0\\
0 & & & &\gin_1 & \gin_0 & \gin_1 \\
\gin_1 & 0 & & \hdots  & 0 & \gin_1 & \gin_0 
\end{pmatrix}, \quad
\gout = \begin{pmatrix}
\gout_0 & \gout_1 & 0 & & \hdots & 0 & \gout_1 \\
\gout_1 & \gout_0 & \gout_1 & & & & 0 \\
0 & \gout_1  &\gout_0   & \\
\vdots & &  &  \ddots &  & & \vdots \\
& & &  & \gout_0 & \gout_1 & 0\\
0 & & & &\gout_1 & \gout_0 & \gout_1 \\
\gout_1 & 0 & & \hdots  & 0 & \gout_1 & \gout_0 
\end{pmatrix}.
\end{align*}
The canonical jump operators are given by a discrete Fourier transform of the original $a_k$:
\begin{align*}
    c_k = \frac{1}{\sqrt{n}}\sum_j e^{-\frac{2\pi i}{n}k  j}a_j, \quad k=1\cdots n .
\end{align*} 
These are non-local operators, but a linear combination of such.

To prove the lemma, we want to show that the matrix $\sqrt{\gin}\cdot \sqrt{\gout}$ has elements that decay exponentially with the distance from the main diagonal.
First, we show that $\sqrt{ \gin}$ and $\sqrt{ \gout}$ individually satisfy
\begin{align*}
    (\sqrt {\gin})_{i j} = O\left(\frac{\sqrt{\gin_0}}{1-\frac{2\gin_1}{\gin_0}}\left(\frac{2\gin_1}{\gin_0}\right)^{d(i,j)}\right), \tab
    (\sqrt {\gout})_{i j} = 
    O\left(\frac{\sqrt{\gout_0}}{1-\frac{2\gout_1}{\gout_0}}
    \left(\frac{2\gout_1}{\gout_0}\right)^{d(i,j)}\right).
\end{align*}
Then, we use this to establish an upper-bound on $\sqrt{\gin\cdot\gout}_{i j}$.

To show a decay in $\gin$ (the treatment of $\gout$ is identical),
we notice that
\begin{align*}
\gin = \begin{pmatrix}
\gin_0 &  & & & \\
 & \gin_0 &   & & \\
& & \ddots & & \\
& & &  & \gin_0 &  \\
& & & &  & \gin_0 
\end{pmatrix}
+
 \begin{pmatrix}
0 & \gin_1 &  & & \gin_1\\
\gin_1 & 0& \gin_1  & \\
 & & \ddots & & \\
 & &  \gin_1 & 0 & \gin_1 \\
\gin_1 & & & \gin_1 & 0
\end{pmatrix} \EqDef \gin_0 \Id+ 2\gin_1 W .
\end{align*}
A key observation is that $W$ is the (symmetric) random walk matrix $W_{ij} = \frac{1}{2}\delta_{j,i\pm 1}$.
Plugging the Taylor expansion $\sqrt{1+x}=\sum_{k\geq 0} \alpha_k x^k$  produces
\begin{align*}
\sqrt{\gin}= \sqrt{\gin_0 \Id+ 2 \gin_1 W} =
\sqrt{\gin_0} \sqrt{\Id+ \frac{2
\gin_1}{\gin_0} W} = \sqrt{\gin_0} \sum_{k\geq 0} \alpha_k \left(\frac{2\gin_1}{\gin_0}\right)^k W^k .
\end{align*}

Note that $(W^k)_{ij}$ is the transition probability for $i\rightarrow j$ after $k$ steps of the walk, which is less than one and non-zero only when $k\geq \min\{|i-j|,n-|i-j|\}$;  Assume without loss of generality that the minimum is attained at $|i-j|$. Moreover, the sequence $\{\alpha_k\}$ is smaller than one in absolute value (polynomially decaying to be precise).
Then the matrix entry $(\sqrt{\gin})_{i,j}$ satisfies
\begin{align}
\label{eq:Bij}
(\sqrt{\gin})_{i,j} = \sqrt{\gin_0} \sum_{k\geq 0} \alpha_k \left(\frac{2\gin_1}{\gin_0}\right)^n (W^k)_{ij} \leq  \sqrt{\gin_0} \sum_{k\geq |i-j|} \left(\frac{2\gin_1}{\gin_0}\right)^k =  
\left(\frac{2\gin_1}{\gin_0}\right)^{|i-j|}
\left(\frac{\sqrt{\gin_0}}{1-\frac{2\gin_1}{\gin_0}} \right).
\end{align}
This shows that $(\sqrt{\gin})_{ij}=\bigO{e^{-|i-j|}}$ provided that $\gin$ is gapped and $\norm{\gin}=\bigO{1}$. The analogous argument  for $\gout$ is straightforward.
Note that for $\gin_0 = 2\gin_1$,  the gap in $\gin$ closes, and  a polynomial decay in $\sqrt{\gin}_{ij}$ appears instead (see \autoref{fig:graphs}).
This is because for large $k$'s,
the probability for the random walk yields $(W^k)_{ij}=
\frac{1}{2^{k+1}}{\binom{k}{\frac{|i-j|+k}{2}}}=\bigO{k^{-1/2}}$ (see page 291 in \cRef{Papoulis_book}), and also we get that\footnote{This can be derived using the expression $\alpha_n = \frac{(2n-3)!}{2^{2n-2} n! (n-2)!}$ and applying the Stirling's formula for large $n$.} $\alpha_k = \bigO{\frac{1}{k}}$.

\begin{figure}
    \centering
    \includegraphics{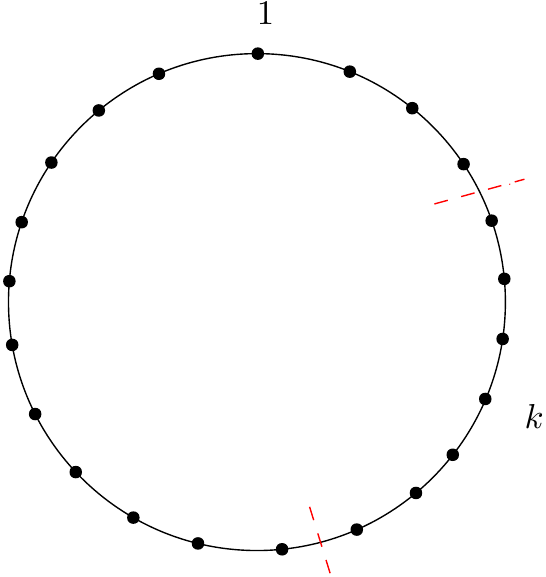}
    \caption{The neighborhoods for the partial sums in the
      calculation for $B_{1 k}$.}
    \label{fig:circle_sums}
\end{figure}

It is now left to show that the product $B\EqDef \sqrt{\gin\cdot\gout}$ has exponentially decaying matrix elements as well.
We show this for first row elements $B_{1 k}$ for convenience, as it automatically generalized to any $B_{i j}$ due to translation invariance. 
Let us write down $B_{1 k}$, explicitly and use the former bounds for $\sqrt{\gin}$ and $\sqrt{\gout}$. Suppose that $k\leq n/2$ as before; then
\begin{align*}
    |B_{1 k}| & \leq
    \sum_m |\sqrt{\gin}_{1 m} \sqrt{\gout}_{m k}|\\
    & \leq 
    \sum_m \frac{\sqrt{\gin_0}}{1-\frac{2\gin_1}{\gin_0}}
    \frac{\sqrt{\gout_0}}{1-\frac{2\gout_1}{\gout_0}}
    \left(\frac{2\gin_1}{\gin_0}\right)^{d(m,1)}
    \left(\frac{2\gout_1}{\gout_0}\right)^{d(k,m)}\\
    & = \frac{\sqrt{\gin_0}}{1-\frac{2\gin_1}{\gin_0}}
    \frac{\sqrt{\gout_0}}{1-\frac{2\gout_1}{\gout_0}} \left[
    \sum_{m=k/2+1}^{k+k/2}\left(\frac{2\gin_1}{\gin_0}\right)^{d(m,1)}
    \left(\frac{2\gout_1}{\gout_0}\right)^{d(k,m)} +
    \sum_{m=3 k/2+1}^{k/2}
    \left(\frac{2\gin_1}{\gin_0}\right)^{d(m,1)}
    \left(\frac{2\gout_1}{\gout_0}\right)^{d(k,m)}\right],
\end{align*}
where the last sum is performed periodically, i.e. $m =
3k/2+1,3k/2+2,\ldots , n-1, n, 1, 2,\ldots , k/2$.  All we did is to
separate the sum to a $k/2$ length neighbourhood of the $k$th site
and the complement (see \autoref{fig:circle_sums}). At the first
sum, the decay is dominated by
$\left(\frac{2\gin_1}{\gin_0}\right)^{d(m,1)}
=\left(\frac{2\gin_1}{\gin_0}\right)^{m-1} \leq
\left(\frac{2\gin_1}{\gin_0}\right)^{k/2}$, and at the second
sum the $\left(\frac{2\gout_1}{\gout_0}\right)^{d(k,m)}\leq
\left(\frac{2\gout_1}{\gout_0}\right)^{k/2}$ term dominates.
Therefore we write
\begin{align*}
     |B_{1 k}| \leq  \frac{\sqrt{\gin_0}}{1-\frac{2\gin_1}{\gin_0}}
    \frac{\sqrt{\gout_0}}{1-\frac{2\gout_1}{\gout_0}} \left[
    \sum_{m=k/2+1}^{k+k/2}\left(\frac{2\gin_1}{\gin_0}\right)^{m-1}
     +
    \sum_{m=3 k/2+1}^{k/2}
    \left(\frac{2\gout_1}{\gout_0}\right)^{d(k,m)}\right],
\end{align*}
where we used the fact that both $\frac{2\gin_1}{\gin_0}$ and $\frac{2\gout_1}{\gout_0}$ are smaller than $1$.
Since both sums are  geometric, they can be bounded to achieve the desired bound
\begin{align*}
     |B_{1 k}| \leq  2\frac{\sqrt{\gin_0}}{\left(1-\frac{2\gin_1}{\gin_0}\right)^2}
    \frac{\sqrt{\gout_0}}{\left(1-\frac{2\gout_1}{\gout_0}\right)^2} \left[
    \left(\frac{2\gin_1}{\gin_0}\right)^{k/2}
     +
     \left(\frac{2\gout_1}{\gout_0}\right)^{k/2}\right] .
\end{align*}

\end{document}